\definecolor{darkgreen}{rgb}{0.0,0.7,0.0}
\newenvironment{tw}{\noindent\color{darkgreen} TW:}{}
\newenvironment{twwww}{\noindent\color{red} TW:}{}
\newenvironment{mk}{\noindent\color{blue} MK:} {}
\newcommand{\refthm}[1]{Theorem~\ref{#1}\xspace}
\newcommand{\reflem}[1]{Lemma~\ref{#1}\xspace}
\newcommand{\refprop}[1]{Proposition~\ref{#1}\xspace}
\newcommand{\set}[2]{\left\{#1\mathrel{\left|\vphantom{#1}\vphantom{#2}\right.}#2\right\}}
\newcommand{\oneset}[1]{\left\{\mathinner{#1}\right\}}
\newcommand{\smallset}[1]{\left\{#1\right\}}
\newcommand{\geneq}[1]{\dbracket{\hspace*{2pt} #1 \hspace*{2pt}}}
\let\iff=\undefined
\let\implies=\undefined
\newcommand{\iff}       {\mathrel{\Leftrightarrow}}
\newcommand{\implies}   {\text{$\;\Rightarrow\;$}}
\newcommand{\coloneq}   {\mathrel{:=}}
\newcommand{\abs}[1]{\left|\mathinner{#1}\right|}
\newcommand{\dbracket}[1]{\left\llbracket\mathinner{#1}\right\rrbracket} 
\newcommand{\B}{\mathbb{B}}
\newcommand{\logicfont}[1]{\mathrm{#1}}
\newcommand{\FO}{\logicfont{FO}}
\newcommand{\MOD}{\mathrm{MOD}}
\newcommand{\LEN}{\mathrm{LEN}}
\newcommand{\ltrue}      {\ensuremath{\mathord{\top}}\xspace}
\newcommand{\lfalse}     {\ensuremath{\mathord{\bot}}\xspace}
\newcommand{\limplies}   {\mathrel{\rightarrow}}
\newcommand{\liff}       {\mathrel{\leftrightarrow}}
\newcommand{\biglor}     {\mathop{\bigvee}}
\newcommand{\bigland}    {\mathop{\bigwedge}}
\renewcommand{\lor}{\mathrel{\vee}}
\renewcommand{\land}{\mathrel{\wedge}}
\newcommand{\Synt}{\mathrm{Synt}}
\newcommand{\stab}{(\hspace*{-0.4pt}s\hspace*{-0.4pt})}
\newcommand{\greenfont}[1] {\ensuremath{\mathcal{#1}}}
\newcommand{\greenR} {\greenfont{R}\xspace}
\newcommand{\Requiv} {\mathrel{\greenR}}
\newcommand{\Req}    {\Requiv}
\newcommand{\Rleq}   {\mathrel{\leq_\greenR}}
\newcommand{\Rl}     {\mathrel{<_\greenR}}
\newcommand{\greenRs} {{\greenfont{R}^{\stab}}\xspace}
\newcommand{\Requivs} {\mathrel{\greenRs}}
\newcommand{\Reqs}    {\Requivs}
\newcommand{\Rleqs}   {\mathrel{\leq_\greenRs}}
\newcommand{\greenL} {\greenfont{L}\xspace}
\newcommand{\Lequiv} {\mathrel{\greenL}}
\newcommand{\Leq}    {\Lequiv}
\newcommand{\Lleq}   {\mathrel{\leq_\greenL}}
\newcommand{\Ll}     {\mathrel{<_\greenL}}
\newcommand{\greenLs} {{\greenfont{L}^{\stab}}\xspace}
\newcommand{\Lequivs} {\mathrel{\greenLs}}
\newcommand{\Leqs}    {\Lequivs}
\newcommand{\Lleqs}   {\mathrel{\leq_\greenLs}}
\newcommand{\greenJ} {\greenfont{J}\xspace}
\newcommand{\greenJs} {{\greenfont{J}^{\stab}}\xspace}
\newcommand{\Jequivs} {\mathrel{\greenJs}}
\newcommand{\Jeqs}    {\Jequivs}
\newcommand{\Jleqs}   {\mathrel{\leq_\greenJs}}
\renewcommand{\phi}{\varphi}
\newcommand{\mediumSize}[1]{\fontsize{9pt}{12pt}\selectfont #1\normalsize}
\newcommand{\mediumFont}[1]{\normalfont\mediumSize{#1}}
\newcommand{\malcev}%
  {\mathop{\text{\normalsize{\raisebox{0.3mm}{\textcircled{\raisebox{0.1mm}{\mediumFont{m}}}}}}}}
\newcommand{\varietyFont}[1]{\mathrm{\mathbf{#1}}}
\newcommand{\Jone}{\varietyFont{J_1}}
\newcommand{\QL}{\varietyFont{V}}
\newcommand{\QR}{\varietyFont{W}}
\newcommand{\DA}{\varietyFont{DA}}
\newcommand{\VMOD}{\varietyFont{M\hspace*{-0.5pt}O\hspace*{-0.5pt}D}}
\newcommand{\ie}{\textit{i.e.}}
\newcommand{\eg}{\textit{e.g.}}
\newcommand{\cf}{\textit{cf.}}
 \newtheorem{theorem}{Theorem}[section]
 \newtheorem{proposition}[theorem]{Proposition}
 \newtheorem{lemma}[theorem]{Lemma}
 \newtheorem{corollary}[theorem]{Corollary}
\newcommand{\eex}{\hspace*{\fill}\ensuremath{\Diamond}}
\newtheorem{expl}[theorem]{Example}
\newenvironment{example}[1][]{\ifthenelse{\equal{#1}{}}{\begin{expl}\upshape}{\begin{expl}[#1]\upshape}}{\eex\end{expl}}
\setlist{itemsep=2pt,parsep=0pt,topsep=3pt}
\title{One Quantifier Alternation in First-Order Logic with Modular Predicates}
\author{Manfred Kuf\-leitner and Tobias Walter}
\date{}
\begin{document}

\maketitle

\begin{abstract}
\noindent
\textbf{Abstract.}
Adding modular predicates yields a generalization of first-order
logic~$\FO$ over words. The expressive power of $\FO[{<},\MOD]$ with
order comparison~$x<y$ and predicates for $x \equiv i \bmod n$ has
been investigated by Barrington, Compton, Straubing and Th{\'e}rien. The study of
$\FO[{<},\MOD]$-fragments was initiated by Chaubard, Pin and Straubing. 
More recently, Dartois and \mbox{Paperman} showed
that definability in the two-variable fragment $\FO^2[{<},\MOD]$ is
decidable. In this paper we continue this line of work.

We give an effective algebraic characterization of the word languages
in $\Sigma_2[{<},\MOD]$. The fragment $\Sigma_2$ consists of
first-order formulas in prenex normal form with two blocks of
quantifiers starting with an existential block. In addition we show
that $\Delta_2[{<},\MOD]$, the largest subclass
of~$\Sigma_2[{<},\MOD]$ which is closed under negation, has the same
expressive power as two-variable logic $\FO^2[{<},\MOD]$.  This generalizes the result
$\FO^2[{<}] = \Delta_2[{<}]$ of Th{\'e}rien and Wilke to
modular predicates. As a byproduct, we obtain another decidable
characterization of $\FO^2[{<},\MOD]$.
\end{abstract}

\section{Introduction}

A famous result of McNaughton and Papert says that a language $L$ is definable in first-order logic $\FO[{<}]$ if and only if $L$ is star-free~\cite{mp71}. By a theorem of Sch\"utzenberger, $L$ is star-free if and only if its syntactic monoid is aperiodic~\cite{sch65sf:short}. Therefore, since the syntactic monoid is effectively computable and since aperiodicity of finite monoids is decidable, one can verify whether or not a given regular language is definable in $\FO[{<}]$. Not every regular language is definable in first-order logic $\FO[{<}]$. In particular, one cannot express group properties such as the words of even length. Verifying whether the length is even corresponds to counting modulo $2$. One can think of several ways of adding modular counting modalities to first-order logic. The two most common options are \emph{modular quantifiers} and \emph{modular predicates}. Modular quantifiers yield the logic $\FO{+}\MOD[{<}]$, and Straubing, Th\'erien and Thomas have shown that definability in $\FO{+}\MOD[{<}]$ is decidable~\cite{stt95IC:short}, see also~\cite{EsikLarsen03rairo} for a more general setting. The expressive power of first-order logic $\FO[{<},\MOD]$ with modular predicates was investigated by Barrington, Compton, Straubing and Th{\'e}rien~\cite{BarringtonCST92jcss}. They gave an effective characterization of 
the $\FO[{<},\MOD]$-definable languages.

There are several reasons for the study of fragments of first-order logic. With respect to many computational aspects such as the inclusion problem or the satisfiability problem, first-order logic is non-elementary~\cite{Sto74}. On the other hand, for many interesting properties, one does not require the full expressive power of~$\FO[{<}]$.
For example, when considering the two-variable fragment $\FO^2[{<}]$, then satisfiability is in NP~\cite{wei11phd}. From a very general point of view, the study of fragments also helps with the understanding of all regular languages since they often reveal important characteristics of regular languages (which can be present or absent). For example, one such property is the existence of non-trivial groups in the syntactic monoid.
In addition, fragments give rise to a descriptive complexity theory inside the regular languages: The easier the formalism for defining a given language $L$, the easier is $L$. In the investigation of a fragment~$\mathcal{F}$ several questions arise:
\begin{enumerate}
\item How can one decide whether a given regular language is definable in $\mathcal{F}$? For example, $L$ is definable in $\FO[{<}]$ if and only if its syntactic monoid is aperiodic.
\item Which languages are definable in $\mathcal{F}$? For example, $\FO[{<}]$ defines precisely the star-free languages.
\item Which other fragment defines the same languages as $\mathcal{F}$? For example, three variables are sufficient for defining any $\FO[{<}]$-language~\cite{kam68:short}, \ie, $\FO[{<}]$ and $\FO^3[{<}]$ have the same expressive power.
\item Which closure properties do the $\mathcal{F}$-definable languages have? For example, the $\FO[{<}]$-definable languages are closed under inverse homomorphisms.
\item What is the complexity of the decision and computation problems for $\mathcal{F}$?
\end{enumerate}
In this paper, we are mainly interested in the first three questions. The fourth question can frequently be answered by a result of Lauser and the first author~\cite{KufleitnerL12icalp:short}. Usually logical fragments are defined by restricting some resources in a formula. Typical resources are the number of variables, the quantifier depth, the alternation depth, or the possible atomic predicates. Inside $\FO[{<}]$, for every fixed quantifier depth and every fixed alphabet one can only define a finite number of languages. Therefore, all of the above questions become trivial in this case.
Let~$\Sigma_n$ be the set of all first-order formulas in prenex normal form with at most~$n$ blocks of quantifiers such that the first block is existential, let~$\Pi_n$ be the negations of $\Sigma_n$-formulas, and let $\B\Sigma_n$ be the Boolean closure of $\Sigma_n$.
The fragments~$\Sigma_n$ and $\B\Sigma_n$ define the (quantifier) alternation hierarchy. Over the signature~$[{<}]$, the answer to the second question in case of the alternation hierarchy reveals a surprising connection: A language is definable in $\B\Sigma_n[{<}]$ if and only if it is on the $n$\textsuperscript{th} level of the Straubing-Th\'{e}rien hierarchy~\cite{tho82:short}. The Straubing-Th\'{e}rien hierarchy is an infinite hierarchy exhausting the star-free languages~\cite{str81tcs:short,the81tcs:short}, and it is tightly connected to the dot-depth hierarchy~\cite{str85jpaa}. The fragments $\Sigma_n[{<}]$ correspond to the so-called half levels of the Straubing-Th\'{e}rien hierarchy~\cite{pw97:short}. Decidability criteria are known only for the very first levels of the alternation hierarchy, \ie, for $\Sigma_1[{<}]$, for $\B\Sigma_1[{<}]$ and for $\Sigma_2[{<}]$,~\cite{pin95:short,sim75:short,pw97:short}. Decidability of $\B\Sigma_2[{<}]$ is one of the major open problems in algebraic automata theory.

When restricting the number of variables, then, by Kamp's Theorem~\cite{kam68:short}, using (and reusing) only three variables has the same expressive power as full first-order logic; this fact is often written as $\FO^3[{<}] = \FO[{<}]$. On the other hand, two variables are strictly less powerful. For instance, $(ab)^*$ is definable using three variables but it is not definable in $\FO^2[{<}]$, the two-variable fragment of $\FO[{<}]$. Th\'{e}rien and Wilke have shown that definability in $\FO^2[{<}]$ is decidable and that $\FO^2[{<}]$ and $\Delta_2[{<}]$ have the same expressive power~\cite{tw98stoc:short}. As usual, a language $L \subseteq A^*$ is definable in $\Delta_2[{<}]$ if both $L$ and $A^* \setminus L$ are definable in~$\Sigma_2[{<}]$. This is sometimes written as $\Delta_2[{<}] = \Sigma_2[{<}] \cap \Pi_2[{<}]$.
In particular, $\Delta_2[{<}]$ is the largest subclass of $\Sigma_2[{<}]$ which is closed under complement.

\begin{table}\centering\small
  \begin{tabular}{c@{\hspace*{5mm}}c@{\hspace*{2.5mm}}c@{\hspace*{2.5mm}}c@{\hspace*{2.5mm}}c@{\hspace*{2.5mm}}c@{\hspace*{2.5mm}}c}
    \toprule
    Signature & $\Sigma_1$ & $\B\Sigma_1$ & $\Sigma_2$ & $\FO^2$ & $\FO^2$~\textit{vs.}~$\Delta_2$ & $\FO$
    \\ \midrule
    $[{<}]$ 
    & \parbox{16mm}{\centering decidable \\ \cite{pin95:short}} 
    & \parbox{16mm}{\centering decidable \\ \cite{sim75:short}}
    & \parbox{16mm}{\centering decidable \\ \cite{arf91tcs,pw97:short}} 
    & \parbox{16mm}{\centering decidable \\ \cite{tw98stoc:short}}
    & \parbox{17mm}{\centering equivalent \\ \cite{tw98stoc:short}}
    & \parbox{16mm}{\centering decidable \\ \cite{mp71,sch65sf:short}}
    \\[3.5mm]
    $[{<},\MOD]$ 
    & \parbox{16mm}{\centering decidable \\ \cite{cps06lics:short}}
    & \parbox{16mm}{\centering decidable \\ \cite{cps06lics:short}}
    & \parbox{20mm}{\centering decidable \\ \textbf{new \vphantom{[}result}}
    & \parbox{16mm}{\centering decidable \\ \cite{dartoispaperman13:short}}
    & \parbox{20mm}{\centering equivalent \\ \textbf{new \vphantom{[}result}} 
    & \parbox{16mm}{\centering decidable \\ \cite{BarringtonCST92jcss}}
    \\ \bottomrule
  \end{tabular}
  \smallskip
  \caption{Definability in logical fragments.}
  \label{tab:fragments}
\end{table}

The investigation of fragments over the signature $[{<},\MOD]$ with modular predicates was initiated by Chaubard, Pin and Straubing~\cite{cps06lics:short}. They gave effective algebraic characterizations of $\Sigma_1[{<},\MOD]$- and $\B\Sigma_1[{<},\MOD]$-definability.  Dartois and Paperman~\cite{dartoispaperman13:short} showed that it is decidable whether or not a given regular language is definable in $\FO^2[{<},\MOD]$. In addition, Dartois and Paperman described the languages definable in $\FO^2[{<},\MOD]$. In this paper, we consider the fragment $\Sigma_2[{<},\MOD]$. Our first main result is a decidable algebraic characterization of $\Sigma_2[{<},\MOD]$. As a second result, we show that $\FO^2[{<},\MOD]$ and $\Delta_2[{<},\MOD]$ have the same expressive power. This leads to another decidable characterization of $\FO^2[{<},\MOD]$. As a byproduct, we give a refinement of Dartois and Paperman's  language characterization of $\FO^2[{<},\MOD]$. Our proof technique for $\FO^2[{<},\MOD]$ is different from the one by Dartois and Paperman. It relies on Mal'cev products with definite and reverse definite semigroups. One cannot expect to obtain decidability results for fragments with modular predicates if there is no such result without modular predicates. In this sense, our characterizations complete the picture for the major ``small'' fragments in the presence of modular predicates, see Table~\ref{tab:fragments}. 

Na\"ively, one could expect that modular predicates can only help with expressing group properties, but this is not true. The following example shows that modular predicates increase the expressive power also within the star-free languages.

\begin{example}
The following six languages are all star-free (even though the given expressions for $L_1$ and $L_5$ are using non-trivial star-operations):
\begin{align*}
  L_1 &= \big(\oneset{a,b}^2\big)^*\oneset{aa,bb}\oneset{a,b}^*
  & L_4 &= \oneset{a,b}^*\oneset{ab,ba}\oneset{a,b}^*
  \\ L_2 &= \oneset{a,b}^*aa\oneset{a,b}^*
  & L_5 &= (bc)^*
  \\ L_3 &= \oneset{a,b}^*\oneset{aa,bb}\oneset{a,b}^* 
  & L_6 &= L_3 \cup L_5
\end{align*}
The definability of these languages in the fragments $\Sigma_2$, $\Pi_2$ and $\FO^2$ either with or without modular predicates is depicted in the following diagram:
  \begin{center}
    \begin{tikzpicture}[scale=0.8]

    \draw[thick,fill=black!5] (1.25,0) arc (180:0:4.75cm);
    
    \draw[thick,fill=black!5] (4.75,0) arc (180:0:4.75cm);
    \draw[thick,fill=black!15] (3,0) arc (180:0:3cm);
    \draw[thick,fill=black!15] (6.5,0) arc (180:0:3cm);

    \begin{scope}
    \clip (4.75,0) arc (180:0:4.75cm);
    \draw[thick,fill=black!15] (1.25,0) arc (180:0:4.75cm);
    \end{scope}
    
    \begin{scope}
    \clip (3,0) arc (180:0:3cm);
    \draw[thick,fill=black!25] (4.75,0) arc (180:0:4.75cm);
    \end{scope}
    
    \begin{scope}
    \clip (6.5,0) arc (180:0:3cm);
    \draw[thick,fill=black!25] (1.25,0) arc (180:0:4.75cm);
    \end{scope}
    
    \begin{scope}
    \clip (3,0) arc (180:0:3cm);
    \draw[thick,fill=black!35] (6.5,0) arc (180:0:3cm);
    \end{scope}

    \draw[thick] (1.25,0) arc (180:0:4.75cm);    
    \draw[thick] (3,0) arc (180:0:3cm);
    \draw[thick] (6.5,0) arc (180:0:3cm);    
    \draw[thick] (4.75,0) arc (180:0:4.75cm);
    
    \draw[thick] (1.25,0) -- (14.25,0);
    \draw (4.3,1.65) node {\small$\Sigma_2[<]$};
    \draw (4.5,3.5) node {\small$\Sigma_2[{<},\MOD]$};
    \draw (11.2,1.65) node {\small$\Pi_2[<]$};
    \draw (11.0,3.5) node {\small$\Pi_2[{<},\MOD]$};
    \draw (7.75,1.4) node {\small$\FO^2[<]$};

   \filldraw[black] (3.975,0.4) circle(2pt);
    \draw (3.975, 0.8) node {\small $L_2$};
    
   \filldraw[black] (2.225,0.4) circle(2pt);
    \draw (2.225, 0.8) node {\small $L_1$};
    
   \filldraw[black] (5.725,0.4) circle(2pt);
    \draw (5.725, 0.8) node {\small $L_3$};
    
   \filldraw[black] (7.75,3) circle(2pt);
    \draw (7.75, 3.5) node {\small $L_6$};
    
   \filldraw[black] (7.75,0.4) circle(2pt);
    \draw (7.75, 0.8) node {\small $L_4$};
    
   \filldraw[black] (9.775,0.4) circle(2pt);
    \draw (9.775, 0.8) node {\small $L_5$};
    
   \draw[snake=brace,mirror snake,thick] (4.75,-0.2) -- node[below,outer sep=4pt] {\small $\FO^2[{<},\MOD]\, = \,\Sigma_2[{<},\MOD] \cap \Pi_2[{<},\MOD]$} (10.75,-0.2);
    
  \end{tikzpicture}
  \end{center}
Examples for the remaining two regions can be obtained by complementation of~$L_1$ and~$L_2$.
Next, we give formulas $\varphi_i$ and $\varphi_i'$ for the languages $L_i$ which justify membership in the respective fragments. We write $\lambda(x)$ for the label of position $x$. For better readability we define the following macros. Let $\mathrm{suc}(x,y) \coloneq x<y \land (\forall z \colon z \leq x \lor y \leq z)$ resemble the successor predicate, the formulas $a(\min) \coloneq \forall z \colon \lambda(z)=a \lor ( \exists x\colon x < z )$ and $a(\max) \coloneq \forall z \colon \lambda(z)=a \lor ( \exists x\colon z < x )$ state that the first (resp.\ last) position in a non-empty word is labeled by $a$, and for letters $a,b$ we set:
\begin{align*}
\varphi_{ab}(x,y) &\coloneq x<y \land \lambda(x) = a \land \lambda(y) = b\\
\psi_{ab}(x,y) &\coloneq \mathrm{suc}(x,y) \land \lambda(x) = a \land \lambda(y) = b
\end{align*}
The formula $\varphi_{ab}(x,y)$ says that $x$ is an $a$-position, $y$ is a $b$-position, and $x$ is smaller than~$y$. The formula $\psi_{ab}(x,y)$ additionally claims that $y = x+1$. We set:
\begin{alignat*}{2}
\varphi_1 &\coloneq \exists x\exists y \colon x \equiv 1 \bmod 2
\land \big(\psi_{aa}(x,y) \lor \psi_{bb}(x,y)\big) &&\in \Sigma_2[{<},\MOD]\\
\varphi_2 &\coloneq \exists x\exists y \colon \psi_{aa}(x,y)&&\in \Sigma_2[<]\\
\varphi_3 &\coloneq \exists x\exists y \colon \psi_{aa}(x,y) \lor \psi_{bb}(x,y) &&\in \Sigma_2[<]\\
\varphi_3' &\coloneq \exists x \exists y \colon 
x \equiv 1 \bmod 2 \land y \equiv 0 \bmod 2
\land \lambda(x) = \lambda(y) &&\in \Pi_2[{<},\MOD] \\
\varphi_4 &\coloneq \exists x \exists y \colon \lambda(x) = a \land \lambda(y) = b&&\in \FO^2[<]\\
\varphi_5 &\coloneq b(\min) \land c(\max) \land
\forall x \forall y\colon \mathrm{suc}(x,y) \limplies \big( \varphi_{bc}(x,y) \lor \varphi_{cb}(x,y) \big) \ &&\in \Pi_2[<]\\
\varphi_5' &\coloneq \LEN^2_0 \land \forall x\colon \lambda(x) \in \smallset{b,c} \land \big(
x \equiv 1 \bmod 2
\,\liff\, \lambda(x) = b\big) \ &&\in \Sigma_2[{<},\MOD]
\end{alignat*}
Note that $\varphi_3', \varphi_5' \in \FO^2[{<},\MOD]$. The formulas for $L_6$ are just the disjunctions of those for $L_3$ and $L_5$. One can show that some language $L_i$ is not definable in some of the above fragments by using the effective algebebraic characterizations of the fragments.
\end{example}

Finally, we remark that the two-variable fragment of first-order logic with modular quantifiers $(\FO{+}\MOD)^2[{<}]$ was characterized by Straubing and Th\'{e}rien~\cite{st03tocs:short}, but there is no immediate connection between the decidability results for the fragments $(\FO{+}\MOD)^2[{<}]$ and $\FO^2[{<},\MOD]$ since $\FO^2[{<},\MOD]$ is not closed under arbitrary inverse homomorphisms. In general, the study of fragments with modular predicates requires so-called $\mathcal{C}$-varieties where $\mathcal{C}$ is the class of length-multiplying homomorphisms, see~\cite{ei03,str02:short,KufleitnerL12icalp:short}.

\section{Preliminaries}
\paragraph*{Words.}
  Let $A$ be a finite alphabet. Elements of $A$ are \emph{letters}. 
We denote by $A^*$ the set of all words over $A$ and by $A^+$ the set of all non-empty 
words over $A$. The empty word is $\varepsilon$. Let $w = w_1w_2w_3$ be a factorization, 
then $w_1$ is a \emph{prefix}, $w_2$ a \emph{factor} and $w_3$ a \emph{suffix} of $w$.
A \emph{language} $L$ is a subset of~$A^*$.
Let $\abs{w}$ denote the \emph{length} of a word $w$ and let $w[i]$ be the letter at position $1\leq i \leq \abs{w}$, 
\ie, we have $w = w[1] \cdot w[2] \cdots w[\abs{w}]$. The \emph{alphabet} of $w$ is the subset $\alpha(w) = \set{w[i]}{1\leq i \leq \abs w}$ of $A$.
Let
\begin{equation*}
 T_n(A) = A\times \oneset{1,\ldots, n}.
\end{equation*}
For $w \in A^*$ we define the word $\tau_{j,n}(w) \in T_n(A)^*$ augmented with some additional information by 
\begin{equation*}
  \tau_{j,n}(w) = (w[1],1+j \bmod n)\cdots(w[\abs{w}], \abs{w}+j \bmod n)
\end{equation*}
and we set $\tau_{n}(w) = \tau_{0,n}(w)$. One can think of $j$ as an offset when counting the positions modulo $n$. 
Words in $T_n(A)^*$ of the form $\tau_{j,n}(w)$ are \emph{well-formed}.
For example,  $\tau_{1,3}(acbabc) = (a,2)(c,3)(b,1)(a,2)(b,3)(c,1)$ is well-formed. Note that by $i \bmod n$ we denote the unique integer $k \in \oneset{1,\ldots,n}$ satisfying $k \equiv i \bmod n$.

\enlargethispage{\baselineskip}

\paragraph*{First-order logic with modular predicates.}

We consider first-order logic $\FO$ interpreted over positions of words. The atomic formulas are 
\begin{equation*}
\ltrue,\,\ \lfalse,\,\ \lambda(x) = a,\,\ x=y,\,\ x < y,\,\ \MOD^n_i(x),\,\ \LEN^n_i\,.
\end{equation*}
The semantics of 
$\ltrue$ is \emph{true}, $\lfalse$ means \emph{false}, $\lambda(x) = a$ states that the position $x$ is labeled by $a$, $x=y$ means $x$ and $y$ are identical, $x < y$ says that the position $x$ is smaller 
than the position $y$, $\MOD^n_i(x)$ holds, if the position $x$ is congruent to~$i$ modulo $n$, and the $0$-ary predicate $\LEN^n_i$ is true 
if the length of the word model is congruent to~$i$ modulo~$n$. Formulas can be composed by the Boolean connectives and 
by existential and universal quantifiers. For better readability, we introduce the following macros:
We write $x \equiv i \bmod n$ for $\MOD^n_i(x)$, we write $x+j \equiv y \bmod n$ for $\bigland_{i=1}^n \big(\MOD^n_i(x) \liff \MOD^n_{i+j}(y)\big)$, and for $B \subseteq A$ we use the shortcut
$\lambda(x) \in B$ for $\biglor_{b\in B} \lambda(x) = b$.
We consider the negation-free $\FO$ fragment $\Sigma_2[{<},\MOD]$ of all formulas without an existential quantifier in the scope of a universal quantifier. Over non-empty words, a formula is in $\Sigma_2[{<},\MOD]$ if there exists an equivalent formula in prenex normal form having two blocks of quantifiers, starting with a block of 
existential quantifiers.  
The fragment $\Pi_2[{<},\MOD]$ contains all negation-free formulas without a universal quantifier in the scope of an existential quantifier. A formula is in $\Pi_2[{<},\MOD]$ if and only if it is equivalent to the negation of a formula in $\Sigma_2[{<},\MOD]$.
By $\FO^2[{<},\MOD]$ we denote the first-order formulas which use only two variables (say $x$ and $y$). We write $\FO^2[{<},\MOD^n]$ for the formulas in~$\FO^2[{<},\MOD]$ 
which use the same modulus $n$ for all modular predicates.
For any class of formulas $\mathcal{F}[{<},\MOD]$ we write $\mathcal{F}[<]$ for the formulas in $\mathcal{F}[{<},\MOD]$ which neither use predicates $\MOD^n_i$ nor $\LEN^n_i$.
A \emph{sentence} is a formula without free variables. For a sentence $\varphi$ we write $u \models \varphi$ if $\varphi$ satisfies $u$. The language defined by a sentence $\varphi$ is $L(\varphi) = \set{u\in A^*}{u\models \varphi}$. Let $\mathcal F$ be a subset of $\FO$. A language $L$ is \emph{definable} in $\mathcal F$ if there 
exists a sentence $\varphi$ in $\mathcal F$ such that $L = L(\varphi)$.
We say that a language is definable in $\Delta_2[{<},\MOD]$ if it is definable in both $\Sigma_2[{<},\MOD]$ and $\Pi_2[{<},\MOD]$. This is often written as $\Delta_2[{<},\MOD] = \Sigma_2[{<},\MOD] \cap \Pi_2[{<},\MOD]$.

\paragraph*{Monoids.}

Let $M$ be a finite monoid. We assume that every finite monoid is equipped with a partial order $\leq$ which is compatible with multiplication, \ie, $x \leq y$ implies $pxq \leq pyq$ for all $p,q \in M$. Note that equality always yields such a partial order. Therefore, ordered monoids generalize the notation of arbitrary monoids.
 An element $e\in M$ is \emph{idempotent} if $e^2 = e$. There exists an integer $\omega \geq 1$ (depending on $M$) such that $x^\omega$ is idempotent for every element $x \in M$.
A \emph{stability index} of a homomorphism $h : A^* \to M$ is a positive integer $s$ such that 
$h(A^s) = h(A^{2s})$. Such numbers exist since $h(A)$ generates an idempotent element in the power monoid 
$\mathcal P(M)$ endowed with the multiplication $XY = \set{xy}{x\in X, y\in Y}$ for $X,Y \subseteq M$; in particular $s = \big(2^{\abs{M}}\big)!$ is a stability index of $h$.
We note that \emph{the} stability index is usually defined as the smallest such number. As all our results hold for every stability index, we refrain from this restriction.
The monoid $S = h((A^s)^*) = h(\smallset{\varepsilon} \cup A^s)$ is called the \emph{stable monoid} of $h$.
Note that $S$ does not dependent on the stability index $s$. For this purpose, let $s'$ be another stability index; 
then $h(A^s) = h(A^{ss'}) = h(A^{s'})$.


\section{Homomorphisms and Recognition}

A homomorphism $h : A^* \to M$ to an ordered monoid $M$ 
\emph{recognizes} a language~$L$ if $L = h^{-1}({\downarrow}h(L))$.
As usual 
${\downarrow}D = \set{x \in M}{\exists y \in D\colon x \leq y}$ for $D \subseteq M$.
 Similarly, we say that $L$ is \emph{recognizable} by a monoid $M$ if there exists a homomorphism $h : A^* \to M$ which recognizes $L$.
A language $L$ is regular if and only if it is recognizable by a finite monoid, see \eg~\cite{pin86}. The \emph{syntactic preorder} $\leq_L$ of a language $L \subseteq A^*$ is defined by $u \leq_L v$ if for all $p,q \in A^*$ the following implication holds:
\begin{equation*}
  pvq \in L \;\implies\; puq \in L\,.
\end{equation*}
We set $u \equiv_L v$ if both $u \leq_L v$ and $v \leq_L u$. The relation $\equiv_L$ is called the \emph{syntactic congruence} of $L$, and the quotient $\Synt(L) = A^* / {\equiv_L}$ is the \emph{syntactic monoid}. The syntactic preorder induces a partial order on $\Synt(L)$ such that the syntactic homomorphism
\begin{align*}
  h_L : A^* &\to \Synt(L) \\
  u &\mapsto \set{v \in A^*}{u\equiv_L v}
\end{align*}
recognizes the language $L$. The syntactic monoid $\Synt(L)$ is the unique minimal monoid which recognizes $L$, see \eg~\cite{pin86}. From any reasonable representation of a regular language $L$ (such as nondeterministic finite automata or sentences in monadic second-order logic) one can effectively compute its syntactic homomorphism $h_L$.

A \emph{positive variety} of finite monoids is a class of finite monoids $\varietyFont{V}$ such that $\varietyFont{V}$ is closed under direct products, submonoids, and monotone homomorphic images. A \emph{full variety} of finite monoids is a positive variety $\varietyFont{V}$ such that $(M,{\leq}) \in \varietyFont{V}$ if and only if $(M,{\geq}) \in \varietyFont{V}$. The order $\geq$ on $M$ is the dual order of $\leq$. Note that $(M,=)$ is a submonoid of the direct product of $(M,\leq)$ and $(M,\geq)$.
A monoid $M$ is \emph{aperiodic} if $x^\omega = x^{\omega+1}$ for all $x \in M$. The class of aperiodic monoids is denoted by $\varietyFont{A}$.
For a monoid $M$ and an idempotent $e \in M$ let $M_e$ be the submonoid of $M$ generated by $\set{a \in M}{e \in MaM}$. A monoid $M$ is in $\DA$ if $eM_e e = e$ for all idempotents $e \in M$, \ie, if $ese = e$ for all $s \in M_e$; see~\cite{DiekertGastinKufleitner08:short,tt02:short} for further characterizations of $\DA$. A monoid $M$ is in $\geneq{x^\omega y x^\omega \leq x^\omega} \malcev \Jone$ if $e M_e e \leq e$ for all idempotents $e \in M$, \ie, if $ese \leq e$ for all $s \in M_e$. 
Usually, one uses relational morphisms for defining Mal'cev products $\varietyFont{W} \malcev \varietyFont{V}$, but in this particular case the current definition is equivalent, \cf~\cite{DiekertGastinKufleitner08:short,pw97:short}.
We have
\begin{equation*} 
  \DA \subseteq \geneq{x^\omega y x^\omega \leq x^\omega} \malcev \Jone \subseteq \varietyFont{A}\,,
\end{equation*}
and membership in each of the classes is decidable.
The classes $\varietyFont{A}$ and $\DA$ form full varieties whereas $\geneq{x^\omega y x^\omega \leq x^\omega} \malcev \Jone$ is a positive variety but not a full variety.


As pointed out by Straubing~\cite{str94}, for $A = \smallset{a,b}$ the syntactic monoids of $L_1 = \set{u \in A^*}{\abs{u} \equiv 0 \bmod 2}$ and $L_2 = \set{u \in A^*}{\abs{u}_a \equiv 0 \bmod 2}$ are both isomorphic to the cyclic group of order $2$. Here, $\abs{u}_a$ denotes the number of occurrences of the letter $a$ in $u$. Since $L_1$ is definable in first-order logic with modular predicates (using the sentence $\LEN^2_0$) whereas $L_2$ is not definable in this logic, the structure of the syntactic monoid cannot be used as a characterization of definability in logical fragments with modular predicates. Instead, we rely on properties of the syntactic homomorphism. Let $\varietyFont{V}$ be a variety of finite monoids. A surjective homomorphism $h : A^* \to M$ is in $\varietyFont{QV}$ if the stable monoid of $h$ is in $\varietyFont{V}$. If membership in $\varietyFont{V}$ is decidable, then, since the stable monoid is effectively computable, membership in $\varietyFont{QV}$ is decidable.
 A language is definable in first-order logic with modular predicates if and only if its syntactic homomorphism is in $\varietyFont{QA}$, \cf~\cite{str94}. Note that in the above example, the stable monoid of $L_1$ is the trivial monoid whereas the stable monoid of $L_2$ is the cyclic group of order two.

Next, we define the class $\varietyFont{V}*\VMOD$. This is usually done in terms of semidirect products of $\varietyFont{V}$ with cyclic groups~\cite{cps06:short}, see also~\cite{cps06lics:short}. In this paper we rely on an equivalent approach using a condition on homomorphisms, see Appendix~\ref{app:semidirect} for a proof of the equivalence. 
The class $\varietyFont{V}*\VMOD$ consists of all surjective homomorphisms $h : A^* \to M$ such that there exists an integer $n>0$ and a homomorphism $g : T_n(A)^* \to N$ with $N \in \varietyFont{V}$ 
satisfying
\begin{equation*}
  g\big(\tau_n(u)\big) \leq g\big(\tau_n(v)\big) 
  \;\Rightarrow\; h(u) \leq h(v) 
\end{equation*}
for all $u,v \in A^*$ with $\abs{u} \equiv \abs{v} \bmod n$.
If $\varietyFont{V}$ is a full variety, then this means that the image $h(u)$ of the word $u \in A^*$ is uniquely determined by the pair $\big(\abs{u} \bmod n, \, g(\tau_n(u))\big)$.
Recall that $T_n(A) = A \times \oneset{1,\ldots, n}$ and $\tau_n(u)$ is the decoration of the word $u$ with positional information modulo $n$. Counting starts at offset $j+1$ when using the notation~$\tau_{j,n}(u)$.

\begin{lemma}\label{lem:strong*MOD}
  Let $\varietyFont{V}$ be a positive variety and let $h : A^* \to M$ be a homomorphism. Suppose there exists an integer $n>0$ and a homomorphism $g : T_n(A)^* \to N$ with $N \in \varietyFont{V}$ such that for all $u,v \in A^*$ with $\abs{u} \equiv \abs{v} \bmod n$ the following implication holds:
\begin{equation*}
  \text{If } g\big(\tau_{j,n}(u)\big) \leq g\big(\tau_{j,n}(v)\big) \text{ for all integers } j \text{, then } h(u) \leq h(v) \text{.}
\end{equation*}
  Then $h$ is in $\varietyFont{V} * \VMOD$.
\end{lemma}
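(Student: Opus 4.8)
The plan is to amalgamate the whole family of offset decorations $\tau_{j,n}(u)$ into a single homomorphism and then to invoke closure of the positive variety $\varietyFont{V}$ under finite direct products. The key observation is that, for each fixed $j$, the word $\tau_{j,n}(u)$ is obtained from $\tau_n(u) = \tau_{0,n}(u)$ by a purely letter-wise relabelling of the positional component, so all of these words live in a single orbit under a small group of relabellings of $T_n(A)$.

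First I would introduce, for $j \in \{0, 1, \dots, n-1\}$, the homomorphism $\rho_j : T_n(A)^* \to T_n(A)^*$ induced by the letter map $(a, i) \mapsto (a,\, i + j \bmod n)$. Comparing the two sides letter by letter shows $\rho_j(\tau_n(u)) = \tau_{j,n}(u)$ for every $u \in A^*$, and $\rho_j$ depends only on $j$ modulo $n$; hence the words $\tau_{j,n}(u)$ with $j$ ranging over all integers are exactly $\rho_0(\tau_n(u)), \dots, \rho_{n-1}(\tau_n(u))$.

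Next I would put $N' = N^n$, equipped with componentwise multiplication and the componentwise order; this is an ordered monoid whose order is compatible with multiplication, and since $\varietyFont{V}$ is a positive variety, $N' \in \varietyFont{V}$. The map
\[
  g' : T_n(A)^* \to N', \qquad w \mapsto \big( (g \circ \rho_0)(w),\ (g \circ \rho_1)(w),\ \dots,\ (g \circ \rho_{n-1})(w) \big)
\]
is then a homomorphism, being a tuple of homomorphisms into a direct product. To see that $n$ and $g'$ witness $h \in \varietyFont{V} * \VMOD$, take $u, v \in A^*$ with $\abs{u} \equiv \abs{v} \bmod n$ and suppose $g'(\tau_n(u)) \leq g'(\tau_n(v))$. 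Reading this componentwise and using the identity above, it says $g(\tau_{j,n}(u)) \leq g(\tau_{j,n}(v))$ for all $j \in \{0, \dots, n-1\}$, hence --- since $\rho_j$ is invariant under shifting $j$ by multiples of $n$ --- for all integers $j$. The hypothesis of the lemma then gives $h(u) \leq h(v)$, which is exactly the defining implication of $\varietyFont{V} * \VMOD$ for the pair $(n, g')$.

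The argument is essentially formal: the only point that needs care is the bookkeeping identity $\rho_j \circ \tau_n = \tau_{j,n}$ together with the remark that it suffices to range $j$ over a single complete residue system modulo $n$, so I do not anticipate a genuine obstacle. Two minor loose ends: if the definition of the class is read as requiring the witnessing morphism to be surjective, one simply replaces $N'$ by the submonoid $g'(T_n(A)^*) \in \varietyFont{V}$; and if $h$ is not assumed surjective to begin with, replacing $M$ by $h(A^*)$ changes neither the hypothesis nor the conclusion.
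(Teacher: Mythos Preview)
Your proof is correct and matches the paper's own argument essentially verbatim: the paper defines $g_j(a,i) = g(a,\, i+j \bmod n)$, which is exactly your $g \circ \rho_j$, and then takes $g' = (g_0,\dots,g_{n-1})$ into the direct power $N^n$, concluding from $g'(\tau_n(u)) = \big(g(\tau_{0,n}(u)),\dots,g(\tau_{n-1,n}(u))\big)$. Your additional remarks on surjectivity and on restricting $j$ to a residue system are fine and slightly more careful than the paper's terse version.
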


\begin{proof}
  It suffices to consider integers $j \in \oneset{0,\ldots,n-1}$.
  Let $g_j : T_n(A)^* \to N$ be the homomorphism induced by $g_j (a,i) = g(a,i+j \bmod n)$ and let $g' : T_n(A)^* \to \prod_{j=0}^{n-1} N$ be defined by $g'(w) = \big( g_0(w), \ldots, g_{n-1}(w) \big)$. Since we have
  \begin{equation*}
    g'(\tau_n(u)) = \big( g(\tau_{0,n}(u)), \ldots, g(\tau_{n-1,n}(u)) \big)\,,
  \end{equation*}
  this completes the proof.
\end{proof}

A construction which forms the basis of our characterizations of $\Sigma_2[{<},\MOD]$ and $\FO^2[{<},\MOD]$ is the monoid $M_e^{\stab}$. Let $h : A^* \to M$ be a homomorphism with stability index $s$. 
The submonoid $M_e^{\stab}$ of $M$ consists of 
images of words $a_1\cdots a_k$ under $h$ such that $k \equiv 0 \bmod s$ and for every letter $a_i \in A$ there exist words $p_i, q_i$ with $\abs{p_i} \equiv i-1 \bmod s$, $\abs{q_i} \equiv -i \bmod s$ and $h(p_i a_i q_i) = e$. We note that, by definition of the stability index, it suffices to consider words $p_i,q_i$ of length less than $2s$. Therefore, $M_e^{\stab}$ is effectively computable.

\section{The fragment $\mathbf{\Sigma_2}$ with modular predicates}
\label{sec:sigma2mod}

In this section we give an effective algebraic characterization of the first-order fragment $\Sigma_2[{<},\MOD]$ with modular predicates. Without modular predicates, a language $L$ is definable in $\Sigma_2[{<}]$ if and only if its syntactic monoid is in $\geneq{x^\omega y x^\omega \leq x^\omega} \malcev \Jone$, see \eg~\cite{DiekertGastinKufleitner08:short}. We show that a similar result holds, involving submonoids of the form~$\smash{M_e^{\stab}}$ instead of $M_e$. 

\begin{theorem}\label{thm:sigma2mod}
Let $h_L : A^* \to M$ be the syntactic homomorphism of $L \subseteq A^*$ and let $s \geq 1$ satisfy $h_L(A^s) = h_L(A^{2s})$. Then the following conditions are equivalent:
\begin{enumerate}
\item\label{aaa:sigma2mod} $L$ is definable in $\Sigma_2[{<},\MOD]$.
\item\label{bbb:sigma2mod} $L$ is recognized by a homomorphism in $\big(\geneq{x^\omega y x^\omega \leq x^\omega} \malcev \Jone\big) * \VMOD$.
\item\label{ccc:sigma2mod} $h_L$ satisfies $e \smash{M_e^{\stab}} e \leq e$ for all idempotents $e$ in $M$.
\end{enumerate}
\end{theorem}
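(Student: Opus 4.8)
The plan is to prove the cycle $\ref{aaa:sigma2mod}\Rightarrow\ref{ccc:sigma2mod}\Rightarrow\ref{bbb:sigma2mod}\Rightarrow\ref{aaa:sigma2mod}$, reducing as much as possible to the known characterization $\Sigma_2[{<}]\iff \Synt(L)\in\geneq{x^\omega y x^\omega\leq x^\omega}\malcev\Jone$ (\cite{DiekertGastinKufleitner08:short,pw97:short}) together with the transfer principle that moving from $[{<}]$ to $[{<},\MOD]$ corresponds, at the algebraic level, to applying the $*\VMOD$ operator and, at the language level, to passing from $w$ to its decoration $\tau_n(w)$. The bridge between the two worlds is the observation that a $\Sigma_2[{<},\MOD]$-formula using modulus $n$ over $A$ can be rewritten as a $\Sigma_2[{<}]$-formula over the enlarged alphabet $T_n(A)$ interpreted on well-formed words $\tau_n(w)$ (the modular predicates $\MOD^n_i(x)$ become letter predicates, and the $\LEN$-predicates are handled by distinguishing the congruence class of $|w|$; note that over nonempty words $\LEN^n_i$ is itself $\Sigma_2$-expressible once positions carry their residues). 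This is the standard device behind all the modular-predicate results cited in the introduction, and I would set it up carefully as a lemma first.

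\textbf{$\ref{bbb:sigma2mod}\Rightarrow\ref{aaa:sigma2mod}$} is the ``easy'' direction. If $h$ recognizing $L$ lies in $\big(\geneq{x^\omega y x^\omega\leq x^\omega}\malcev\Jone\big)*\VMOD$, unwind the definition of $*\VMOD$: there are $n>0$ and $g:T_n(A)^*\to N$ with $N\in\geneq{x^\omega y x^\omega\leq x^\omega}\malcev\Jone$ such that $h(u)$ is (order-)determined by $(|u|\bmod n,\, g(\tau_n(u)))$. Since $N$ lies in the positive variety characterizing $\Sigma_2[{<}]$, the language $g^{-1}({\downarrow}g(\tau_n(L\cap A^{\equiv r})))$ is definable by a $\Sigma_2[{<}]$-sentence over $T_n(A)$ for each residue $r$; pulling this sentence back along the decoration $w\mapsto\tau_n(w)$ (replacing each letter predicate $\lambda(x)=(a,i)$ by $\lambda(x)=a\land \MOD^n_i(x)$) and conjoining with $\LEN^n_r$ gives a $\Sigma_2[{<},\MOD]$-sentence; take the finite disjunction over $r$. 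One must check this pullback keeps us in $\Sigma_2$ (it does: the substitution introduces no new quantifiers) and that well-formedness of $\tau_n(w)$ is automatic.

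\textbf{$\ref{ccc:sigma2mod}\Rightarrow\ref{bbb:sigma2mod}$} is where the construction $M_e^{\stab}$ earns its keep. The idea is: set $n=s$ (the stability index), and take $g$ to be the homomorphism $T_s(A)^*\to M'$ where $M'$ is a suitable quotient/submonoid of $M$ built so that $g(\tau_s(w))$ records $h(w)$ together with enough boundary data, and the stable monoid of $h$ together with the hypothesis $eM_e^{\stab}e\leq e$ forces $M'$ into $\geneq{x^\omega y x^\omega\leq x^\omega}\malcev\Jone$. Concretely, I expect to verify the defining inequality $ee'e\leq e$ (for $e$ idempotent and $e'\in M'_e$) directly, by showing that any element of $M'_e$ lifts to an element of $M_e^{\stab}$ — this is exactly the point of defining $M_e^{\stab}$ via words $a_1\cdots a_k$ with $k\equiv 0\bmod s$ and context lengths $\equiv i{-}1,-i\bmod s$, since those congruence conditions are precisely what survives when one passes to $T_s(A)$-decorated words and asks which letters can appear ``under'' an idempotent $e$ in the stable structure. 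Then Lemma~\ref{lem:strong*MOD} (with the $g_j$'s absorbing the offsets) upgrades the pointwise condition to genuine membership in $\geneq{x^\omega y x^\omega\leq x^\omega}\malcev\Jone)*\VMOD$. \textbf{This is the main obstacle}: getting the algebra of $M_e^{\stab}$ to match the algebra of $N_f$ for idempotents $f$ of the $T_s(A)$-monoid $N$ exactly, including the bookkeeping of offsets $j$ and the fact that $M_e^{\stab}$ only needs contexts of length $<2s$. I would isolate this as a separate lemma: $f\in N$ idempotent $\Rightarrow$ $g(N_f)\subseteq M_e^{\stab}$ for the corresponding $e=h$-image, and conversely every generator of $M_e^{\stab}$ is realized.

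\textbf{$\ref{aaa:sigma2mod}\Rightarrow\ref{ccc:sigma2mod}$} uses an Ehrenfeucht--Fra\"iss\'e / substitution argument. Given a $\Sigma_2[{<},\MOD]$-sentence $\varphi$ of some quantifier rank with modulus dividing $n$ (and $s\mid n$), suppose toward a contradiction that $e\cdot m\cdot e\not\leq e$ for some idempotent $e\in M$ and some $m\in M_e^{\stab}$, witnessed by a letter $a_i$ with contexts $p_i,q_i$ as in the definition. The standard move is to build, for each generator of $M_e^{\stab}$, a word of the shape $r\,(\text{long power of an }e\text{-word})\,x\,(\text{long power})\,t$ realizing $eme$ on one side and $e$ on the other, with the two words agreeing on all $\Sigma_2[{<},\MOD]$-formulas of the given rank — the classical pumping argument for $\Sigma_2$ over $[{<}]$ (a universally quantified player can always match, an existential block can be copied) carried out on $\tau_n$-decorated words, where the $s\mid n$ divisibility guarantees the modular predicates cannot distinguish the two pumped words. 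This contradicts $h_L$ recognizing $L$ via the syntactic order. I would phrase the combinatorial core as: if $h_L$ violates \ref{ccc:sigma2mod} then $L$ is not closed under a certain $\Sigma_2[{<},\MOD]$-invariant substitution, invoking the known $\Sigma_2[{<}]$ pumping lemma on $T_n(A)^*$ and transporting it back. The delicate point here is ensuring the offsets line up so that both pumped words are well-formed decorations of $A^*$-words of the same length mod $n$; choosing all pumping exponents $\equiv 1\bmod n$ (or, more precisely, making each inserted block have length $\equiv 0\bmod n$) takes care of this.
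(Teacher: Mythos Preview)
Your plan is sound and lands on the same three implications as the paper, but the organization and one of the arguments differ. The paper does not run a cycle: it proves \ref{aaa:sigma2mod}$\Leftrightarrow$\ref{bbb:sigma2mod} once and for all by a generic transfer principle (Proposition~\ref{prp:fragment2starMOD}, valid for any fragment corresponding to a positive variety), and then proves \ref{bbb:sigma2mod}$\Rightarrow$\ref{ccc:sigma2mod} and \ref{ccc:sigma2mod}$\Rightarrow$\ref{bbb:sigma2mod} by two algebraic lemmas. In particular, the paper never runs an EF/pumping argument for \ref{aaa:sigma2mod}$\Rightarrow$\ref{ccc:sigma2mod}; instead it goes through \ref{bbb:sigma2mod} and, inside $N\in\geneq{x^\omega y x^\omega\leq x^\omega}\malcev\Jone$, uses the variety inequality $fN_ff\leq f$ directly: it builds from the $M_e^{\stab}$-witnesses a single word $u$ with $g(\tau_n(u))=f$ idempotent and $\alpha(\tau_n(a_1\cdots a_kv^j))\subseteq\alpha(\tau_n(u))$, so $g(\tau_n(u\,a_1\cdots a_kv^ju))\in fN_ff\leq f$, and then pushes down to $h_L$. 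Your EF route would work too (it is essentially the combinatorial content of that inequality), but the paper's factoring through \ref{bbb:sigma2mod} avoids redoing the $\Sigma_2[{<}]$ combinatorics and makes the modular bookkeeping (offsets, divisibility of $n$ by $s$) cleaner.

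Two small corrections on \ref{ccc:sigma2mod}$\Rightarrow$\ref{bbb:sigma2mod}. First, the target monoid is \emph{not} a quotient or submonoid of $M$: the paper builds $N=T_s(A)^*/{\sim}$ where well-formed $\sim$-classes are determined by (first offset, last offset, $h\circ\pi$-image) and all non-well-formed words collapse to a single bottom class; so $\abs{N}\leq s^2\abs{M}+2$. Your key lemma (``$f\in N$ idempotent $\Rightarrow$ $N_f$ lifts into $M_e^{\stab}$'') is exactly what the paper verifies. Second, Lemma~\ref{lem:strong*MOD} is not needed here: the constructed $g$ already satisfies the single-offset $*\VMOD$ condition $g(\tau_s(u))\preceq g(\tau_s(v))\Rightarrow h(u)\leq h(v)$ directly. (Lemma~\ref{lem:strong*MOD} is used only in the \ref{aaa:sigma2mod}$\Rightarrow$\ref{bbb:sigma2mod} direction of Proposition~\ref{prp:fragment2starMOD}, to absorb the varying offsets coming from different contexts $p$.)
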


We give the proof of Theorem~\ref{thm:sigma2mod} in the remainder of this section. We say that a subset~$\mathcal{F}$ of $\FO$ forms a \emph{fragment} if $\mathcal{F}$ is closed under conjunctions and disjunctions and if every atomic formula can be replaced by an arbitrary Boolean combination of atomic formulas. We write $\mathcal{F}[{<},\MOD]$ if arbitrary atomic formulas are allowed whereas $\mathcal{F}[<]$ indicates that only non-modular atomic formulas are considered. In particular, for every fragment $\mathcal{F}[<]$ we write $\mathcal{F}[{<},\MOD]$ for the fragment generated by $\mathcal{F}[<]$ when additionally allowing modular predicates. This notion of fragment is slightly more general than the one introduced in~\cite{KufleitnerL12icalp:short}. A fragment~$\mathcal{F}[<]$ corresponds to a variety $\varietyFont{V}$ if for every language $L$ the following two properties are equivalent: (1) $L$ is definable in $\mathcal{F}$, and (2) its syntactic monoid $\Synt(L)$ is in~$\varietyFont{V}$.

\begin{proposition}\label{prp:fragment2starMOD}
Let $L\subseteq A^*$ and suppose that the fragment $\mathcal{F}[{<}]$ corresponds to the  variety $\varietyFont V$. Then the syntactic homomorphism $h_L$ of $L$ is in $\varietyFont{V}*\VMOD$ if and only if $L$ is a $\mathcal{F}[{<},\MOD]$-definable language.
\end{proposition}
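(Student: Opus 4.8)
The plan is to connect definability in $\mathcal{F}[{<},\MOD]$ with definability in $\mathcal{F}[{<}]$ over the enriched alphabet $T_n(A)$, using the decoration map $\tau_n$ to "absorb" the modular predicates. The key observation is that a formula in $\mathcal{F}[{<},\MOD]$ using moduli dividing some fixed $n$ can be rewritten, over words $\tau_n(w)$, as a formula in $\mathcal{F}[{<}]$ over the alphabet $T_n(A)$: each atomic predicate $\MOD^n_i(x)$ becomes a disjunction of letter predicates $\bigvee_{a} \lambda(x) = (a,i)$, and $\LEN^n_i$ becomes a statement about the label of the last position, which a fragment (being closed under the stated atomic substitutions and Boolean combinations, and since $\mathcal{F}$ contains $\FO^1$-definable position statements like "$x$ is last") can express. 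Conversely, any $\mathcal{F}[{<}]$-formula over $T_n(A)$ interpreted on well-formed words $\tau_n(w)$ translates back to an $\mathcal{F}[{<},\MOD]$-formula over $A$ by replacing $\lambda(x) = (a,i)$ with $\lambda(x) = a \land \MOD^n_i(x)$. Thus $L$ is $\mathcal{F}[{<},\MOD]$-definable iff there is $n > 0$ and an $\mathcal{F}[{<}]$-definable language $K \subseteq T_n(A)^*$ with $\tau_n^{-1}(K) = L$ (here $\tau_n^{-1}$ picks out the well-formed preimages; one must also handle the constraint $\abs{u} \equiv \abs{v} \bmod n$ carefully, which is why the definition of $\varietyFont{V}*\VMOD$ in the excerpt restricts the implication to words of equal length modulo $n$).

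Next I would relate the algebraic side. For the direction ($h_L \in \varietyFont{V}*\VMOD$ $\implies$ $L$ is $\mathcal{F}[{<},\MOD]$-definable): unwinding the definition of $\varietyFont{V}*\VMOD$, there is $n>0$ and a homomorphism $g : T_n(A)^* \to N$ with $N \in \varietyFont{V}$ such that $g(\tau_n(u)) \leq g(\tau_n(v))$ and $\abs{u}\equiv\abs{v}\bmod n$ imply $h_L(u) \leq h_L(v)$. Let $K$ be the language recognized by $g$ consisting of those elements whose $\tau_n$-preimages, refined by length modulo $n$, map into ${\downarrow}h_L(L)$; since $N \in \varietyFont{V}$ and $\mathcal{F}[{<}]$ corresponds to $\varietyFont{V}$, the relevant language over $T_n(A)$ is $\mathcal{F}[{<}]$-definable. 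Pulling back along $\tau_n$ via the translation above (and intersecting with the finitely many $\LEN^n_i$-conditions, which are themselves $\mathcal{F}[{<},\MOD]$-expressible) yields an $\mathcal{F}[{<},\MOD]$-sentence for $L$. For the converse: if $L$ is $\mathcal{F}[{<},\MOD]$-definable, pick $n$ a common multiple of all moduli appearing in the defining sentence; the translation gives an $\mathcal{F}[{<}]$-sentence over $T_n(A)$ whose language $K$ has syntactic monoid $N$ in $\varietyFont{V}$ (by the correspondence hypothesis). Taking $g : T_n(A)^* \to N$ the syntactic homomorphism of $K$, one checks that $g(\tau_{j,n}(u)) \leq g(\tau_{j,n}(v))$ for all $j$, together with $\abs u \equiv \abs v \bmod n$, forces $h_L(u) \leq h_L(v)$; then Lemma~\ref{lem:strong*MOD} (with $\varietyFont{V}$ a positive variety) delivers $h_L \in \varietyFont{V}*\VMOD$.

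The main obstacle I expect is the careful bookkeeping at the interface between the offset $j$ and the length-modulo-$n$ class. Specifically: the translation between $A$-words and $T_n(A)$-words via $\tau_n$ uses offset $0$, but whether two $A$-words that agree under $g \circ \tau_n$ actually get the same $h_L$-value depends on their lengths modulo $n$ (because concatenation shifts positional labels), so a single homomorphism $g$ on $T_n(A)^*$ is not quite enough — one needs the family $g_j$ of Lemma~\ref{lem:strong*MOD}, or equivalently one must build $\LEN$-information into the picture so that the "length mod $n$" coordinate is recoverable. This is exactly the gap Lemma~\ref{lem:strong*MOD} was designed to bridge, so the real work is verifying its hypothesis: that the $\mathcal{F}[{<}]$-sentence obtained by translation is insensitive to offset in the right way, i.e. that $g(\tau_{j,n}(u)) \le g(\tau_{j,n}(v))$ for all $j$ together with equal length mod $n$ suffices. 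A secondary technical point is that fragments as defined here need not be closed under negation, so in the $\LEN$ and modular-predicate substitutions one must check that only positive Boolean combinations of atomic formulas are used; this is why the definition of fragment explicitly permits replacing an atomic formula by an \emph{arbitrary} Boolean combination of atomic formulas (the Boolean combination is internal to the atom-replacement and does not violate the negation-free structure of the surrounding $\Sigma_2$-style formula).
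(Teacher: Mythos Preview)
Your overall approach matches the paper's: translate $\mathcal{F}[{<},\MOD]$-formulas over $A$ into $\mathcal{F}[{<}]$-formulas over $T_n(A)$ via the decoration $\tau_n$, use the correspondence between $\mathcal{F}[{<}]$ and $\varietyFont{V}$ on the enriched alphabet, and close the forward direction with Lemma~\ref{lem:strong*MOD}. The direction from $\varietyFont{V}*\VMOD$ to definability is essentially as you sketch (the paper writes it out by defining, for each residue $i$, the language $K_i = g^{-1}\big({\downarrow}g(\{\tau_n(v) : v\in L,\,\abs{v}\equiv i \bmod n\})\big)$ and taking the disjunction over $\LEN^n_i$).

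There is, however, a genuine gap in your handling of the $0$-ary predicate $\LEN^n_i$. You propose to translate it to ``the last position carries a label in $A\times\{i\}$'' and justify this by asserting that a fragment ``contains $\FO^1$-definable position statements like `$x$ is last'\,''. But the paper's notion of fragment only guarantees closure under conjunction, disjunction, and replacement of atomic formulas by Boolean combinations of atomic formulas; nothing forces ``$x$ is last'' (a $\Pi_1$-statement requiring a universal quantifier) to be expressible --- $\Sigma_1[{<}]$ is a fragment in this sense and cannot say it. The paper's fix is to construct, for each length class $j\in\{1,\ldots,n\}$, a separate sentence $\varphi'_j$ in which every occurrence of $\LEN^n_i$ is replaced by $\top$ if $i=j$ and by $\bot$ otherwise; this replacement is purely propositional and hence legal in any fragment. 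One then takes $g=\prod_j g_j$, the product of the syntactic homomorphisms of the languages $L(\varphi'_j)$, and verifies the hypothesis of Lemma~\ref{lem:strong*MOD} for this single $g$. Note that there are two distinct families at work: the length-class index $j$ (absorbing $\LEN$ into the product), and the offset index inside Lemma~\ref{lem:strong*MOD} (handling the shift of $\tau_{j,n}$ under concatenation). Your obstacle paragraph correctly flags the offset bookkeeping but conflates it with the $\LEN$ issue; Lemma~\ref{lem:strong*MOD} bridges only the former, and the $\LEN$ predicates must be eliminated \emph{before} one can claim the translated sentence lies in $\mathcal{F}[{<}]$.
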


\begin{proof}
Let $\varphi$ be a sentence in $\mathcal F[{<},\MOD]$ which defines $L$.
 We can assume that there is a single integer $n > 0$ such that all modular predicates in $\varphi$ are using the modulus $n$. Let $j \in \smallset{1,\ldots,n}$.
 We replace every occurrence of the atomic predicate $\lambda(x) = a$ in $\varphi$ by $\lambda(x) \in \smallset{a} \times \oneset{1,\ldots,n}$. Similarly, we substitute predicates of the form $x \equiv i \bmod n$ by $\lambda(x) \in A \times \smallset{i}$. Further we replace $\LEN^n_i$ by $\ltrue$ if $i = j$ and by~$\lfalse$ otherwise.
The resulting $\mathcal F[{<}]$-sentence $\varphi'_j$ defines a language $K_j \subseteq T_n(A)^*$. In particular, the syntactic monoid of $K_j$ is in $\varietyFont V$. Let~$g_{j}$ be the syntactic homomorphism of $K_j$ and let $g = \prod_{j=1}^n g_j$ be the homomorphism defined by $g(u) = (g_{1}(u), \ldots, g_{n}(u))$. 
Consider words $u,v \in A^*$ with $\abs{u} \equiv \abs{v} \bmod n$ and  
$g\big(\tau_{i,n}(u)\big) \leq g\big(\tau_{i,n}(v)\big)$ for all integers $i$. 
Suppose $pvq \models \varphi$ and let $j \equiv \abs{pvq} \bmod n$. Then by construction of $\varphi'_j$ we have $\tau_n(pvq) \models \varphi'_j$. Since 
\begin{align*}
g\big(\tau_n(puq)\big) 
&= g\big(\tau_n(p) \tau_{\abs{p},n}(u) \tau_{\abs{pu},n}(q)\big) \\
&\leq g\big(\tau_n(p) \tau_{\abs{p},n}(v) \tau_{\abs{pu},n}(q)\big) 
= g\big( \tau_n(pvq) \big),
\end{align*}
we conclude $\tau_n(puq) \models \varphi'_j$. Again by construction of $\varphi'_j$ we see that $puq \models \varphi$. This shows $h_L(u) \leq h_L(v)$. By Lemma~\ref{lem:strong*MOD} we conclude $h_L \in \varietyFont{V} * \VMOD$.

For the converse let $h_L \in  \varietyFont{V}*\VMOD$. 
Then there exists an integer $n > 0$ and a homomorphism $g : T_n(A)^* \to N$ with $N\in \varietyFont{V}$ such that $g(\tau_n(u)) \leq g(\tau_n(v))$
 implies $h(u) \leq h(v)$
for all $u,v \in A^*$ with $\abs{u} \equiv \abs{v} \bmod n$. 
For every $i \in \oneset{1,\ldots,n}$ we define
\begin{equation*}
  K_i 
  = g^{-1}\Big({\downarrow}g\big(\set{\tau_n(v)}{v\in L, \;
    \abs{v} \equiv i \bmod n}\big)\Big).
\end{equation*}
Since $N \in \varietyFont{V}$ and since $\varietyFont{V}$ corresponds to $\mathcal{F}[{<}]$, there exist formulas 
$\varphi_i' \in \mathcal{F}[{<}]$ with $K_i = L(\varphi_i')$. For every formula $\varphi_i'$ we construct 
$\varphi_i \in \mathcal{F}[{<},\MOD]$ by replacing every atomic proposition $\lambda(x) = (a,j)$ by $\lambda(x) = a \,\land\, x\equiv j \bmod n$. We set $\varphi = \bigvee_i \big(\varphi_i \land \LEN_i^n\big)$. It remains to show $L = L(\varphi)$. 
Consider $u \in A^*$ with $\abs u \equiv i \bmod n$. Then
\begin{align*}
u \in L(\varphi) &\,\iff\, u \in L(\varphi_i) 
\\&\,\iff\, \tau_n(u) \in L(\varphi_i') = K_i
\\&\,\iff\, \exists v\in L\colon g\big(\tau_n(u)\big) \leq g\big(\tau_n(v)\big) \text{ and } \abs v \equiv i \bmod n\\
&\,\iff\, u \in L.
\end{align*}
This shows $L = L(\varphi)$ and thus $L$ is $\mathcal{F}[{<},\MOD]$-definable.
\end{proof}

Proposition~\ref{prp:fragment2starMOD} shows that the first two conditions in Theorem~\ref{thm:sigma2mod} are equivalent.
The characterization in terms of $\big(\geneq{x^\omega y x^\omega \leq x^\omega} \malcev \Jone\big) * \VMOD$ involves some integer $n$ such that positions are counted modulo~$n$. In particular, this characterization of $\Sigma_2[{<},\MOD]$ does not immediately yield decidability.
Roughly speaking, the following lemma implies that counting modulo any stability index is sufficient.

\begin{lemma}\label{lem:starMODtoQS2}
Let $L\subseteq A^*$ be recognizable in $\big(\geneq{x^\omega y x^\omega \leq x^\omega} \malcev \Jone\big) * \VMOD$. Let $h_L : A^* \to M$ be the syntactic homomorphism of $L \subseteq A^*$ and suppose $h_L(A^s) = h_L(A^{2s})$. Then $h_L$ satisfies $e \smash{M_e^{\stab}} e \leq e$ for all idempotents $e$.
\end{lemma}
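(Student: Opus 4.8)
The plan is to derive the condition $eM_e^{\stab}e\le e$ from the Mal'cev‑product structure of a witness homomorphism for $L$, by exhibiting for each idempotent $e$ and each $m\in M_e^{\stab}$ two words on which the defining implication of $\varietyFont{V}*\VMOD$ can be applied. Write $\varietyFont{V}=\geneq{x^\omega y x^\omega \le x^\omega}\malcev\Jone$, so $N\in\varietyFont{V}$ means $fN_ff\le f$ for every idempotent $f\in N$, where $N_f$ is generated by $\set{a\in N}{f\in NaN}$. By hypothesis $L$ is recognised by a homomorphism in $\varietyFont{V}*\VMOD$, and since $h'(u)\le h'(v)$ implies $h_L(u)\le h_L(v)$ for any homomorphism $h'$ recognising $L$, the syntactic homomorphism $h_L$ lies in $\varietyFont{V}*\VMOD$; fix $n>0$ and $g\colon T_n(A)^*\to N$ with $N\in\varietyFont{V}$ such that $g(\tau_n(u))\le g(\tau_n(v))$ implies $h_L(u)\le h_L(v)$ whenever $\abs u\equiv\abs v\bmod n$. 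Passing to the coarser witness obtained by replacing $n$ by $ns$ and $g(a,i)$ by $g(a,i\bmod n)$, I may assume $s\mid n$; also fix $\omega\ge 1$ with $x^\omega$ idempotent in both $M$ and $N$. It then suffices to build $U,V$ with $h_L(U)=eme$, $h_L(V)=e$, $\abs U\equiv\abs V\bmod n$ and $g(\tau_n(U))\le g(\tau_n(V))$.

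If $m=1$ this is immediate, so write $m=h_L(a_1\cdots a_k)$ with $k$ a positive multiple of $s$ and, for each $i$, witnesses $p_i,q_i$ with $\abs{p_i}\equiv i-1$, $\abs{q_i}\equiv -i\bmod s$ and $h_L(p_ia_iq_i)=e$. As $\abs{p_1a_1q_1}$ is a positive multiple of $s$, we get $e\in h_L(A^s)=h_L(A^{ts})$ for all $t\ge 1$, so --- using $s\mid n$ --- for every $r\equiv 0\bmod s$ there is a word with $h_L$‑image $e$ (call it an $e$‑word) whose length is $\equiv r\bmod n$. Hence, for any letter $c$ occurring in some $e$‑word $z=\gamma c\delta$ and any residue $\rho$ compatible modulo $s$ with the position of $c$ in $z$, sandwiching $z$ between $e$‑words of appropriate lengths produces an $e$‑word of length $\equiv 0\bmod n$ in which $c$ sits at a position $\equiv\rho\bmod n$. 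I use this for the pairs $(a_i,i)$ (with $z=p_ia_iq_i$) and for the pairs $(Z[p],\kappa+p)$, where $\kappa:=k\bmod n$ and $Z$ is a fixed $e$‑word of length $\equiv-\kappa\bmod n$.

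Concatenating one such $e$‑word for each of these finitely many pairs gives an $e$‑word $Q$ with $\abs Q\equiv 0\bmod n$ in which, for every pair $(c,\rho)$, the letter $c$ occurs at a position $\equiv\rho\bmod n$. Set $f:=g(\tau_n(Q))^\omega$, an idempotent of $N$. Since $\tau_n(Q)$ contains each of the letters $(a_i,i\bmod n)$ and $(Z[p],\kappa+p\bmod n)$ of $T_n(A)$, and $f$ is a power of $g(\tau_n(Q))$, we have $f\in N\,g(\ell)\,N$ for every such $\ell$; thus the single‑letter images $g(a_i,i\bmod n)$ and $g(Z[p],\kappa+p\bmod n)$ all lie in the generating set of $N_f$. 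Now take
\[
  U:=Q^\omega\,(a_1\cdots a_k)\,Z\,Q^\omega,\qquad V:=Q^\omega .
\]
Then $h_L(U)=e\cdot m\cdot e\cdot e=eme$, $h_L(V)=e$, and $\abs U\equiv\abs V\equiv 0\bmod n$ (as $\abs{Q^\omega}\equiv 0$, $\abs Z\equiv-\kappa$ and $k\equiv\kappa\bmod n$). Since $\abs{Q^\omega}\equiv 0\bmod n$, the four blocks of $U$ are placed with offsets $\equiv 0,0,\kappa,0$ modulo $n$, so
\[
  g(\tau_n(U))=f\cdot g(\tau_n(a_1\cdots a_k))\cdot g(\tau_{\kappa,n}(Z))\cdot f .
\]
The two inner factors are products of the single‑letter images listed above, hence lie in $N_f$; by $fN_ff\le f$ this gives $g(\tau_n(U))\le f=g(\tau_n(V))$, and the implication yields $h_L(U)\le h_L(V)$, \ie, $eme\le e$, which is the asserted property of $h_L$.

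I expect the real difficulty to be the clash between the modulus $s$ built into $M_e^{\stab}$ and the modulus $n$ of the witness, aggravated by the fact that $fN_ff\le f$ needs the \emph{same} idempotent $f$ flanking the middle block, whereas $a_1\cdots a_k$ has length $\equiv 0\bmod s$ but usually $\not\equiv 0\bmod n$, so its two ends would otherwise sit at different residues. The construction copes with this by (i) passing to a witness with $s\mid n$, (ii) appending the compensating $e$‑word $Z$ so that the combined middle block has length $\equiv 0\bmod n$ and both flanking copies of $Q^\omega$ start at offset $\equiv 0$, and (iii) building $Q$ so that $f$ already ``sees'' every letter that can occur inside that block, which is precisely what confines its $g$‑image to $N_f$. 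The remaining steps --- existence of $e$‑words of prescribed lengths, the arithmetic of the residues, and checking that the blocks of $U$ start where claimed --- are routine calculations that I would not spell out.
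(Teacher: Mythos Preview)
Your proof is correct and follows essentially the same approach as the paper's: build an $e$-word whose $\tau_n$-decoration already contains every letter that can occur in the middle block, so that the image of the sandwiched word lands in $fN_ff\le f$, and then transfer the inequality to $M$ via the defining implication of $\varietyFont{V}*\VMOD$. The differences are cosmetic: you first observe that $h_L$ itself lies in $\varietyFont{V}*\VMOD$ and work with it directly, whereas the paper keeps the recognising homomorphism $h'$ and only passes to $h_L$ at the very end; your padding word $Z$ and your block $Q$ play the role of the paper's $v^j$ and $u=(u_1\cdots u_kv^n)^n$, constructed with slightly different bookkeeping but to the same effect.
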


\begin{proof}
Let $h' : A^* \to M'$ be a homomorphism in $\big(\geneq{x^\omega y x^\omega \leq x^\omega} \malcev \Jone\big) * \VMOD$ which recognizes~$L$. Then there exists an integer $n$ and a homomorphism $g : T_n(A)^* \to N$ with $N \in \geneq{x^\omega y x^\omega \leq x^\omega} \malcev \Jone$ such that for all $u,v \in A^*$ with $\abs{u} \equiv \abs{v} \bmod n$ the following implication holds:
\begin{equation*}
  g\big(\tau_n(u)\big) \leq g\big(\tau_n(v)\big) \; \Rightarrow \; h'(u) \leq h'(v) \text{.}
\end{equation*}
If $n$ is a divisor of $m$, then the homomorphism $\pi : T_m(A)^* \to T_n(A)^*$ induced by the mapping $\pi(a,i \bmod m) = (a, i \bmod n)$ satisfies $\pi(\tau_m(u)) = \tau_n(u)$. 
Therefore, we can assume that $s$ is a divisor of $n$ and that $x^n$ is idempotent for all $x \in N$. 
Let $e \in h_L(A^s)$ be idempotent. 
Consider $a_1 \cdots a_k \in \big(A^s\big)^*$ such that for every letter $a_i \in A$ there exist words $p_i$ and $q_i$ with 
$\abs{p_i} \equiv i-1 \bmod s$, $\abs{q_i} \equiv -i \bmod s$ and $h_L(p_i a_i q_i) = e$. 
Choose $v \in A^s$ such that $h_L(v) = e$. Let $u_i = v^{j_i} p_i a_i q_i v^{j'_i}$ for some integers $j_i,\, j'_i$ 
such that $\abs{v^{j_i} p_i} \equiv i-1 \bmod n$ and $\abs{q_i v^{\smash{j'_i}\vphantom{j}}} \equiv -i \bmod n$. 
We set $u = (u_1 \cdots u_k v^n)^n$. Note that $h_L(u) = e$ and that $g(\tau_n(u)) = f$ is idempotent.
Choose $0 \leq j <n$ with $k + j\abs{v} \equiv 0 \bmod n$. By construction of $u$, we have $\alpha\big(\tau_n(a_1 \cdots a_k v^j)\big) \subseteq \alpha\big(\tau_n(u)\big)$ and thus $g\big( \tau_n(a_1 \cdots a_k v^j) \big) \in N_f$. Since $N \in \geneq{x^\omega y x^\omega \leq x^\omega} \malcev \Jone$, we have 
\begin{equation*}
  g \big( \tau_n(u\, a_1 \cdots a_k v^j u) \big)
  = g \big(\tau_n(u) \tau_n(a_1 \cdots a_k v^j) \tau_n(u) \big) \in f N_f f \leq f = g(\tau_n(u))\,.
\end{equation*}
It follows $h'(u\, a_1 \cdots a_k v^j u) \leq h'(u)$. Suppose $puq \in L$ for some words $p,q \in A^*$. Then $p u a_1 \cdots a_k v^j u q \in L$ since $h'$ recognizes $L$. This shows 
\begin{equation*}
  e \hspace*{1pt} h_L(a_1 \cdots a_k) \hspace*{1pt} e = h_L(u\, a_1 \cdots a_k v^j u) \leq h_L(u) = e\,.
\end{equation*}
Since all elements in $M_e^{\smash{\stab}}$ are of the form $h_L(a_1 \cdots a_k)$ with $a_1 \cdots a_k$ as above, the syntactic homomorphism $h_L$ satisfies $e M_e^{\stab} e \leq e$ for all $e^2 = e$. 
\end{proof}


\begin{lemma}\label{lem:QS2tostarMOD}
Let $h : A^* \to M$ be a homomorphism with stability index~$s$ such that $e M_e^{\stab} e \leq e$ for all $e^2 = e$. Then $h \in \big(\geneq{x^\omega y x^\omega \leq x^\omega} \malcev \Jone\big) * \VMOD$.
\end{lemma}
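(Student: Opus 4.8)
The plan is to verify the definition of $\varietyFont{V}*\VMOD$ directly, where $\varietyFont{V} = \geneq{x^\omega y x^\omega \le x^\omega} \malcev \Jone$ and the modulus is $n = s$. (Replacing $M$ by $h(A^*)$, we may assume $h$ surjective.) To this end I construct a homomorphism $\beta \colon T_s(A)^* \to N$ into a finite ordered monoid $N$ whose elements encode, for a well-formed word $\tau_{j,s}(u)$, the triple $\bigl(\,\overline{j}\,,\, h(u)\,,\, \overline{j+|u|}\,\bigr)$ — the residue where the factor starts, its $h$-image, and the residue where it ends — where $\overline{k}$ denotes the residue of $k$ modulo $s$. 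Formally, $N$ consists of a zero $\bot$, a fresh identity $1_N$, and all triples $(j,m,j')$ with $j,j' \in \mathbb{Z}/s\mathbb{Z}$, $m \in M$; the product of $(j,m,j')$ with $(k,m',k')$ is $(j,mm',k')$ if $j'=k$ and $\bot$ otherwise, while $\bot$ is absorbing and $1_N$ is neutral. We put $\beta(a,i) = \bigl(\,\overline{i-1}\,,\, h(a)\,,\, \overline{\,i\,}\,\bigr)$, so that $\beta(\tau_{j,s}(u)) = \bigl(\,\overline{j}\,,\, h(u)\,,\, \overline{j+|u|}\,\bigr)$ for $u \neq \varepsilon$ and $\beta(\varepsilon) = 1_N$. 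Finally order $N$ by making $\bot$ the least element, $1_N$ isolated, and $(j,m,j') \le (j,m',j')$ iff $m \le m'$ in $M$. It is routine that the multiplication is associative, that $\beta$ is a well-defined monoid homomorphism, that the order is compatible with multiplication, and that $N \coloneq \beta(T_s(A)^*)$ is finite.

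Two verifications are immediate. For the implication defining $\varietyFont{V}*\VMOD$: if $\beta(\tau_s(u)) \le \beta(\tau_s(v))$ then, treating $u = \varepsilon$ separately, both sides are non-zero triples, so their outer coordinates agree — which gives $|u| \equiv |v| \bmod s$ — and their middle coordinates satisfy $h(u) \le h(v)$, exactly as required. For the variety membership, by the characterization recalled above it suffices to show $f N_f f \le f$ for every idempotent $f \in N$. The idempotents $\bot$ (when it belongs to $N$) and $1_N$ are trivial: one has $N_\bot = N$ with $\bot N \bot = \{\bot\}$, and $N_{1_N} = \{1_N\}$, so $f N_f f = \{f\}$ in both cases.

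The substantial case is an idempotent $f = (j, e, j)$ with $e^2 = e$ in $M$. Since every element of $N$ is a product of generators $\beta(a,i)$, the submonoid $N_f$ is generated by the \emph{$f$-related} letter images, i.e.\ those $\beta(a,i)$ with $f \in N\,\beta(a,i)\,N$. Here it is essential that $N$ is the \emph{image} of $\beta$: a witness $b\,\beta(a,i)\,c = f$ forces $b = \beta(\tau_{j,s}(p))$ and $c = \beta(\tau_{i,s}(q))$ for genuine words $p, q \in A^*$, and comparing triples yields $h(paq) = e$, $|p| \equiv i - 1 - j \bmod s$ and $|q| \equiv j - i \bmod s$. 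Now take $x = \beta(y_1,i_1)\cdots\beta(y_r,i_r) \in N_f$ with every factor $f$-related. If the residues fail to chain when forming $f x f$ the value is $\bot \le f$; otherwise, chaining directly off $f = (j,e,j)$ forces $i_\ell \equiv j + \ell \bmod s$ for $\ell = 1,\dots,r$ together with $r \equiv 0 \bmod s$, and gives $f x f = \bigl(\,j\,,\ e\,h(y_1\cdots y_r)\,e\,,\ j\,\bigr)$. Substituting $i_\ell \equiv j + \ell$ into the conditions above produces words $p_\ell, q_\ell$ with $|p_\ell| \equiv \ell - 1 \bmod s$, $|q_\ell| \equiv -\ell \bmod s$ and $h(p_\ell y_\ell q_\ell) = e$; hence $y_1 \cdots y_r$ witnesses that $h(y_1 \cdots y_r) \in M_e^{\stab}$. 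The hypothesis $e M_e^{\stab} e \le e$ then gives $e\,h(y_1\cdots y_r)\,e \le e$, so $f x f \le f$. This establishes $N \in \geneq{x^\omega y x^\omega \le x^\omega} \malcev \Jone$, and with $n = s$ and $g = \beta$ the definition of $\varietyFont{V}*\VMOD$ yields $h \in \bigl(\geneq{x^\omega y x^\omega \le x^\omega} \malcev \Jone\bigr) * \VMOD$.

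The step I expect to be the main obstacle is this modular bookkeeping: one must check that the $f$-related generators, once they are forced inside $f N_f f$ to chain their residues immediately after the idempotent $f = (j,e,j)$, necessarily occur at offsets $i_\ell \equiv j + \ell$, so that the global offset $j$ cancels and the residue conditions on $p_\ell, q_\ell$ become exactly ``$|p_i| \equiv i-1 \bmod s$, $|q_i| \equiv -i \bmod s$'' from the definition of $M_e^{\stab}$. A secondary but crucial point is to work with $N = \beta(T_s(A)^*)$ and not with the full abstract monoid of triples: only inside the image are the witnesses $b,c$ guaranteed to be $\beta$-images of genuine words, and this is precisely what ties the $f$-related letters to the \emph{stable} submonoid $M_e^{\stab}$ rather than to all of $M_e$ — the feature that makes the ``$* \VMOD$'' statement go through.
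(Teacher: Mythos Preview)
Your proof is correct and follows essentially the same approach as the paper: the paper builds the very same monoid $N$ as a quotient $T_s(A)^*/{\sim}$ via a rewriting relation (nonempty well-formed words are $\sim$-equivalent iff they share first offset, last offset, and $h\circ\pi$-image; all non-well-formed words form a single class; the empty word is its own class), which is isomorphic to your explicit monoid of triples together with $\bot$ and $1_N$. The verification that $fN_ff \le f$ by recovering witnesses $p_\ell,q_\ell$ with the correct residues and thereby landing in $M_e^{\stab}$ is carried out in the paper in exactly the same way.
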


\begin{proof}
Let $\pi : T_s(A)^* \to A^*$ be the canonical projection. 
We say that a letter $(a,i) \in T_s(A)$ has \emph{offset} $i$.
We define a string rewriting system $\Longrightarrow$ over the alphabet $T_s(A)$ as follows. We set 
$v \Longrightarrow u$ for $u,v \in T_s(A)^+$ if one of the following conditions is satisfied:
\begin{enumerate}
\item $u$ is not well-formed or 
\item both $u$ and $v$ are well-formed, start and end with the same offset and we have $h(\pi(u)) \leq h(\pi(v))$.
\end{enumerate}
Note that $v \Longrightarrow u$ implies $pvq \Longrightarrow puq$. Moreover, $\Longrightarrow$ is reflexive and transitive.
If $v \Longrightarrow u$ with $u$ well-formed, then $v$ is also well-formed.
Let $u \sim v$ if both $u \Longrightarrow v$ and $v \Longrightarrow u$. The relation $\sim$ forms a congruence on $T_s(A)^*$. Every $\sim$-class either contains only well-formed words or it contains only non-well-formed words.
Moreover, there is only one class of non-well-formed words. Every class of nonempty well-formed words is uniquely determined by the offset of the first letter, the offset of the last letter, and the image under $h \circ \pi$. Therefore, the index of $\sim$ is at most $s^2 \abs{M} + 2$; note that the empty word has its own class.
If $u$ is well-formed, then $h(\pi(u)) = h(\pi(v))$ for all words $v$ with $u \sim v$. In particular, the image $h(\pi([u]))$ of a well-formed $\sim$-class $[u]$ is well-defined.
Let $N = {T_s(A)^*}\! /{\sim}$. The relation $\Longrightarrow$ induces a partial order relation $\preceq$ on $N$, \ie, we set $[u] \preceq [v]$ if $v \Longrightarrow u$. By $g : T_s(A)^* \to N$ we denote the natural projection.

Let $f\in N$ be idempotent and let $y \in N_f$. We want to show $fyf \preceq f$.
If $fyf$ is not well-formed, then $fyf \preceq f$ by the first type of rules in the definition of~$\Longrightarrow$. 
Hence we may assume that $fyf$ is a class of well-formed words. 
Since $f^2 = f$, the length of all words in $g^{-1}(f)$ is divisible by $s$. Let $e = h(\pi(f))$ and $x = h(\pi(y))$. The element $e \in M$ is idempotent and we have $x \in M_e^{\stab}$. To see the latter property, suppose $g(b_1 \cdots b_k) = y$ for $b_i \in T_s(A)$. Since all words in $g^{-1}(fyf)$ are well-formed, we have $k \equiv 0 \bmod s$ and $b_1 \cdots b_k$ starts and ends with the same offsets as the words in $g^{-1}(f)$.
Let $p_i, q_i \in T_s(A)^*$ with $g(p_i b_i q_i) = f$. Since $p_i b_i q_i$ is well-formed, we have $\abs{p_i} \equiv i-1 \bmod s$ and $\abs{q_i} \equiv -i \bmod s$. Applying the homomorphism $\pi$ yields $x = h(\pi(b_1 \cdots b_k)) \in M_e^{\stab}$. It follows $exe \leq e$. By definition of $\Longrightarrow$ we conclude $v \Longrightarrow u$ for all $v \in g^{-1}(f)$ and all $u \in g^{-1}(fyf)$. This shows $fyf \preceq f$ as desired. Hence $N \in \geneq{x^\omega y x^\omega \leq x^\omega} \malcev \Jone$.

We finally show that $g(\tau_s(u)) \preceq g(\tau_s(v))$ implies $h(u) \leq h(v)$ for all $u,v \in A^*$ with $\abs u \equiv \abs v \bmod s$.
To this end, we need to prove that 
$\tau_s(v) \Longrightarrow \tau_s(u)$ implies $h(u) \leq h(v)$. However, by definition, this holds since 
$\tau_s(u)$ is well-formed.
\end{proof}

The proof technique used in Lemma~\ref{lem:QS2tostarMOD} is quite general, it works as soon as the (ordered) syntactic monoid is available for recognizing the non-wellformed words.
We can now combine the results in this section to obtain a proof of Theorem~\ref{thm:sigma2mod}.

\begin{proof}[Proof of Theorem~\ref{thm:sigma2mod}]
The first-order fragment $\Sigma_2[{<}]$ corresponds to the positive variety $\geneq{x^\omega y x^\omega \leq x^\omega} \malcev \Jone$, see \eg~\cite{DiekertGastinKufleitner08:short}. Therefore, the equivalence of ``\ref{aaa:sigma2mod}'' and ``\ref{bbb:sigma2mod}'' follows by Proposition~\ref{prp:fragment2starMOD}. The implications from ``\ref{bbb:sigma2mod}'' to ``\ref{ccc:sigma2mod}'' is Lemma~\ref{lem:starMODtoQS2}, and Lemma~\ref{lem:QS2tostarMOD} shows that~``\ref{ccc:sigma2mod}'' implies ``\ref{bbb:sigma2mod}''.
\end{proof}

\enlargethispage{1\baselineskip}

Theorem~\ref{thm:sigma2mod} and its dual version for $\Pi_2[{<},\MOD]$ immediately lead to the following effective characterization of $\Delta_2[{<},\MOD]$-definable languages.

\begin{corollary}\label{cor:delta2mod}
Let $h_L : A^* \to M$ be the syntactic homomorphism of $L \subseteq A^*$ and let $s \geq 1$ satisfy $h_L(A^s) = h_L(A^{2s})$. Then the following conditions are equivalent:
\begin{enumerate}
\item\label{aaa:delta2mod} $L$ is definable in $\Delta_2[{<},\MOD]$.
\item\label{bbb:delta2mod} $h_L$ satisfies $e M_e^{\stab} e = e$ for all idempotents $e$ in $M$.
\end{enumerate}
\end{corollary}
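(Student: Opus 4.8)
The plan is to deduce Corollary~\ref{cor:delta2mod} from Theorem~\ref{thm:sigma2mod} together with its dual for $\Pi_2[{<},\MOD]$, using the fact that $\Delta_2[{<},\MOD]$-definability is by definition the conjunction of $\Sigma_2[{<},\MOD]$- and $\Pi_2[{<},\MOD]$-definability. First I would observe that $L$ is definable in $\Pi_2[{<},\MOD]$ if and only if its complement $A^* \setminus L$ is definable in $\Sigma_2[{<},\MOD]$. The syntactic homomorphism of $A^* \setminus L$ is $h_L$ composed with the order-reversal on $M$; that is, the syntactic monoid of the complement is $(M,{\geq})$ with the same underlying homomorphism. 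Applying Theorem~\ref{thm:sigma2mod}\ref{ccc:sigma2mod} to the complement therefore yields: $L$ is definable in $\Pi_2[{<},\MOD]$ if and only if $h_L$ satisfies $e M_e^{\stab} e \geq e$ for all idempotents $e$ in $M$ (the inequality is reversed because we passed to the dual order). Here one has to check that the construction of $M_e^{\stab}$ is insensitive to the order on $M$: its definition refers only to equalities $h(p_i a_i q_i) = e$ and to length congruences, so $M_e^{\stab}$ is literally the same submonoid for $h_L$ and for its order-dual. The same applies to the stability index condition $h_L(A^s) = h_L(A^{2s})$.

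Combining the two characterizations, $L$ is definable in $\Delta_2[{<},\MOD]$ if and only if $h_L$ satisfies both $e M_e^{\stab} e \leq e$ and $e M_e^{\stab} e \geq e$ for all idempotents $e$, which is exactly $e M_e^{\stab} e = e$ for all idempotents $e$. This establishes the equivalence of ``\ref{aaa:delta2mod}'' and ``\ref{bbb:delta2mod}''. For decidability one notes, as in the discussion preceding Theorem~\ref{thm:sigma2mod}, that $M_e^{\stab}$ is effectively computable (only words $p_i, q_i$ of length below $2s$ need be considered) and that the condition $e M_e^{\stab} e = e$ can be checked by a finite search over idempotents $e$ and elements of $M_e^{\stab}$.

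The only mild subtlety — and the one place where a reader might want a sentence of justification rather than a bare appeal to duality — is the claim that passing to the dual order on $M$ indeed corresponds, at the level of recognized languages, to complementation, and that $\Pi_2$ is the negation-dual of $\Sigma_2$ in the precise form needed. The first point is standard: if $h_L : A^* \to (M, {\leq})$ recognizes $L$ via $L = h_L^{-1}({\downarrow} h_L(L))$, then $h_L : A^* \to (M,{\geq})$ recognizes $A^* \setminus L$, and $(M,{\geq})$ is the syntactic ordered monoid of $A^* \setminus L$; the second point is recorded in the preliminaries, where it is stated that a formula is in $\Pi_2[{<},\MOD]$ if and only if it is equivalent to the negation of a $\Sigma_2[{<},\MOD]$-formula. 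Given these, the corollary is a formal consequence. I expect no genuine obstacle here; the proof is a short deduction, and the bulk of the work has already been done in Theorem~\ref{thm:sigma2mod}.
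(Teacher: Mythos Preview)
Your proposal is correct and follows essentially the same approach as the paper: apply Theorem~\ref{thm:sigma2mod} to obtain $e M_e^{\stab} e \leq e$ from $\Sigma_2[{<},\MOD]$-definability, apply its order-dual to obtain $e M_e^{\stab} e \geq e$ from $\Pi_2[{<},\MOD]$-definability, and combine. The paper's proof records exactly this chain of equivalences in a single sentence; your write-up simply makes explicit the duality step (complementation corresponds to reversing the syntactic order, $M_e^{\stab}$ and the stability index are order-insensitive) that the paper leaves implicit in the phrase ``its dual version for $\Pi_2[{<},\MOD]$''.
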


\begin{proof}
The language $L$ is $\Delta_2[{<},\MOD]$-definable if, and only if, it is definable in both $\Sigma_2[{<},\MOD]$ and $\Pi_2[{<},\MOD]$ if, and only if, $e M_e^{\stab} e \leq e$ and $e \leq e M_e^{\stab} e$ for all idempotents $e\in M$ if, and only if, $e M_e^{\stab} e = e$ for all idempotents $e$.
\end{proof}

\section{The fragment \textrm{FO}$\mathbf{^2}$ with modular predicates}
\label{sec:fo2mod}

Dartois and Paperman have shown that a language is definable in two-variable first-order logic $\FO^2[{<},\MOD]$ with modular predicates if and only if its syntactic homomorphism is in $\varietyFont{Q}\DA$~\cite{dartoispaperman13:short}, thereby showing that it is decidable whether or not a given language is definable in $\FO^2[{<},\MOD]$.
The main result of this section establishes a new effective algebraic characterization of $\FO^2[{<},\MOD]$. Since this characterization is the same as the one for $\Delta_2[{<},\MOD]$ in Corollary~\ref{cor:delta2mod}, this immediately implies that $\FO^2[{<},\MOD]$ and $\Delta_2[{<},\MOD]$ have the same expressive power. This extends the result of Th{\'e}rien and Wilke that $\FO^2[<]$ and $\Delta_2[<]$ without modular predicates define the same languages~\cite{tw98stoc:short}.

The equivalence of $\FO^2[{<},\MOD]$ and $\Delta_2[{<},\MOD]$ does not immediately follow from Proposition~\ref{prp:fragment2starMOD} and the Th{\'e}rien-Wilke result for two reasons. First, formally $\Delta_2$ is not a fragment. A typical example which illustrates this problem is the language $L = A^* ab A^*$ defined by the following $\Sigma_2[{<}]$-sentence:
\begin{equation*}
  \varphi \,:=\, \exists x \exists y \forall z\colon x<y \land \lambda(x) = a \land \lambda(y) = b \land (z \leq x \lor y \leq z)
\end{equation*}
If $A = \{a,b\}$, then $L$ is definable in~$\Pi_2[<]$; and if $A = \{a,b,c\}$, then $L$ is not definable in~$\Pi_2[<]$. Therefore, saying whether the sentence $\varphi$ is in $\Delta_2[<]$ is not well-defined.
Second, the operation $\varietyFont{V} \mapsto \varietyFont{V} * \varietyFont{MOD}$ is not compatible with intersection, see Example~\ref{exa:intersectionMOD} below. Therefore, applying Proposition 4.2 to $\Sigma_2$ and $\Pi_2$ separately does immediately yield a characterization of $\Delta_2$.

Dartois and Paperman proved that the languages in $\FO^2[{<},\MOD]$ are exactly the so-called unambiguous modular polynomials. As a byproduct, we refine this result by showing that modular determinism and co-determinism can be used as the sole reason of unambiguity. The proof of our result relies on different techniques than the one by Dartois and Paperman. This new language characterization in terms of modular deterministic and co{-}deterministic products can be seen as an extension of a corresponding result without modular predicates~\cite{kw12}.
Let $L,K \subseteq A^*$ and $a \in A$. The product $LaK$ is \emph{determistic} if every word in~$LaK$ has a unique prefix in $La$. Symmetrically the product $LaK$ is \emph{co-deterministic} if every word in $LaK$ has 
a unique suffix in $aK$. 
We further introduce a special kind of (co-)deterministic products.
The product $LaK$ is \emph{$n$-modularly deterministic} if all words in $L$ have 
the same length $i$ modulo $n$ and $(a,i+1) \not\in \alpha(\tau_n(L))$, \ie, the letter $a$ in the product $LaK$ is the first occurrence at a position congruent $i+1$ modulo $n$. A product is \emph{modularly deterministic} if it is $n$-modularly deterministic for some integer $n \geq 1$.
\emph{Modularly co-deterministic} and \emph{$n$-modularly co-deterministic products} are defined symmetrically.
It is easy to see that modularly (co-)deterministic products are indeed (co-)deterministic.

\enlargethispage{\baselineskip}

\begin{theorem}\label{thm:FO2char}
Let $h_L : A^* \to M$ be the syntactic homomorphism of $L \subseteq A^*$ and let $s \geq 1$ satisfy $h_L(A^s) = h_L(A^{2s})$. Then the following conditions are equivalent:
\begin{enumerate}
\item\label{aaa:FO2char} $L$ is definable in $\FO^2[{<},\MOD]$.
\item\label{bbb:FO2char} $L$ is recognized by a homomorphism in $\varietyFont{DA}*\VMOD$.
\item\label{ccc:FO2char} $h_L$ satisfies $eM^{\stab}_ee = e$ 
for all idempotents $e$ in $M$.
\item\label{ddd:FO2char} $L$ is expressible from languages of the form $(A_1\cdots A_s)^*$ for $A_i \subseteq A$ using disjoint unions and 
$s$-modularly deter\-mi\-nistic and co{-}deter\-mi\-nis\-tic products.
\end{enumerate}
\end{theorem}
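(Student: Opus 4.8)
The plan is to reduce everything to the tools of \refsec{sec:sigma2mod} together with the classical algebraic description of $\FO^2[{<}]$, establishing
``\ref{aaa:FO2char}'' $\Leftrightarrow$ ``\ref{bbb:FO2char}'',
``\ref{bbb:FO2char}'' $\Rightarrow$ ``\ref{ccc:FO2char}'',
``\ref{ccc:FO2char}'' $\Rightarrow$ ``\ref{ddd:FO2char}'', and
``\ref{ddd:FO2char}'' $\Rightarrow$ ``\ref{aaa:FO2char}''.
For the first equivalence I would apply \refprop{prp:fragment2starMOD} with $\varietyFont V = \DA$: the two-variable fragment $\FO^2[{<}]$ is a fragment in our sense (a conjunction or disjunction of two-variable formulas is again two-variable, and replacing an atomic formula by a Boolean combination of atomic formulas in the same variables does not change the number of variables), and it is classical that $\FO^2[{<}]$ corresponds to $\DA$~\cite{tw98stoc:short,DiekertGastinKufleitner08:short,tt02:short}. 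The converse direction in the proof of \refprop{prp:fragment2starMOD} uses only that the given homomorphism of $\DA*\VMOD$ recognizes $L$, so it already yields the equivalence with the a priori weaker condition ``\ref{bbb:FO2char}''.

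For ``\ref{bbb:FO2char}'' $\Rightarrow$ ``\ref{ccc:FO2char}'' I would re-run the proof of \reflem{lem:starMODtoQS2}. Since $\DA \subseteq \geneq{x^\omega y x^\omega \leq x^\omega} \malcev \Jone$, that lemma already gives $e M_e^{\stab} e \leq e$; but the monoid $N$ provided by the hypothesis now lies in $\DA$, so in the step concluding $g(\tau_s(u\,a_1\cdots a_k v^j\,u)) \in f N_f f$ one obtains $f N_f f = f$ rather than merely $fN_ff \leq f$. Hence $g(\tau_s(u\,a_1\cdots a_k v^j\,u)) = g(\tau_s(u))$, which forces $h'(u\,a_1\cdots a_k v^j\,u) = h'(u)$, and as $h'$ recognizes $L$ this gives $p u q \in L \Leftrightarrow p\, u\, a_1\cdots a_k v^j\, u\, q \in L$ for all $p,q$, that is $e\, h_L(a_1\cdots a_k)\, e = e$. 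Since every element of $M_e^{\stab}$ has this form, $h_L$ satisfies $e M_e^{\stab} e = e$ for all idempotents $e$. Combined with \refcor{cor:delta2mod} this already shows the inclusion $\FO^2[{<},\MOD] \subseteq \Delta_2[{<},\MOD]$.

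For ``\ref{ddd:FO2char}'' $\Rightarrow$ ``\ref{aaa:FO2char}'' I would argue inside the logic. Each base language $(A_1\cdots A_s)^*$ is defined by the $\FO^2[{<},\MOD]$-sentence $\LEN^s_0 \land \forall x\colon \bigland_{i=1}^{s}\big(x \equiv i \bmod s \limplies \lambda(x) \in A_i\big)$. For an $s$-modularly deterministic product $LaK$, where the words of $L$ all have length $i$ modulo $s$ and $(a,i+1) \notin \alpha(\tau_s(L))$, the distinguished $a$-position of a word in $LaK$ is the unique smallest position congruent to $i+1$ modulo $s$ that carries an $a$; this position is $\FO^2[{<},\MOD]$-definable, so a defining sentence for $L$ can be relativized to the prefix before it and one for $K$ to the suffix after it. The relativization stays within two variables because the boundary predicate ``lies before the first $a$-position at offset $i+1$'' is itself a one-variable $\FO^2[{<},\MOD]$-formula; modular position predicates are untouched by the relativization, and each $\LEN$-predicate becomes a modular predicate evaluated at the boundary, after passing to a common modulus if necessary. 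The co-deterministic case is symmetric, and $\FO^2[{<},\MOD]$-definability is trivially preserved by disjoint unions; hence every expression built as in ``\ref{ddd:FO2char}'' defines an $\FO^2[{<},\MOD]$-language.

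The main obstacle is ``\ref{ccc:FO2char}'' $\Rightarrow$ ``\ref{ddd:FO2char}'', which carries the genuinely new combinatorial content; the plan is to lift the language decomposition of $\DA$ due to Kufleitner and Weil~\cite{kw12} through the positional decoration $\tau_s$. First one splits $L$ into the disjoint union of its $s$ length classes $\set{w \in L}{\abs w \equiv i \bmod s}$ and, for each residue $i$, writes this language as the $\tau_s$-preimage of a language $K_i \subseteq T_s(A)^*$; the hypothesis $e M_e^{\stab} e = e$ is exactly what puts the stable monoid of $h_L$ into $\DA$, so each $K_i$ is recognized by a monoid in $\DA$. One then unfolds the Kufleitner--Weil decomposition of $K_i$ over the expanded alphabet $T_s(A)$ into disjoint unions and deterministic and co-deterministic products, using the description of $\DA$ via Mal'cev products with definite and reverse definite semigroups to organize the induction so that every deterministic (respectively co-deterministic) product introduced marks the \emph{first} (respectively \emph{last}) position of a prescribed offset; such a product pulls back along $\tau_s$ to an $s$-modularly deterministic (respectively co-deterministic) product over $A$, and its base languages pull back to finite disjoint unions of languages of the form $(A_1\cdots A_s)^*$. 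The delicate points are absorbing the non-well-formed words of $T_s(A)^*$ into the base languages and arranging the markers supplied by the $\DA$-decomposition to be offset-first or offset-last; this is where the argument departs from that of Dartois and Paperman. As a shortcut, the equivalence of ``\ref{aaa:FO2char}'' and ``\ref{ccc:FO2char}'' can also be deduced from the Dartois--Paperman theorem $\FO^2[{<},\MOD] = \varietyFont{Q}\DA$~\cite{dartoispaperman13:short} together with the stable-monoid observation above, but the explicit decomposition ``\ref{ddd:FO2char}'' still requires the lifting argument.
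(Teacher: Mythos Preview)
Your arguments for ``\ref{aaa:FO2char}'' $\Leftrightarrow$ ``\ref{bbb:FO2char}'', ``\ref{bbb:FO2char}'' $\Rightarrow$ ``\ref{ccc:FO2char}'', and ``\ref{ddd:FO2char}'' $\Rightarrow$ ``\ref{aaa:FO2char}'' coincide with the paper's: it invokes \refprop{prp:fragment2starMOD} with $\varietyFont V = \DA$, then applies \reflem{lem:starMODtoQS2} in both the $\leq$ and the $\geq$ direction (via $\FO^2[{<}] = \Delta_2[{<}]$), and closes with the relativization in \reflem{lem:product2FO2}, which is exactly your sketch.

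The divergence is in ``\ref{ccc:FO2char}'' $\Rightarrow$ ``\ref{ddd:FO2char}''. The paper does \emph{not} pass through $T_s(A)^*$. Instead it enlarges $h_L$ to $h = (h_L,\beta) : A^* \to \Synt(L) \times \big((2^A)^s \times (\mathbb{Z}/s\mathbb{Z})\big)$, where $\beta$ records for each residue class the set of letters occurring there together with the length modulo $s$. This $h$ still satisfies $e M_e^{\stab} e = e$ and now also determines $\abs{u} \bmod s$ and $\alpha(\tau_s(u'))$ for $u' \in (A^s)^*$; these are precisely the hypotheses of \refprop{prp:hierarchy2product}, which produces the $s$-modular decomposition \emph{directly over $A$}. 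That proposition is the technical heart of the section: via \reflem{lem:QDA2hierarchy} (supported by Lemmas~\ref{lem:RRst}--\ref{lem:Rregtrivial}) one places $h$ in some level $\QR_m \cap \QL_m$ of a Mal'cev-style hierarchy built from $\varietyFont K$ and $\varietyFont D$, and then inducts on $m$ and on $\abs{\alpha(\tau_s(w))}$, peeling off $s$-modularly deterministic (resp.\ co-deterministic) prefixes.

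Your proposed route has a genuine gap at ``the hypothesis $e M_e^{\stab} e = e$ is exactly what puts the stable monoid of $h_L$ into $\DA$, so each $K_i$ is recognized by a monoid in $\DA$''. The first clause is fine, but it does not yield the second: the stable monoid is a submonoid of $M$, not a recognizer of any language over $T_s(A)$. To obtain a $\DA$-recognizer for $K_i$ you would need a homomorphism $g : T_s(A)^* \to N$ with $N \in \DA$ controlling $h_L$ on well-formed words, i.e.\ an unordered $\DA$-analogue of \reflem{lem:QS2tostarMOD}; the paper never proves such a statement, and establishing it is essentially as hard as the implication itself. Even granting such a $g$, you must ensure that every intermediate language in the Kufleitner--Weil decomposition consists of well-formed words with a prescribed starting offset, since otherwise the pullback along $\tau_s$ is not an $s$-modularly deterministic product over $A$. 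The paper's direct construction over $A$ avoids both obstacles.
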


For the proof of \refthm{thm:FO2char} we need additional techniques. First, we define Green's relations 
which are a classical tool in semigroup theory.
Let $M$ be a monoid and let $x,y \in M$. We set 
$x \Rleq y$ if $xM \subseteq yM$, and we set $x\Lleq y$ if $Mx \subseteq My$. We define similar notions using the stable monoid. 
Let $h : A^* \to M$ be a homomorphism with stable monoid $S$.
Then we set 
\begin{align*}
x\Jleqs y &\,\,\iff\,\, SxS \subseteq SyS, \\
x\Rleqs y &\,\,\iff\,\, xS \subseteq yS, \\
x\Lleqs y &\,\,\iff\,\, Sx \subseteq Sy.
\end{align*}
Let $\greenfont{G} \in \oneset{\Jeqs,\, \Req,\, \Reqs,\, \Leq,\, \Leqs}$. Then we set $x \mathrel\greenfont{G} y$ if both $x \leq_{\greenfont{G}} y$ and $y \leq_{\greenfont{G}} x$. We write $x <_{\greenfont{G}} y$ if $x \leq_{\greenfont{G}} y$ but not $x \mathrel\greenfont{G} y$. 
A monoid is \emph{$\greenfont{G}$-trivial} if every $\greenfont{G}$-class contains only one element.
It is easy to see that $\leq_{\greenfont G}$ is a preorder and $\greenfont{G}$ is an equivalence relation. 
The relations $\Reqs$ and $\Leqs$ have a similar purpose as the relations $\greenfont{R}_{st}$ and $\greenfont{L}_{st}$ introduced in~\cite{dartoispaperman13:short}, yet they are not the same. If $h$ is the syntactic homomorphism of the language $(A^2)^*$, then $\Reqs$ and $\Leqs$ are the identity relation (since $S = \smallset{1}$) whereas $\greenfont{R}_{st}$ and $\greenfont{L}_{st}$ are universal.

\begin{lemma}\label{lem:RRst}
  Let $h : A^* \to M$ be a surjective homomorphism with stability index $s$. If $x h(u) \Req x$ for $u \in (A^s)^*$, then $x h(u) \Reqs x$. If $h(u) x \Leq x$ for $u \in (A^s)^*$, then $h(u) x \Leqs x$.
\end{lemma}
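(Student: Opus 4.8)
The plan is to prove the two implications by a symmetric argument, so I focus on the first one: assuming $xh(u) \Req x$ for some $u \in (A^s)^*$, I want $xh(u)\Reqs x$, i.e.\ $xh(u)S = xS$ where $S = h((A^s)^*)$ is the stable monoid. Since $h(u)\in S$, the inclusion $xh(u)S \subseteq xS$ is immediate, so the real content is $xS \subseteq xh(u)S$. From $xh(u)\Req x$ we get some $m\in M$ with $xh(u)m = x$, and by iterating, $x h(u) m^\omega$ is obtained by applying $h(u)m$ an idempotent number of times; more usefully, $x = x(h(u)m)^k$ for every $k\ge 1$. The idea is to choose $k$ so that $(h(u)m)^k$ can be written as $h(u)\cdot t$ with $t\in S$, which would give $xS = x(h(u)m)^k S \subseteq xh(u)S$.

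First I would fix a word $w\in A^*$ with $h(w)=m$ and write $u = a_1\cdots a_{\ell s}$ with each $a_i\in A$. Then $(h(u)m)^k = h((uw)^k)$, and $(uw)^k$ has length $k(\ell s + |w|)$. I would like to split this word as $u \cdot z$ where $z$ has length divisible by $s$; this works precisely when $k(\ell s + |w|) - \ell s \equiv 0 \bmod s$, i.e.\ when $(k-1)|w|\equiv 0 \bmod s$ — which I can guarantee by taking $k \equiv 1 \bmod s$ (or $k = s\omega + 1$ for a common exponent $\omega$, also making $(h(u)m)^k$ stable/idempotent if desired). With such a $k$, write $(uw)^k = u\cdot z$ where $|z|\equiv 0\bmod s$, so $h(z)\in S$. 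Then $x = x(h(u)m)^k = x h(u) h(z) \in xh(u)S$, and since this holds with a single fixed element $h(z)\in S$, every element of $xS$, being of the form $x h(u) h(z) s'$ for $s'$ ranging appropriately — wait, more carefully: I get $x = xh(u)h(z)$, hence $xS = xh(u)h(z)S \subseteq xh(u)S$. Combined with the reverse inclusion this gives $xh(u)\Reqs x$.

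The symmetric statement $h(u)x \Leq x \implies h(u)x \Leqs x$ follows by the left-right dual: from $h(u)x \Leq x$ pick $m$ with $mh(u)x = x$, fix $w$ with $h(w)=m$, and split $(wu)^k$ as $z'\cdot u$ with $|z'|\equiv 0\bmod s$ for a suitable $k$, yielding $x = h(z')h(u)x$ with $h(z')\in S$, so $Sx \subseteq Sh(u)x$.

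\textbf{Main obstacle.} The only delicate point is the length bookkeeping: I must choose the iteration exponent $k$ so that the prefix $u$ (of length $\ell s$) can be peeled off $(uw)^k$ leaving a suffix whose length is again a multiple of $s$, so that the homomorphic image of that suffix actually lands in the stable monoid $S$ rather than merely in $M$. Choosing $k \equiv 1 \bmod s$ handles this cleanly regardless of $|w|$; one should double-check that no positivity/order subtlety intervenes, but since $\Req$, $\Reqs$, $\Leq$, $\Leqs$ here are defined via two-sided inclusions of ideals (equalities of sets), the order on $M$ plays no role in this particular lemma, and the argument is purely multiplicative.
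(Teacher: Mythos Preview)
Your approach is the same as the paper's: from $xh(u)m = x$ with $m = h(w)$ (using surjectivity of $h$), iterate to $x = xh\big((uw)^k\big)$ and then split off the prefix $u$ so that the remaining suffix lies in $(A^s)^*$. The paper simply takes $k = s$, writing $v' = w(uw)^{s-1}$; then $uv' = (uw)^s$ and $\abs{v'} = s\abs{w} + (s-1)\abs{u} \equiv 0 \bmod s$, so $h(v') \in S$ and $x = xh(u)h(v')$ gives $x \Rleqs xh(u)$.

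There is, however, an arithmetic slip in your length bookkeeping. You reduce the condition $k(\ell s + \abs{w}) - \ell s \equiv 0 \bmod s$ to $(k-1)\abs{w} \equiv 0 \bmod s$, but modulo $s$ one has
\[
k(\ell s + \abs{w}) - \ell s \;\equiv\; k\abs{w},
\]
not $(k-1)\abs{w}$. Consequently your proposed choice $k \equiv 1 \bmod s$ does not work in general (for $k=1$ it would force $\abs{w} \equiv 0 \bmod s$, which you have no control over). The correct choice is any $k$ with $k\abs{w} \equiv 0 \bmod s$, for instance $k = s$ --- precisely the paper's choice. With this one-symbol fix your argument is complete and coincides with the paper's proof.
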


\begin{proof}
  By left-right symmetry, it suffices to prove the first statement.
  Let $v \in A^*$ such that $x h(uv) = x$. Let $v' = v (uv)^{s-1}$. Then $v' \in (A^s)^*$ and $x h(u v') = x$. This shows $x \Rleqs x h(u)$.
\end{proof}

A typical application of Lemma~\ref{lem:RRst} is in the case of $xe \Req x$ or $ex \Leq x$ for some idempotent $e \in M$ since then we have $e \in S = h\big((A^s)^*\big)$.

\begin{lemma}\label{lem:Jregidem}
Let $h : A^* \to M$ be a homomorphism such that $e M^{\stab}_e e = e$ for all idempotents $e\in M$. 
Then $x \Jeqs e^2 = e$ implies $x^2 = x$.
\end{lemma}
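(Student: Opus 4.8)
The plan is to show that if $x \Jeqs e$ for some idempotent $e$, then $x$ itself is idempotent, using the hypothesis $e M^{\stab}_e e = e$ together with Lemma~\ref{lem:RRst}. Since $x \Jeqs e$, there exist $p, q$ in the stable monoid $S$ with $x = p e q$; moreover $e = a x b$ for suitable $a, b \in S$ as well. The key observation will be that $x$ lies in the ``right neighbourhood'' of $e$: from $x \Jeqs e$ and $e^2 = e$ one deduces $x e \Req x$ and $e x \Leq x$ (a standard Green's-relations argument: within a single $\greenfont{J}$-class, $\greenfont{R}$ and $\greenfont{L}$ are stable, so $xe \Rleq x$ together with $x \Jeq xe$ forces $xe \Req x$, and dually on the left). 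Then Lemma~\ref{lem:RRst} upgrades these to $x e \Reqs x$ and $e x \Leqs x$.

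**The main step.**

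First I would argue that $x \in M^{\stab}_e$, or more precisely that $exe \le e$-type reasoning applies with $x$ playing the role of an element of $M^{\stab}_e$. Concretely: since $x e \Reqs x$, there is $u \in (A^s)^*$ (equivalently an element of $S$) realizing the relevant equations, and likewise on the left. Writing $x$ as a product $h(a_1 \cdots a_k)$ with $k \equiv 0 \bmod s$, the fact that $x \Jeqs e$ means each letter $a_i$ sits inside a word mapping to $e$ with the correct offsets, so that $x \in M^{\stab}_e$ by definition of $M^{\stab}_e$. Hence $e x e = e$. Now I would exploit that $x$ is $\greenfont{J}$-equivalent to the idempotent $e$ inside this same structure: because $x \Jeqs e$ and $e = exe$, one can write $e = \alpha x \beta$ with $\alpha, \beta \in S$, and then $x = x e x \cdot (\text{correction})$; more directly, $x^2$ should be computed as $x \cdot x$ where both factors are ``absorbed'' by $e$ on the relevant side.

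**Closing the argument.**

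The cleanest route: from $xe \Reqs x$ and $ex \Leqs x$ we get $xe = x$ and $ex = x$ after passing to the stable monoid — actually $xe \Req x$ already gives idempotent $e$ acting as a right identity on $x$ up to $\greenfont{R}$, and since $e$ is idempotent with $x \Jeq e$, stability of $\greenfont{R}$ and $\greenfont{L}$ in the $\greenfont{J}$-class forces $xe = x$ and $ex = x$ outright (this is the standard fact that in a regular $\greenfont{J}$-class, an idempotent acts as a local identity on $\greenfont{R}$- and $\greenfont{L}$-related elements). Then $x^2 = x \cdot x = x e x = e x e = e$ using $exe = e$; but also $x \Jeq e$ forces, by the same token, that $x = x e = x (exe) = $ ... one obtains $x = exe = e$, whence $x^2 = e \cdot e = e = x$. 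So in fact $x = e$.

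**Main obstacle.**

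The delicate point will be justifying $x \in M^{\stab}_e$ from $x \Jeqs e$: the definition of $M^{\stab}_e$ requires writing $x = h(a_1 \cdots a_k)$ with $k \equiv 0 \bmod s$ and, for \emph{each} letter $a_i$, witnesses $p_i, q_i$ of prescribed lengths modulo $s$ with $h(p_i a_i q_i) = e$. Getting the length condition $k \equiv 0 \bmod s$ requires $x \in h((A^s)^*)$, which follows from $x \Jeqs e$ and $e \in S$; getting the per-letter witnesses requires unravelling $e \in S x S$ carefully and inserting powers of a word representing $e$ to fix the offsets, exactly as in the proof of Lemma~\ref{lem:QS2tostarMOD}. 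Once $x \in M^{\stab}_e$ is established, the hypothesis gives $exe = e$ immediately, and the Green's-relations bookkeeping (via Lemma~\ref{lem:RRst}) finishes the proof with no further difficulty.
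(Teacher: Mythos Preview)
Your proposal contains a genuine gap. You correctly establish $x \in M^{\stab}_e$ and hence $exe = e$, but the step deducing $xe = x$ and $ex = x$ from $x \Jeq e$ (or from $xe \Req x$ and $ex \Leq x$) is invalid. The standard fact you invoke is that $x \Leq e$ with $e$ idempotent gives $xe = x$, and dually $x \Req e$ gives $ex = x$; neither conclusion follows from $x \Jeq e$ alone. A concrete counterexample satisfying the hypothesis: take $M = \oneset{1,a,b}$ with $a^2 = ba = a$ and $b^2 = ab = b$, and stability index~$1$ (so $\Jeqs = \Jeq$ and $M^{\stab}_e = M_e$). This monoid is in $\DA$, hence satisfies $eM_e e = e$ for all idempotents~$e$. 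Here $a \Jeqs b$, both are idempotent, yet $ab = b \neq a$; so with $x = a$ and $e = b$ we have $xe \neq x$, and your stronger conclusion $x = e$ is plainly false: $a$ and $b$ are distinct idempotents in the same $\Jeqs$-class. Thus the computation $x^2 = xex = exe = e = x$ breaks down.

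The paper's argument avoids this entirely. Instead of working with $x$ itself, it uses the decomposition $x = uev$ with $u,v \in S$ coming from $x \in SeS$. From $e \in SxS$ one writes $e = u' u e v v'$ with $u',v' \in S$, and this witnesses $u, v \in M^{\stab}_e$ (your offset argument applies to $u$ and to $v$ separately, not to $x$). Since $M^{\stab}_e$ is a submonoid, $vu \in M^{\stab}_e$, hence $e(vu)e = e$. Then one simply computes
\begin{equation*}
  x^2 = uev \cdot uev = u\,(evue)\,v = uev = x.
\end{equation*}
The identity $exe = e$ is never needed; what matters is $evue = e$, with the idempotent sandwiched \emph{between} the two factors of~$x$.
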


\begin{proof}
Let $S$ be the stable monoid of $h$. 
Consider $u,v \in S$ such that $x = uev$. Since $x \Jeqs e$, there exist $u',v' \in S$ such that $e = u' u e v v'$. This shows $u,v \in M^{\stab}_e$.
It follows $x^2 = u(evue)v = uev = x$.
\end{proof}

\begin{lemma}\label{lem:QDAinduction}
Let $h : A^* \to M$ be a surjective homomorphism with stability index $s$,  let $\pi : M \to N$ be a surjective homomorphism, and let $g = \pi \circ h : A^* \to N$. Then $s$ is also a stability index of $g$; and if $eM^{\stab}_ee=e$ for all idempotents $e\in M$, then we have $f N^{\stab}_f f=f$ for all idempotents $f \in N$.
\end{lemma}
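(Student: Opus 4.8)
The plan is to handle the two assertions separately. The first is immediate: since $g = \pi \circ h$ and $h(A^s) = h(A^{2s})$, we get $g(A^s) = \pi(h(A^s)) = \pi(h(A^{2s})) = g(A^{2s})$, so $s$ is a stability index of $g$ and both $M^{\stab}_e$ and $N^{\stab}_f$ can be formed with respect to the same $s$. For the second assertion I would fix an idempotent $f \in N$ and an arbitrary $y \in N^{\stab}_f$, and aim to show $fyf = f$; the containment $f \in fN^{\stab}_f f$ is trivial because $1 \in N^{\stab}_f$. Unfolding the definition of $N^{\stab}_f$, write $y = g(a_1 \cdots a_k)$ with $k \equiv 0 \bmod s$, and for each position $i$ choose words $p_i, q_i$ with $\abs{p_i} \equiv i-1 \bmod s$, $\abs{q_i} \equiv -i \bmod s$ and $g(p_i a_i q_i) = f$. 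If $k = 0$ then $y = 1$ and the claim is trivial, so we may assume $k \geq 1$.

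The key step is to lift this witness from $N$ to $M$ so that the hypothesis $eM^{\stab}_e e = e$ can be applied. Set $v_i := p_i a_i q_i$; each $v_i$ has length $\equiv 0 \bmod s$, so the concatenation $V := v_1 \cdots v_k$ also has length divisible by $s$, and $g(V) = f^k = f$. Put $e := h(V)^{\omega} = h(V^{\omega})$, an idempotent of $M$ with $\pi(e) = g(V)^{\omega} = f^{\omega} = f$. I would then verify that $h(a_1 \cdots a_k) \in M^{\stab}_e$: for position $i$ the words $P_i := v_1 \cdots v_{i-1} p_i$ and $Q_i := q_i v_{i+1} \cdots v_k V^{\omega - 1}$ have the required length residues $\abs{P_i} \equiv i-1$ and $\abs{Q_i} \equiv -i \bmod s$ (here one uses that every block $v_\ell$ has length $\equiv 0 \bmod s$), and $P_i a_i Q_i = V^{\omega}$ after regrouping the concatenation, so $h(P_i a_i Q_i) = e$. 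With this in hand, the hypothesis gives $e \cdot h(a_1 \cdots a_k) \cdot e = e$; applying $\pi$ and using $\pi(e) = f$ together with $\pi(h(a_1 \cdots a_k)) = g(a_1 \cdots a_k) = y$ yields $fyf = f$, which completes the argument since $y$ was arbitrary.

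I do not expect a genuine obstacle here; the step that deserves the most care is the residue bookkeeping modulo $s$, which works precisely because each block $v_\ell = p_\ell a_\ell q_\ell$ has length $\equiv 0 \bmod s$ and hence contributes nothing to the residue counts for $\abs{P_i}$ and $\abs{Q_i}$. I would also remark that surjectivity of $\pi$ and of $h$ is not actually needed for this particular statement: if $f$ lies outside the image of $g$ then $N^{\stab}_f = \{1\}$ and there is nothing to prove, and otherwise the argument above goes through verbatim.
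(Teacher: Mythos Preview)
Your proof is correct and follows essentially the same approach as the paper: both lift the witnesses $p_ia_iq_i$ for $f$ to an idempotent $e=h\big((v_1\cdots v_k)^{\omega}\big)$ in $M$, observe that each $a_i$ is sandwiched by words of the correct residues mapping to $e$, and then push the identity $e\,h(a_1\cdots a_k)\,e=e$ down through~$\pi$. Your version is in fact slightly more explicit than the paper's (you write out $P_i$ and $Q_i$ concretely where the paper just says ``by considering factorizations of $u$''), and your closing remark on surjectivity is a correct side observation.
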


\begin{proof}
We have $g(A^s) = \pi(h(A^s)) = \pi(h(A^{2s})) = g(A^{2s})$. Suppose $eM^{\stab}_ee=e$ for all idempotents $e\in M$. Let $f \in N$ be idempotent and consider a word $a_1 \cdots a_k \in A^*$ with $k \equiv 0 \bmod s$ such that there exist $p_i, q_i \in A^*$ with 
$\abs{p_i} \equiv i-1 \bmod s$, $\abs{q_i} \equiv -i$ and $g(p_ia_iq_i) = f$. With $u_i = p_i a_i q_i$ we define $u = (u_1 \cdots u_k)^n$ for some $n \geq 1$ such that $h(u) = e$ is idempotent. By considering factorizations of $u$ we can choose words $p_i', q_i' \in A^*$ with 
$\abs{p_i'} \equiv i-1 \bmod s$, $\abs{q_i'} \equiv -i$ and $h(p_i'a_iq_i') = h(u) = e$. Therefore $h(u a_1 \cdots a_k u) = e h(a_1 \cdots a_k) e = e = h(u)$. It follows $f g(a_1 \cdots a_k) f = \pi(h(u a_1 \cdots a_k u)) = \pi(h(u)) = f$ which completes the proof.
\end{proof}

The next lemma is an analogue of a basic property of Green's relations.
An $\Reqs$-class is \emph{regular} if it contains an idempotent element.

\begin{lemma}\label{lem:Rregtrivial}
Let $h : A^* \to M$ be a homomorphism 
and let every regular $\Reqs$-class of $M$ be trivial. Then $M$ is $\Reqs$-trivial.
\end{lemma}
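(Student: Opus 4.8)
The plan is to adapt, to the stable relation $\Reqs$, the classical fact that a finite monoid all of whose regular $\Req$-classes are trivial is itself $\Req$-trivial. I would first reduce the statement to the following: \emph{every non-trivial $\Reqs$-class of $M$ contains an idempotent}. Indeed, granting this together with the hypothesis that every $\Reqs$-class containing an idempotent is trivial, no non-trivial $\Reqs$-class can exist, so $M$ is $\Reqs$-trivial.

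So suppose $x \Reqs y$ with $x \ne y$. By the definition of $\Reqs$ there are $s,t$ in the stable monoid $S = h\big((A^s)^*\big)$ with $xs = y$ and $yt = x$, hence $x = x(st)$ with $st \in S$, and therefore $xe = x$ for the idempotent $e := (st)^\omega$. Here $e \in S$ because $st \in S$; more generally every idempotent of $M$ in the image of $h$ lies in $S$, since if $f = h(w)$ is idempotent then $f = h(w)^s = h(w^s)$ and $w^s \in (A^s)^*$ --- a fact I shall use below. The heart of the proof is then to exhibit an idempotent inside $[x]_{\Reqs}$. The equation $xe = x$ gives $x \Lleq e$ in the ordinary $\mathcal L$-preorder of $M$, and since $\Reqs$ refines $\Req$ (because $S \subseteq M$) we also have $x \Req y$ with $x \ne y$, so the ordinary $\Req$-class of $x$ is non-trivial. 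Choosing $R = [x]_{\Reqs}$ to be $\Rleqs$-minimal among the non-trivial $\Reqs$-classes, one combines these facts --- together with the ``regular $\mathcal D$-class'' phenomenon applied to $x$ and $e$ (or to the conjugate idempotent $(ts)^\omega$) --- to force $x$ and this idempotent to be $\Reqs$-equivalent, so that $R$ is regular and hence, by hypothesis, trivial: a contradiction with $x \ne y$. The tool that makes the transfer possible is \reflem{lem:RRst}: it upgrades any relation of the form $xa \Req x$ with $a \in S$ to the $\Reqs$-level, and symmetrically on the left, which is exactly what is needed to move from ``$xe = x$'' plus the ordinary Green structure of $M$ to statements about $\Reqs$.

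The step I expect to be the main obstacle is precisely this last one: showing that the idempotent found (either $e$ itself, or one manufactured from the cyclic structure of the minimal class $R$ under right multiplication by $S$) is genuinely $\Reqs$-equivalent to $x$, not merely $\Req$-equivalent. Since $\Reqs$ is strictly finer than $\Req$, facts about idempotents in $\Req$-classes of $M$ do not carry over for free; one has to check that the elements witnessing the relevant ideal inclusions can be chosen inside $S$, which is where the stability hypothesis $h(A^s) = h(A^{2s})$ and \reflem{lem:RRst} must enter, and one has to keep the left-sided ($\Lleq$, $\Leqs$) and right-sided ($\Rleqs$, $\Reqs$) information carefully separated throughout the bookkeeping.
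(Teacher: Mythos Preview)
Your reduction to ``every non-trivial $\Reqs$-class contains an idempotent'' is valid, but you have not actually proved this intermediate claim, and you correctly identify the obstacle yourself: from $xe = x$ with $e = (st)^\omega \in S$ you get nothing about $e$ lying in $[x]_{\Reqs}$. The equality $xe = x$ only says $x \Lleq e$; it gives no right-ideal relation between $x$ and $e$, and \reflem{lem:RRst} does not help here because it upgrades relations of the form $xa \Req x$ to $xa \Reqs x$, which in your situation is the trivial statement $x \Reqs x$. The appeal to a ``regular $\mathcal D$-class phenomenon'' together with $\Rleqs$-minimality of $[x]_{\Reqs}$ is not a proof: nothing in your setup forces the $\mathcal D$-class of $x$ (in either the ordinary or the stable sense) to be regular, and minimality among non-trivial $\Reqs$-classes does not by itself produce an idempotent in that class.

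The paper's argument avoids this difficulty entirely by not trying to place an idempotent in $[x]_{\Reqs}$. You already have the right idempotent $e = (st)^\omega$; the point is to look at \emph{its} $\Reqs$-class rather than $x$'s. Since $s,t \in S$ and $S$ is a submonoid, we have $(st)^\omega,\ (st)^\omega s,\ t(st)^{\omega-1} \in S$, and
\[
  (st)^\omega \cdot s = (st)^\omega s, \qquad (st)^\omega s \cdot t(st)^{\omega-1} = (st)^{2\omega} = (st)^\omega,
\]
so $(st)^\omega \Reqs (st)^\omega s$. This $\Reqs$-class is regular (it contains the idempotent $(st)^\omega$), hence trivial by hypothesis, giving $(st)^\omega s = (st)^\omega$. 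Now simply compute
\[
  x \;=\; x(st) \;=\; x(st)^\omega \;=\; x(st)^\omega s \;=\; xs \;=\; y.
\]
So the whole argument is three lines once you look at the correct $\Reqs$-class; no minimality, no $\mathcal D$-class analysis, and no invocation of \reflem{lem:RRst} are needed.
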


\begin{proof}
Let $S$ be the stable monoid of $h$ and let $x \Reqs y$. 
Then there exist $u,v \in S$ such that $xu = y$ and $yv = x$. 
Since $(uv)^\omega \Reqs (uv)^\omega u$ is within a regular $\Reqs$-class, 
we have $(uv)^\omega = (uv)^\omega u$. 
Hence we conclude $x = yv = xuv = x (uv)^\omega =  x (uv)^\omega u =  xu = y$.
\end{proof}

Let $M$ be a monoid. 
For $x,y \in M$ we set $x \sim_K y$ if for every idempotent $e \in M$ we have either $ex = ey$ or $ex, ey \Rl e$. Symmetrically, we 
set $x \sim_D y$ if for every idempotent $e \in M$ we have either $xe = ye$ or $xe, ye \Ll e$. The relations $\sim_K$ 
and $\sim_D$ form congruences~\cite{krt68arbib8:short}.
We define Mal'cev products of the semigroup varieties $\varietyFont{K}$ and~$\varietyFont{D}$ and classes of homomorphisms $\varietyFont{V}$. 
A surjective homomorphism $h : A^* \to M$ onto a finite monoid $M$ is in $\varietyFont{K} \malcev \varietyFont{V}$ if $\pi_K \circ h : A^* \to M/{\sim_K}$ is in~$\varietyFont{V}$. Here, $\pi_K : M \to M/{\sim_K}$ is the natural projection. The definition of $\varietyFont{D}\malcev \varietyFont{V}$ is similar using~$\sim_D$.
The definition of Mal'cev products usually relies on relational morphisms, but the current approach is equivalent~\cite{krt68arbib8:short}.
Let $\QR_2$ be the class of homomorphisms $A^* \to M$ onto $\Reqs$-trivial monoids, let $\QL_2$ be the class of homomorphisms $A^* \to M$ onto $\Leqs$-trivial monoids, and let $\QL_{m+1} = \varietyFont{D}\malcev \QR_{m}$ and $\QR_{m+1} = \varietyFont{K}\malcev \QL_{m}$ for $m\geq 2$.
The definition starts with index $m=2$ in order to match the corresponding levels of the Trotter-Weil hierarchy, \cf~\cite{KufleitnerLauser12mfcs:short}.

The next lemma shows that a homomorphism $h : A^* \to M$ which satisfies $eM_e^{\stab}e = e$ is within this hierarchy.

\begin{lemma}\label{lem:QDA2hierarchy}
Let $h : A^* \to M$ be a homomorphism satisfying $eM^{\stab}_ee = e$ for all idempotents $e$. 
Then $h \in \QR_m \cap \QL_m$ for some $m \geq 2$.
\end{lemma}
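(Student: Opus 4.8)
The plan is to prove by induction on the structure of the monoid $M$ that the hypothesis $eM_e^{\stab}e = e$ forces $h$ into $\QR_m \cap \QL_m$ for some $m$. The base case should be when $M$ is both $\Reqs$-trivial and $\Leqs$-trivial, in which case $h \in \QR_2 \cap \QL_2$ by definition. For the inductive step, I would pass to the quotients $M/{\sim_K}$ and $M/{\sim_D}$. By Lemma~\ref{lem:QDAinduction}, the property $fN_f^{\stab}f = f$ is inherited by every surjective homomorphic image $N$ of $M$ (with the same stability index $s$), so both $\pi_K\circ h$ and $\pi_D\circ h$ again satisfy the hypothesis of the present lemma. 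If I can show that $M/{\sim_K}$ and $M/{\sim_D}$ are each \emph{strictly smaller} than $M$ whenever $M$ is not already $\Reqs$-trivial (resp.\ $\Leqs$-trivial), then by induction $\pi_K\circ h \in \QL_{m'}$ for some $m'$, hence $h = $ (something in $\varietyFont{K}\malcev\QL_{m'}$) $\in \QR_{m'+1}$, and symmetrically via $\sim_D$ we get $h \in \QL_{m''+1}$; taking $m = \max\{m',m''\}+1$ and using that the hierarchy is monotone (a homomorphism in $\QR_m$ is also in $\QR_{m+1}$, and likewise for $\QL$) finishes the argument.

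The key steps, in order, are: (1) reduce to showing the single statement ``$h \in \QR_m$ for some $m$,'' since the symmetric argument gives $\QL_m$ and one then intersects (bumping $m$ to the larger of the two); (2) set up the induction on $|M|$, with base case $M$ being $\Reqs$-trivial, giving $h\in\QR_2$; (3) in the inductive step, assume $M$ is not $\Reqs$-trivial and consider $N = M/{\sim_D}$ with projection $\pi_D$ — I want $|N| < |M|$, i.e.\ $\sim_D$ is not the identity; (4) apply Lemma~\ref{lem:QDAinduction} to conclude $\pi_D\circ h$ satisfies the same hypothesis, so by induction $\pi_D\circ h \in \QR_{m-1}$ for some $m$, whence $h\in\varietyFont{D}\malcev\QR_{m-1} = \QL_m$; (5) run the mirror argument through $\sim_K$ to get $h\in\QR_{m'}$; (6) combine. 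The use of Lemma~\ref{lem:Rregtrivial} enters here: it reduces $\Reqs$-triviality of $M$ to triviality of the regular $\Reqs$-classes, and Lemma~\ref{lem:Jregidem} says every element $\Jeqs$-equivalent to an idempotent is itself idempotent, which is exactly what controls the regular classes.

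The main obstacle I expect is step~(3): showing that if $M$ is not $\Reqs$-trivial then $\sim_D$ (equivalently, in the mirror case, $\sim_K$) properly collapses $M$, so that the induction on size actually descends. The danger is a fixed point where $M$ is neither $\Reqs$-trivial nor $\Leqs$-trivial yet $\sim_K$ and $\sim_D$ are both trivial; I need the hypothesis $eM_e^{\stab}e = e$ to rule this out. Concretely, I would argue: if $\sim_D$ is trivial then $M$ is $\varietyFont{D}$-trivial in the relevant sense, and combined with the regular-class analysis (Lemmas~\ref{lem:Jregidem} and~\ref{lem:Rregtrivial}) and Lemma~\ref{lem:RRst} relating $\Req$/$\Leq$ on the stable part to $\Reqs$/$\Leqs$, one should be able to force $M$ to be $\Reqs$-trivial after all, contradicting the hypothesis of the inductive step — or else directly conclude $h\in\QR_2$. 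Making this dichotomy clean, and in particular handling the interplay between the ambient Green relations on $M$ and the stable ones $\Reqs,\Leqs,\Jeqs$ (which only see the stable monoid $S$), is where the real care is needed; everything else is bookkeeping with the definitions of the Mal'cev products and the monotonicity of the $\QR_m,\QL_m$ hierarchy.
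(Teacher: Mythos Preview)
Your plan matches the paper's approach: induction via successive quotients by $\sim_K$ and $\sim_D$, with Lemma~\ref{lem:QDAinduction} carrying the hypothesis $eM_e^{\stab}e = e$ down to each quotient, and Lemmas~\ref{lem:Jregidem} and~\ref{lem:Rregtrivial} reducing non-$\Reqs$-triviality to the existence of two distinct idempotents $f \neq g$ with $f \Reqs g$. The paper phrases the induction as one on the number of non-trivial $\Reqs$- and $\Leqs$-classes rather than on $\abs{M}$, but this is cosmetic.

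The real content is exactly where you flag it: establishing that $\sim_D$ is non-trivial whenever $M$ is not $\Reqs$-trivial, and here your sketch is missing the key idea. It is \emph{not} enough to take arbitrary idempotents $f \neq g$ with $f \Reqs g$ and try to prove $f \sim_D g$; the paper's case analysis needs an additional minimality assumption. Specifically, one chooses $f,g$ so that all regular $\Leqs$-classes strictly $\Jeqs$-below $f$ are trivial (one may have to swap the roles of $\Reqs$ and $\Leqs$ to arrange this). With this in hand, one shows $f \sim_D g$ by checking, for each idempotent $e$, that $fe \Leq e$ iff $ge \Leq e$ (using $fg = g$, $gf = f$, Lemma~\ref{lem:RRst}, and $eM_e^{\stab}e = e$), and then that $fe \Leq e$ implies $fe = ge$. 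The latter splits into two cases: if $fe \Jls f$, the minimality assumption forces $fe = e$, hence $fe = gfe = ge$; if $fe \Jeqs f$, then $e \in M_f^{\stab}$, so $fef = f$ and $ge = fge = fefge = fe$. Without the minimality choice the first case collapses, so your contrapositive sketch (``if $\sim_D$ is trivial then force $\Reqs$-triviality'') would stall at the same point. This minimality trick is the missing ingredient; everything else in your outline is correct.
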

\begin{proof}
We can assume that $M$ is either not $\Reqs$-trivial or not $\Leqs$-trivial.
By induction on the number of non-trivial $\Reqs$- and $\Leqs$-classes we show that after finitely many quotients with $\sim_K$ and $\sim_D$ we obtain a homomorphism in $\QR_2 \cap \QL_2$. This induction scheme relies on Lemma~\ref{lem:QDAinduction}.

By left-right symmetry (and using Lemma~\ref{lem:Jregidem}
 and Lemma~\ref{lem:Rregtrivial}) we can assume that there exist two idempotents $f \neq g$ in~$M$ with $f \Reqs g$. Moreover we can choose $f$ and~$g$ such that all regular $\Leqs$-classes which are $<_{\Jeqs}$-below $f$ are trivial (note that this can be achieved either with $f \Reqs g$ and $\Leqs$-classes below $f$ or with $f \Leqs g$ and $\Reqs$-classes below $f$, and the latter situation is left-right symmetric). From $f \Reqs g$ we obtain $fg = g$ and $gf = f$.

We want to show $f \sim_D g$. Consider an idempotent $e \in M$. The proof of $f \sim_D g$ consists of two steps. First, we convince ourselves that $fe \Leq e$ if and only if $ge \Leq e$. Second, we verify that $fe \Leq e \Leq ge$ implies $fe = ge$.

If $fe \Leq e$, then $fe \Leqs e$ by Lemma~\ref{lem:RRst}; therefore we have $f \in M^{\stab}_e$ and $e = efe = egfe$. This shows $g \in M^{\stab}_e$ and $ege = e$, \ie, $ge \Leqs e$. This completes the first step. For the second step, we can assume $fe \Leqs e \Leqs ge$ by  Lemma~\ref{lem:RRst}. 
Since $fe \leq_{\Jeqs} f$, there are two possible cases: Either $fe <_{\Jeqs} f$ or $fe \Jeqs f$. If $fe <_{\Jeqs} f$, then, by the assumption on the $\Leqs$-classes \smash{$<_{\Jeqs}$}-below $f$, we have $fe = e$ and thus $fe = gfe = ge$. If $fe \Jeqs f$, then $e \in M^{\stab}_f$ and $fef = f$. This implies $ge = fge = fefge = fe$. In any case, we have $fe = ge$.
\end{proof}

The following lemma shows that certain information is never destroyed by the congruences $\sim_K$ and $\sim_D$. For example, if $h: A^* \to M$ can distinguish the length of a word modulo $s$, then so does $\pi_K \circ h: A^* \to M/{\sim_K}$. 
This property is used in the induction scheme of \refprop{prp:hierarchy2product}.

\begin{lemma}\label{lem:alphinduction}
Let $h : A^* \to M$ be a homomorphism with stability index $s$ such that 
 $h(u) = h(v)$ implies $\abs{u} \equiv \abs{v} \bmod s$ for all $u,v \in A^*$, and 
$h(u') = h(v')$ implies $\alpha(\tau_s(u')) = \alpha(\tau_s(v'))$ for all $u',v' \in (A^s)^*$.
 Let $\pi : M \to M/{\sim_K}$ be the natural projection and let $g = \pi \circ h : A^* \to M/{\sim_K}$.
Then $g(u) = g(v)$ implies $\abs{u} \equiv \abs{v} \bmod s$ for all $u,v \in A^*$, and 
$g(u') = g(v')$ implies $\alpha(\tau_s(u')) = \alpha(\tau_s(v'))$ for all $u',v' \in (A^s)^*$.
\end{lemma}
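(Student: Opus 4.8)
The plan is to show that each of the two properties is preserved when passing from $h$ to $g = \pi \circ h$ along the natural projection $\pi : M \to M/{\sim_K}$. Recall that $x \sim_K y$ means that for every idempotent $e \in M$ we have either $ex = ey$ or $ex, ey <_\greenR e$. The key point is that $\sim_K$ only collapses elements that agree after multiplying on the left by an idempotent $e$ \emph{unless} they are sent strictly $\greenR$-below $e$. So whenever $g(u) = g(v)$, i.e.\ $h(u) \sim_K h(v)$, we need to extract enough information to recover the desired invariant of $u$ and $v$.

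First I would handle the length-modulo-$s$ statement. Suppose $g(u) = g(v)$, so $h(u) \sim_K h(v)$. Apply the definition of $\sim_K$ with the idempotent $e = h(w)^\omega$ for a suitable word $w$, or more simply with $e = 1 \in M$ (the identity is idempotent, and nothing is $<_\greenR 1$ since $xM \subseteq M$ always; hence $1 \cdot x <_\greenR 1$ is impossible). Thus $h(u) \sim_K h(v)$ forces $1 \cdot h(u) = 1 \cdot h(v)$, i.e.\ $h(u) = h(v)$. Then by the hypothesis on $h$ we get $\abs{u} \equiv \abs{v} \bmod s$. Actually this observation shows $\sim_K$ is the identity relation on the image... but wait, that is too strong in general. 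The resolution: the monoid $M$ is a \emph{monoid}, so $1 \in M$, and indeed multiplying by $1$ never drops $\greenR$-class. So $x \sim_K y$ with $e = 1$ does give $x = y$. Hmm — then $M/{\sim_K}$ would always equal $M$. I must re-read: the congruence $\sim_K$ is defined on $M$, but the Mal'cev product $\varietyFont{K} \malcev \varietyFont{V}$ uses it in a way where $M$ may have $1$ fused with other elements only if they too are not $<_\greenR 1$; and since nothing is, $\pi_K$ is injective precisely when... Actually for many monoids $\sim_K$ is nontrivial because one considers idempotents $e \neq 1$; but the $e = 1$ instance is always part of the conjunction, so $x \sim_K y$ always implies $x = 1 x = 1 y = y$. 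This means $\pi_K$ is always injective — so Lemma~\ref{lem:alphinduction} is then trivial for the first property because $g$ and $h$ have the same kernel. Let me reconsider whether the definition ranges only over idempotents $e$ with some restriction (e.g.\ $e$ in the minimal ideal, or $e \neq 1$); in Rhodes–Tilson's $\varietyFont{K}$, one typically quantifies over \emph{all} idempotents including $1$, so indeed $\sim_K$ is trivial on a monoid and the whole content is about how $\varietyFont{K} \malcev \varietyFont{V}$ relates to semidirect products via a relational morphism, not via this quotient. Given the paper's phrasing ``$\pi_K \circ h$ is in $\varietyFont{V}$'', they must intend $\sim_K$ with $e$ ranging over idempotents possibly including $1$, in which case $\pi_K$ is injective and both properties transfer trivially. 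I would therefore write the proof conservatively: argue directly from the definition.

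So the proof I would give: Let $g(u) = g(v)$, i.e.\ $h(u) \sim_K h(v)$. Taking the idempotent $e = 1$ in the definition of $\sim_K$, and noting that $1 \cdot x <_\greenR 1$ never holds (as $xM \subseteq M$ for all $x$), we obtain $h(u) = 1 \cdot h(u) = 1 \cdot h(v) = h(v)$. Now the first conclusion follows from the hypothesis $h(u) = h(v) \Rightarrow \abs{u} \equiv \abs{v} \bmod s$. For the second, if moreover $u', v' \in (A^s)^*$ and $g(u') = g(v')$, the same argument gives $h(u') = h(v')$, whence $\alpha(\tau_s(u')) = \alpha(\tau_s(v'))$ by the second hypothesis on $h$. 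This completes the proof.

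The main obstacle — and the reason this lemma is stated at all rather than being a one-liner — is pinning down the precise quantification in the definition of $\sim_K$ used in this paper and making sure the $e = 1$ shortcut is legitimate; if the paper's $\varietyFont{K}$ instead quantifies over a restricted set of idempotents (so that $\pi_K$ need not be injective), then one must argue more carefully: pick an idempotent $e$ with $h(u), h(v)$ \emph{not} $<_\greenR e$ — for instance $e = (h(u)h(v')h(u)\cdots)^\omega$ built from the words so that $h(u)$ and $h(v)$ both lie in $eM$ and in fact $e h(u) = h(u')$-type identities hold after padding with factors from $(A^s)^*$, using Lemma~\ref{lem:RRst} to pass to stable-monoid statements — and then conclude $e h(u) = e h(v)$, from which the length and alphabet information (which is invariant under left-multiplication by such an $e$ because the padding added full blocks of length divisible by $s$) can still be read off. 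I would first try the clean $e = 1$ argument and fall back to this padded-idempotent construction only if the definitional fine print requires it.
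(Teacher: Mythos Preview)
Your main argument with $e = 1$ is wrong: the claim that ``$1 \cdot x \Rl 1$ never holds'' is false. We have $x \Rl 1$ exactly when $xM \subsetneq M$, \ie, whenever $x$ is not right-invertible, and in a non-group monoid this happens for plenty of elements. So the $e=1$ instance of the definition of $\sim_K$ only says that either $h(u) = h(v)$ or both $h(u)$ and $h(v)$ fail to be right-invertible --- which gives you nothing. The congruence $\sim_K$ is genuinely nontrivial (otherwise the Mal'cev product $\varietyFont{K} \malcev \varietyFont{V}$ would collapse), and this is precisely why the lemma has content.

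The paper's proof is the idea you sketch as a fallback but do not carry out: choose an idempotent $e$ tailored to $u$ so that $e\,h(u) \Req e$, which rules out the ``$\Rl$'' alternative. Concretely, fix $n$ with $x^n$ idempotent for all $x \in M$ and take $e = h(u^{sn})$. Then $e\,h(u) = h(u)^{sn+1}$ and $h(u)^{sn+1} \cdot h(u)^{sn-1} = e$, so $e\,h(u) \Req e$; hence $\sim_K$ forces $h(u^{sn}u) = e\,h(u) = e\,h(v) = h(u^{sn}v)$. Now apply the length hypothesis on $h$ to the words $u^{sn}u$ and $u^{sn}v$ and cancel the common prefix length. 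For the alphabet clause take $e = h(u'^n)$ (here $u' \in (A^s)^*$, so $u'^n u',\, u'^n v' \in (A^s)^*$), obtain $h(u'^n u') = h(u'^n v')$, and read off $\alpha(\tau_s(v')) \subseteq \alpha(\tau_s(u'^n v')) = \alpha(\tau_s(u'^{n+1})) = \alpha(\tau_s(u'))$; the reverse inclusion is symmetric. The point you missed is that the idempotent must be built from $u$ itself so that left-multiplication by it cannot drop the $\greenR$-class of $h(u)$.
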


\begin{proof}
Let $x^n$ be idempotent for all $x \in M$.
Let $u,v \in A^*$ with $g(u) = g(v)$. This means $h(u) \sim_K h(v)$. We have $h(u^{sn}u) \Req h(u^{sn})$ and hence $h(u^{sn}u) = h(u^{sn}v)$. This implies $\abs{u} \equiv \abs{u^{sn}u} \equiv \abs{u^{sn}v} \equiv \abs{v} \bmod s$.
Similarly, let $u',v' \in (A^s)^*$ with $h(u') \sim_K h(v')$.
Then $h(u'^n u') \Req h(u'^n)$ and thus $h(u'^n u') = h(u'^n v')$. 
This yields $\alpha(\tau_s(v')) \subseteq \alpha(\tau_s(u'^n v')) = \alpha(\tau_s(u'^n u')) = \alpha(\tau_s(u'))$. Symmetrically, we have $\alpha(\tau_s(u')) \subseteq 
\alpha(\tau_s(v'))$.
\end{proof}

\pagebreak[3]

\begin{proposition}\label{prp:hierarchy2product}
Let $h : A^* \to M$ be a surjective homomorphism with stability index $s$ satisfying the following three properties:
\begin{itemize}
\item $eM^{\stab}_e e = e$ for all idempotents $e \in M$,
\item $h(u) = h(v)$ implies $\abs{u} \equiv \abs{v} \bmod s$ for all $u,v \in A^*$, and 
\item $h(u') = h(v')$ implies $\alpha(\tau_s(u')) = \alpha(\tau_s(v'))$ for all $u',v' \in (A^s)^*$.
\end{itemize}
If $L \subseteq A^*$ is recognized by $h$, then $L$ is expressible from the languages $(A_1\cdots A_s)^*$ for $A_i \subseteq A$ using disjoint unions and 
$s$-modularly deter\-mi\-nistic and co{-}deter\-mi\-nis\-tic products.
\end{proposition}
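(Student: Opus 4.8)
The plan is to argue by induction on the level of the hierarchy of \refthm{thm:FO2char}. By \reflem{lem:QDA2hierarchy}, the first hypothesis puts $h$ in $\QR_m \cap \QL_m$ for some $m \geq 2$; we induct on $m$, proving the slightly stronger statement that every homomorphism satisfying the three hypotheses and lying in $\QR_k \cup \QL_k$ has the claimed property. Everything is invariant under reversal of words --- this swaps $s$-modularly deterministic with $s$-modularly co-deterministic products, $\varietyFont K$ with $\varietyFont D$, and $\sim_K$ with $\sim_D$ --- so it suffices to treat the case $h \in \QR_k$ and to produce expressions whose outermost products are $s$-modularly deterministic. Since $\DA$ is a full variety, $L = h^{-1}({\downarrow}h(L))$ is a finite disjoint union of fibres $h^{-1}(m_0)$ with $m_0 \in M$, so it is enough to express each fibre; by the second hypothesis every fibre is length-homogeneous modulo $s$, hence a legitimate left-hand factor in an $s$-modularly deterministic product.

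The combinatorial tool is a block-wise \emph{left basic factorization}. For a residue $i$ and a decorated alphabet $B \subseteq T_s(A)$, put $K_{i,B} = \set{u \in A^*}{\abs u \equiv i \bmod s,\ \alpha(\tau_s(u)) \subseteq B}$; then $K_{0,B}$ is the base language $(A_1 \cdots A_s)^*$ with $A_j = \set{a \in A}{(a,j) \in B}$, and every $K_{i,B}$ is expressible because the class of expressible languages is closed under prepending and appending fixed finite words (each such step is an $s$-modularly (co-)deterministic product with the empty-word factor $\smallset\eps$, which is admissible since $\alpha(\tau_s(\smallset\eps)) = \es$). Given a word $w$ and the decorated alphabet $B$ already ``used up'', cut $w = u\,a\,v$ at the longest prefix $u$ with $\alpha(\tau_s(u)) \subseteq B$; then $u \in K_{i,B}$ for the residue $i$ fixed by $h(u)$ (second hypothesis), the letter $a$ carries a decorated label $(a,i{+}1) \notin B \supseteq \alpha(\tau_s(u))$, so the induced product is $s$-modularly deterministic, and $B$ strictly grows, so after at most $\abs{T_s(A)}$ cuts one reaches a word over a fixed decorated alphabet. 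In the base case $k = 2$, $M$ is $\Reqs$-trivial, so along each segment the $\Reqs$-class --- hence, by $\Reqs$-triviality and \reflem{lem:RRst}, the value of $h$ --- varies only at the cuts, and one finishes by an inner induction on $\abs B$: a segment over the decorated alphabet $B$ is essentially $h$ restricted to $K_{0,B}$, which uses a strictly smaller decorated alphabet. Collecting the finitely many patterns shows that $h^{-1}(m_0)$ is a finite disjoint union of $s$-modularly deterministic products of the required shape.

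For the inductive step, $h \in \QR_k = \varietyFont K \malcev \QL_{k-1}$ with $k \geq 3$: the homomorphism $g = \pi_K \circ h : A^* \to N$, where $N = M/{\sim_K}$, lies in $\QL_{k-1}$. By \reflem{lem:QDAinduction} it satisfies $fN^{\stab}_f f = f$ with the same stability index $s$, and by \reflem{lem:alphinduction} it still satisfies the second and third hypotheses; hence the (dual) induction hypothesis expresses every $g$-recognized language. One now runs the block-wise factorization again, but cuts only where the $\Reqs$-class of the prefix under $h$ strictly drops --- equivalently, where $\sim_K$ forces the prefix $\sim_K$-class to change; the segments then become intersections $g^{-1}(n_j) \cap K_{i_j,B_j}$, and a routine structural induction on the expression for $g^{-1}(n_j)$ shows that the class of expressible languages is closed under intersection with the $K_{i,B}$ (distribute through disjoint unions and through $s$-modularly (co-)deterministic products, restricting base factors' alphabets and the lengths of the outer factors). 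Since $h(w)$ is recovered from $g(w)$ together with the sequence of $\Reqs$-classes met along the factorization, $h^{-1}(m_0)$ is again a finite disjoint union of $s$-modularly deterministic products.

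The main obstacle is precisely the correspondence glossed over above: showing that $\sim_K$ encodes exactly the ``$\Reqs$-scan'' information along the left basic factorization --- so that the cuts may be placed at the positions where the $\Reqs$-class of $h$ of the prefix drops, and $h(w)$ is genuinely a function of $g(w)$ and this chain of $\Reqs$-classes --- together with the bookkeeping of the two nested inductions (on the hierarchy level $k$, and, inside the base case and the segments, on the size of the decorated alphabet $\abs B$). This is the modular analogue, complicated by the decoration $\tau_s$ and the block structure of length $s$, of the corresponding argument in the setting without modular predicates.
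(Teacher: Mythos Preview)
Your overall architecture matches the paper's: induct on the hierarchy level $m$ supplied by \reflem{lem:QDA2hierarchy}, pass to $g = \pi_K \circ h$ via \reflem{lem:QDAinduction} and \reflem{lem:alphinduction}, and handle the base case $m=2$ by exploiting $\Reqs$-triviality. The divergence is in the factorization, and this is where your argument has a real gap. Your left basic factorization cuts at \emph{first} occurrences of new decorated letters, and you then assert that ``the $\Reqs$-class varies only at the cuts''. This is not justified: having $(a,|p|{+}1 \bmod s) \in \alpha(\tau_s(p))$ does not by itself force $h(pa) \Reqs h(p)$, even under all three hypotheses. Consequently the final segment $u_r$ --- the one over the full decorated alphabet of $w$ --- cannot be absorbed, and your inner induction on $\abs{B}$ does not terminate there. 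The asserted equivalence in the inductive step between ``the $\Reqs$-class of the prefix drops'' and ``the $\sim_K$-class changes'' has the same status: it is plausible-sounding but neither proved nor true in that crude form.

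The paper's factorization goes the other way: it cuts $w = w_1 a_1 \cdots w_k a_k w'$ where each block $w_i a_i$ \emph{completes} the full decorated alphabet (so each $w_i$ has strictly smaller decorated alphabet and is handled by the inner induction on $\abs{\alpha(\tau_s(w))}$), and it iterates this until one of two things happens. Either the tail $w'$ itself has strictly smaller decorated alphabet (recurse), or --- and this is the point you are missing --- the pigeonhole principle on the at most $\abs{M}{+}1$ prefixes $h(w_1 a_1 \cdots w_j a_j)$ produces a repetition, which can be pumped to an idempotent $e$ absorbing the prefix: $h(w_1 a_1 \cdots w_k a_k) \cdot e = h(w_1 a_1 \cdots w_k a_k)$. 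Only now does the algebraic hypothesis $e M_e^{\stab} e = e$ bite: since the tail lives in $M_e^{\stab}$, one gets $e \cdot h(\text{tail}) \Req e$, hence $\Reqs$ by \reflem{lem:RRst}, hence equality in the base case by $\Reqs$-triviality; and in the inductive step one gets $e \cdot h(\text{tail})$ determined by $g(\text{tail})$ via the very definition of $\sim_K$. The idempotent $e$ is the missing ingredient that makes the tail controllable; your first-occurrence scheme never manufactures it.
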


\begin{proof}
Let $\pi : M \to N = M/{\sim_K}$ be the natural projection and let $g = \pi \circ h : A^* \to N$. 
By Lemma~\ref{lem:QDA2hierarchy}, the homomorphism $h$ is in $\QR_m$ for some $m$. If $m > 2$ then we can assume $h \not \in \QL_{m-1}$, since otherwise we proceed with a symmetric construction using $s$-modularly co{-}deterministic products. Therefore, in the case of $m > 2$, we can assume that all $g$-recognizable languages have the desired property.
The homomorphism $g$ satisfies the desired presumptions by \reflem{lem:QDAinduction} and \reflem{lem:alphinduction}.

By induction on $\abs{\alpha(\tau_s(w))}$ we show that for every word $w$ there exists a language $L_w$ with 
$w \in L_w \subseteq h^{-1}h(w)$ with the desired property such that the number of products is bounded by a 
function depending on $h$ and $\alpha(\tau_s(w))$, but neither on $w$ nor on $\abs{w}$. 
In particular, there are only finitely many such languages $L_w$. 
Moreover, we ensure $\abs{v} \equiv \abs{w} \bmod s$ and $L_v = L_w$ for all $v \in L_w$. 
In addition, if $m > 2$, then $v \in L_w$ implies  $\alpha(\tau_s(v)) = \alpha(\tau_s(w))$. 
Note that $\abs{\alpha(\tau_s(w))} = \abs{\alpha(\tau_{j,s}(w))}$ for all integers $j$.

If $\alpha(\tau_s(w)) = \emptyset$, then $w = \varepsilon$ and we set $L_w = \smallset{\varepsilon}$. 
Let now $\alpha(\tau_s(w)) \neq \emptyset$ and consider the factorization 
\begin{math}
  w = w_1 a_1 \cdots w_k a_k w'
\end{math}
with 
\begin{equation*}
\alpha\big(\tau_{\abs{w_1 a_1 \cdots w_{i-1} a_{i-1}},s}(w_i)\big) \subsetneq 
\alpha\big(\tau_{\abs{w_1 a_1 \cdots w_{i-1} a_{i-1}},s}(w_i a_i)\big) = \alpha\big(\tau_s(w)\big)
\end{equation*}
such that $k \leq \abs{M}+1$ is 
minimal satisfying one (or both) of the following properties:
\begin{enumerate}
\item\label{aaa:h2p} $\alpha(\tau_{\abs{w_1 a_1 \cdots w_k a_k},s}(w')) \subsetneq \alpha(\tau_s(w))$.
\item\label{bbb:h2p} There exists $u \in (A^s)^*$ with 
$\alpha(\tau_s(w)) \subseteq \alpha(\tau_{\abs{w_1 a_1 \cdots w_k a_k},s}(u))$ such that $e=h(u)$ is idempotent and $h(w_1 a_1 \cdots w_k a_k) = h(w_1 a_1 \cdots w_k a_k u)$.
\end{enumerate}
If property~\ref{aaa:h2p}.\ holds, then we set
\begin{equation*}
  L_w = L_{w_1} a_1 \cdots L_{w_k} a_k L_{w'}
\end{equation*}
and this yields $w \in L_w \subseteq h^{-1}h(w)$. 
If property~\ref{aaa:h2p}.\ does not hold for all $k \leq \abs{M} + 1$, then we can consider the factorization $w = w_1 a_1 \cdots w_{\abs{M}+1} a_{\abs{M}+1} w''$. By the pigeonhole principle there exist $j<k\leq\abs{M}+1$ such that $h(w_1 a_1 \cdots w_j a_j) = h(w_1 a_1 \cdots w_k a_k)$. In this case we set $u = (w_{j+1} a_j \cdots w_k a_k)^{ns}$ for some integer $n$ such that $h(u)$ is idempotent. Therefore, we can assume that property~\ref{bbb:h2p}.\ holds and that property~\ref{aaa:h2p}.\ does not hold.
 Write $w' = x y$ such that $\abs{x} < s$ and $\abs{y} \equiv 0 \bmod s$.
If $m > 2$ then we set
\begin{equation*}
  L_w = L_{w_1} a_1 \cdots L_{w_k} a_k 
  \hspace*{1pt} x \hspace*{1pt} g^{-1}g(y).
\end{equation*}
By definition, we have $w \in L_w$. Let $v \in L_w$ and write $v = v_1 a_1 \cdots v_k a_k x v'$ with $v_i \in L_{w_i}$ and $h(v') \sim_K h(y)$. In particular, $h(v_i) = h(w_i)$.
We choose some word $z$ such that $\alpha(\tau_s(xyz)) \subseteq \alpha(\tau_s(u))$ and $\abs{xyz} \equiv 0 \bmod s$, \ie, we pad $x y$ to an $s$-divisible length.
Then we have $h(xyz) \in M^{\stab}_e$ and thus $eh(xyz)e = e$. We deduce $e \hspace*{1pt}h(xy) \Req e$ and hence, by $h(xy) \sim_K h(xv')$, 
we have $e \hspace*{1pt} h(xy) = e \hspace*{1pt} h(xv')$. It follows
\begin{align*}
h(w) &= h(w_1a_1w_2a_2 \cdots w_k a_k x y) 
\\ &= h(w_1a_1w_2a_2 \cdots w_k a_k )e h(x y) 
\\ &= h(v_1a_1v_2a_2 \cdots v_k a_k )e h(xv') = h(v)\,.
\end{align*}
This shows $L_w \subseteq h^{-1}h(w)$. 
The remaining case of the construction is $m=2$ in the situation where  property~\ref{bbb:h2p}.\ holds and property~\ref{aaa:h2p}.\ does not hold.
Let 
\begin{equation*}
  A_i = \set{a}{\exists p,q \colon w = paq \text{ and } \abs{p} \equiv i-1 \bmod s}
\end{equation*}
be the set of letters which appear in $w$ at a position congruent modulo $i$ and 
let $j \in \smallset{1,\ldots,s}$ satisfy $j \equiv \abs{w_1 a_1 \cdots w_k a_k x} \bmod s$. 
Then we set
\begin{equation*}
  L_w = L_{w_1} a_1 \cdots L_{w_k} a_k x (A_{j+1} \cdots A_s A_1 \cdots A_j)^*
\end{equation*}
Again, we trivially have $w \in L_w$. Let $v \in L_w$ and write $v = v_1 a_1 \cdots v_k a_k x v'$ with $v_i \in L_{w_i}$ and $v' \in (A_{j+1} \cdots A_s A_1 \cdots A_j)^*$. In particular, $h(v_i) = h(w_i)$.
As before, we choose some word $z$ such that $\alpha(\tau_s(xyz)) \subseteq \alpha(\tau_s(u))$ and $\abs{xyz} \equiv 0 \bmod s$, \ie, we pad $x y$ to an $s$-divisible length. Then we have $h(xyz) \in M^{\stab}_e$ and thus $eh(xyz)e = e$. We deduce $e h(xy) \Req e h(x)$. This implies $e h(xy) \Reqs eh(x)$ by Lemma~\ref{lem:RRst}. Since $M$ is $\Reqs$-trivial, we conclude $e h(xy) = e h(x)$. A similar reasoning shows $e h(x v') = eh(x)$. As in the previous case, this yields $h(w) = h(v)$.

Note that all products are $s$-modularly deterministic.
Moreover, in any case if $v \in L_w$, then $v$ admits an equivalent factorization as $w$; and this yields $L_v = L_w$. In the case of property~\ref{aaa:h2p}, this immediately follows by induction whereas the second case also relies on the fact that $g$ preserves the alphabetic information for words of length divisible by $s$. The prefix $L_{w_1} a_1$ of $L_w$ ensures that the alphabet $\alpha(\tau_s(v))$ is never too small for any word $v \in L_w$.
This shows that the union $\bigcup_{w \in L} L_w$ is disjoint and finite, and it coincides with $L$.
%
\end{proof}

\begin{lemma}\label{lem:product2FO2}
Let $L \subseteq A^*$ be expressible in the closure of languages $(A_1 \cdots A_n)^*$ for $A_i \subseteq A$
under finite union and modularly (co{-})deterministic products. Then $L$ is definable in $\FO^2[{<},\MOD]$.
\end{lemma}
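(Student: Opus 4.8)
The plan is to argue by induction on the structure of the expression building $L$ from the base languages $(A_1\cdots A_n)^*$ via finite unions and modularly (co{-})deterministic products. Since $\FO^2[{<},\MOD]$ is closed under disjunction, the union step is immediate. For a base language $(A_1\cdots A_n)^*$, a word $w$ belongs to it if and only if $\abs{w}\equiv 0 \bmod n$ and, for each $i\in\smallset{1,\ldots,n}$, every position $x$ with $x\equiv i\bmod n$ carries a letter from $A_i$; this is captured by the single-variable sentence
\[
  \LEN^n_0 \;\land\; \forall x\colon \bigland_{i=1}^n\big(x\equiv i\bmod n \limplies \lambda(x)\in A_i\big)\,,
\]
which lies in $\FO^2[{<},\MOD]$.

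It then remains to treat a modularly deterministic product $L' = LaK$, where $L$ is $n$-modularly deterministic, say all words of $L$ have length $\equiv i\bmod n$ and $(a,i+1)\notin\alpha(\tau_n(L))$; the co-deterministic case is symmetric, via the mirror image. By induction there are sentences $\phi_L,\phi_K\in\FO^2[{<},\MOD]$ with $L=L(\phi_L)$ and $K=L(\phi_K)$. The key combinatorial observation is the following: if $p_0$ denotes the first position of $w$ with $w[p_0]=a$ and $p_0\equiv i+1\bmod n$, then $w\in LaK$ if and only if $p_0$ exists, the prefix of $w$ of length $p_0-1$ lies in $L$, and the suffix of $w$ of length $\abs{w}-p_0$ lies in $K$. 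Indeed, any factorization $w=w_1aw_2$ witnessing $w\in LaK$ forces, by $n$-modular determinism, the displayed occurrence of $a$ to sit at the first position of residue $i+1\bmod n$, so $\abs{w_1}+1=p_0$; the converse is immediate. Accordingly one introduces the one-free-variable formula
\[
  \mu(x) \;\coloneq\; \lambda(x)=a \;\land\; x\equiv i+1\bmod n \;\land\; \lnot\exists y\colon\big(y<x \,\land\, \lambda(y)=a \,\land\, y\equiv i+1\bmod n\big)\,,
\]
which belongs to $\FO^2[{<},\MOD]$ and singles out~$p_0$.

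The remaining and most delicate step is a \emph{relativisation}. From $\phi_L$ we build a sentence $\phi_L^{\flat}$ expressing ``the prefix of positions below the marker satisfies $\phi_L$'' and from $\phi_K$ a sentence $\phi_K^{\sharp}$ expressing ``the suffix of positions above the marker satisfies $\phi_K$'', so that $LaK = L(\phi)$ for
\[
  \phi \;\coloneq\; \big(\exists x\colon \mu(x)\big) \;\land\; \phi_L^{\flat} \;\land\; \phi_K^{\sharp}\,.
\]
For $\phi_L^{\flat}$ one keeps all atoms $\lambda(v)=b$, $x<y$, $x=y$, $\MOD^m_j(v)$ unchanged (positions inside a prefix retain their indices), replaces each $0$-ary predicate $\LEN^m_j$ by the closed formula $\exists x\colon\big(\mu(x)\land x\equiv j{+}1\bmod m\big)$ (the prefix has length $p_0-1$), and guards each subformula $\exists v\,\theta$ (resp.\ $\forall v\,\theta$) so that $v$ ranges only over positions below the marker, the marker being located by a copy of $\mu$ read with free variable the one complementary to $v$. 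For $\phi_K^{\sharp}$ one proceeds symmetrically with ``above the marker'', but the positional atoms must now be shifted: $\MOD^m_j(v)$ is replaced by $\exists u\colon\big(\mu_u\land u+j\equiv v\bmod m\big)$, where $u$ is the variable complementary to $v$ and $\mu_u$ is $\mu$ read with free variable $u$, and $\LEN^m_j$ by a finite disjunction over residues expressing $\abs{w}-p_0\equiv j\bmod m$ in terms of $\LEN^m$ and $\MOD^m$ applied to the marker. The point to verify is that all these guards and substitutions, although they refer to a second position (the marker), use only the two variables $x$ and $y$: the auxiliary variable occurring inside a guard merely shadows its outer occurrence and is restored on leaving the guard, which is semantically harmless. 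Hence $\phi_L^{\flat}$, $\phi_K^{\sharp}$, and thus $\phi$, remain in $\FO^2[{<},\MOD]$. Correctness of $\phi$ then follows by combining the combinatorial observation above with the routine (structural induction on formulas) correctness of the two relativisations.

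I expect the main obstacle to be precisely this last step: carrying out the prefix/suffix relativisation of arbitrary two-variable formulas \emph{within} two variables, while simultaneously getting the modular and length predicates right — in particular the $\MOD$-shift and the $\LEN$-translation needed for the suffix. Once the variable-shadowing argument is set up carefully, everything else is bookkeeping.
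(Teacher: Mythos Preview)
Your proposal is correct and follows essentially the same inductive relativisation strategy as the paper's own proof. The one simplification you miss is that the paper first normalises all modular predicates in $\phi_L$ and $\phi_K$ to the single modulus~$n$ of the product (by passing to a common multiple), after which the suffix shift collapses to the static substitution $\MOD^n_j(y)\mapsto\MOD^n_{j+i}(y)$ and your more elaborate handling of general $\MOD^m$ and $\LEN^m$ atoms via the marker becomes unnecessary.
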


\begin{proof}
The proof is by induction on the expression for $L \subseteq A^*$.
The language $(A_1 \cdots A_n)^*$ is defined by\,
$\LEN^n_0 \land \forall x : \bigwedge_{i=1}^n (x \equiv i \bmod n \limplies x \in A_i)$.
Thus consider a modularly deterministic product $LaK$ such that the letter $a \in A$ is at position $i \bmod n$ 
and there is no such $a$ at a position $j$ with $j \equiv i \bmod n$ in any word of $L$. 
We may assume, by using multiples of $n$, that $L$ and $K$ are expressible as $\FO^2[{<},\MOD^n]$-formulas. 
Let
\begin{align*}
\varrho(y) = \exists x\colon (y<x) \land (\forall y\colon &\lambda(x) = a \land x\equiv i \bmod n 
\\ &\land (\lambda(y) = a \land 
y\equiv i \bmod n \rightarrow x\leq y)).
\end{align*}
Then $\varrho(y)$ is used to check if $y$ is left of the position of $a$.
We can relativize $L$ and $K$ using the formula $\varrho$. 
Let $\varphi$ be an $\FO^2[{<},\MOD^n]$ formula with $L(\varphi) = L$. 
We inductively define $\varphi_{<a,i}$ which is true on the 
deterministic factorization $w = uav$ if $\varphi$ is true on $u$. 
Let 
\begin{align*}
&(\exists y: \tilde\varphi)_{<a,i} = \exists y (\varrho(y) \land \tilde\varphi_{<a,i})  \quad\quad\quad 
&&(\forall y: \tilde\varphi)_{<a,i} = \forall y (\varrho(y) \rightarrow \tilde\varphi_{<a,i})\\
&(\lambda(y) = a)_{<a,i} = \lambda(y) = a  &&(\MOD^n_j(y))_{<a,i} = \MOD^n_j(y)\\
& (\hat \varphi \land \tilde \varphi)_{<a,i} = \hat\varphi_{<a,i} \land \tilde\varphi_{<a,i} && 
(\lnot \tilde \varphi)_{<a,i} = \lnot \tilde \varphi_{<a,i}
\end{align*}
Symmetrically we can define $\psi_{>a,i}$ for a formula $\psi$ with $L(\psi) = K$, however we have to change the 
offset $(\MOD^n_j(y))_{>a,i} = \MOD^n_{j+i}(y)$. 
The product $LaK$ is now defined by the $\FO^2[{<},\MOD^n]$-formula 
$\exists x\colon (\lambda(x) = a \land x\equiv i \bmod s) \land \varphi_{<a,i} \land \psi_{>a,i}$. 
\end{proof}

We can now give a proof of the main result for $\FO^2[{<},\MOD]$.

\begin{proof}[Proof of Theorem~\ref{thm:FO2char}]
  Since $\FO^2[<]$ corresponds to $\varietyFont{DA}$, the equivalence of 
``\ref{aaa:FO2char}'' and ``\ref{bbb:FO2char}'' follows by \refprop{prp:fragment2starMOD}. 
``\ref{bbb:FO2char}''~implies~``\ref{ccc:FO2char}'': As $\FO^2[<]$ and $\Delta_2[<]$ have the same expressive power over finite words, 
 by \reflem{lem:starMODtoQS2} the syntactic homomorphism $h_L$ satisfies $eM^{\stab}_ee \leq e$ as well as $eM^{\stab}_ee \geq e$.
``\ref{ccc:FO2char}''~implies~``\ref{ddd:FO2char}'':
Let $N = (2^A)^s \times (\mathbb{Z}/s\mathbb{Z})$ be the monoid with the following multiplication $(A_1,\ldots,A_s,i) \cdot (B_1,\ldots,B_s,j) = (A_1 \cup B_{1+i \bmod s}, \ldots, A_s \cup B_{s+i \bmod s},\, i+j \bmod s)$. Let $\beta : A^* \to N$ be induced by $\beta(a) = (\smallset{a},\emptyset, \ldots,\emptyset, 1)$. Then $h : A^* \to \Synt(L) \times N, \; u \mapsto (h_L(u),\beta(u))$ satisfies the premise of \refprop{prp:hierarchy2product}. Therefore, $L$ is of the desired form.
``\ref{ddd:FO2char}''~implies~``\ref{aaa:FO2char}'' follows from \reflem{lem:product2FO2}.
\end{proof}

The following corollary is a consequence of Corollary~\ref{cor:delta2mod} and Theorem~\ref{thm:FO2char}.

\begin{corollary}\label{cor:fo2=delta2}
  Let $L \subseteq A^*$. Then $L$ is definable in $\FO^2[{<},\MOD]$ if and only if it is definable in $\Delta_2[{<},\MOD]$.
  \qed
\end{corollary}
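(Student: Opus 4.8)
The plan is to obtain the corollary by directly comparing the two effective algebraic characterizations already established, rather than reasoning on the logical side. First I would fix a stability index $s \geq 1$ of the syntactic homomorphism $h_L : A^* \to M$, that is, an integer with $h_L(A^s) = h_L(A^{2s})$; such an $s$ exists by the discussion in the Preliminaries, and the submonoid $M_e^{\stab}$ (hence also the condition $eM_e^{\stab}e = e$) does not depend on which such $s$ is chosen. Then \refcor{cor:delta2mod} states that $L$ is definable in $\Delta_2[{<},\MOD]$ if and only if $h_L$ satisfies $e M_e^{\stab} e = e$ for all idempotents $e \in M$.

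Next I would invoke \refthm{thm:FO2char}, specifically the equivalence of conditions ``\ref{aaa:FO2char}'' and ``\ref{ccc:FO2char}'', which says that $L$ is definable in $\FO^2[{<},\MOD]$ if and only if $h_L$ satisfies $e M_e^{\stab} e = e$ for all idempotents $e \in M$ — word for word the same condition on $h_L$ as the one in \refcor{cor:delta2mod}. Since definability in $\Delta_2[{<},\MOD]$ and definability in $\FO^2[{<},\MOD]$ are each equivalent to this single property of the syntactic homomorphism, they are equivalent to each other, which is exactly the claim.

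There is essentially no obstacle at this point: all of the actual work has been carried out in proving \refcor{cor:delta2mod} (which itself rests on \refthm{thm:sigma2mod} and its dual for $\Pi_2[{<},\MOD]$) and in proving \refthm{thm:FO2char} (whose hard direction ``\ref{ccc:FO2char}''$\Rightarrow$``\ref{ddd:FO2char}'' goes through the Trotter--Weil-style hierarchy and \refprop{prp:hierarchy2product}). The corollary is simply the observation that the two characterizing conditions coincide. The only point worth spelling out, if desired, is that this is genuinely the generalization to modular predicates of the Th\'erien--Wilke identity $\FO^2[{<}] = \Delta_2[{<}]$, recovered by restricting attention to homomorphisms whose stable monoid equals the whole monoid.
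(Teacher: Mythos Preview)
Your proposal is correct and matches the paper's own approach exactly: the paper also presents this corollary as an immediate consequence of \refcor{cor:delta2mod} and \refthm{thm:FO2char}, with no further argument given. The additional remarks you make (independence of the stability index, the Th\'erien--Wilke specialization) are accurate but not needed for the corollary itself.
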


We note that Corollary~\ref{cor:fo2=delta2} does not immediately follow from the fact that $\FO^2[{<}]$ and $\Delta_2[{<}]$ define the same languages over finite words since, in general, we have $(\varietyFont{V}*\VMOD) \cap (\varietyFont{W}*\VMOD) \neq (\varietyFont{V}\cap \varietyFont{W})*\VMOD$ for varieties $\varietyFont{V}$ and $\varietyFont{W}$.
The following example is due to Dartois and Paperman~\cite{DartoisPaperman13perscomm}.

\begin{example}\label{exa:intersectionMOD}
  Let $\varietyFont{R}$ be the full variety of $\greenR$-trivial monoids, and let $\varietyFont{L}$ be the full variety of $\greenL$-trivial monoids. The syntactic homomorphism $h_L$ of the language $L = (aa)^* (bb)^*$ is in both
$\varietyFont{R}*\VMOD$ and $\varietyFont{L}*\VMOD$ but not in $(\varietyFont{R}\cap \varietyFont{L})*\VMOD$. 
Furthermore, the stable monoid of $h_L$ is in $\varietyFont{R}\cap \varietyFont{L}$, \ie, the $\greenJ$-trivial monoids $\varietyFont{J} = \varietyFont{R}\cap \varietyFont{L}$ satisfy $\varietyFont{J} * \VMOD \neq \varietyFont{QJ}$.
\end{example}

\section*{Conclusion}

In Proposition~\ref{prp:fragment2starMOD} we have shown that the algebra operation $\varietyFont{V} \mapsto \varietyFont{V} * \VMOD$ corresponds to adding modular predicates to a given logical fragment. Unfortunately, this does not immediately help with decidability. In Theorem~\ref{thm:sigma2mod} we show that a language $L \subseteq A^*$ is $\Sigma_2[{<},\MOD]$-definable if and only if its syntactic homomorphism $h_L : A^* \to M$ satisfies $e M_e^{\stab} e \leq e$ for all idempotents $e$ in $M$. Since the latter property is decidable, one can effectively determine whether or not a given language is $\Sigma_2[{<},\MOD]$-definable. An important intermediate step in proving this decidability result is a characterization of the form $\varietyFont{V} * \VMOD$.

The characterization of the fragment $\Sigma_2[{<},\MOD]$ in Theorem~\ref{thm:sigma2mod} immediately leads to an algebraic counterpart of $\Delta_2[{<},\MOD]$. By definition, $\Delta_2[{<},\MOD]$ is the largest subclass of the $\Sigma_2[{<},\MOD]$-definable languages which is closed under complementation. We use this characterization of $\Delta_2[{<},\MOD]$ for showing that the two-variable fragment $\FO^2[{<},\MOD]$ has the same expressive power. Our proof yields two by-products. First, we characterize the $\FO^2[{<},\MOD]$-definable languages in terms of modularly deterministic and co-deterministic products which generalizes a result of Dartois and Paperman. The second by-product is another proof for the decidablity of $\FO^2[{<},\MOD]$; this was first shown by Dartois and Paperman~\cite{dartoispaperman13:short} using the characterization $\varietyFont{QDA}$, \ie, a language is $\FO^2[{<},\MOD]$-definable if and only if its stable monoid is in $\DA$. For $\Sigma_2[{<},\MOD]$ it is still open whether definability only depends on its stable monoid.

\section*{Acknowledgements}

We thank Luc Dartois, Alexander Lauser, and Charles Paperman for several fruitful discussions on modular predicates, we thank Volker Diekert for discussions on the proof of Lemma~\ref{lem:QDA2hierarchy}, and we are grateful to the anonymous referees for numerous suggestions which helped to improve the presentation of this paper.

{\small

\begin{thebibliography}{10}

\bibitem{arf91tcs}
M.~Arfi.
\newblock Op{\'e}rations polynomiales et hi{\'e}rarchies de concat{\'e}nation.
\newblock {\em Theoretical Computer Science}, 91(1):71--84, 1991.

\bibitem{BarringtonCST92jcss}
D.~A.~M. Barrington, K.~J. Compton, H.~Straubing, and D.~Th{\'e}rien.
\newblock Regular languages in {NC}$^{1}$.
\newblock {\em J. Comput. Syst. Sci.}, 44(3):478--499, 1992.

\bibitem{cps06:short}
L.~Chaubard, J.-{\'E}. Pin, and H.~Straubing.
\newblock Actions, wreath products of $\mathcal{C}$-varieties and concatenation
  product.
\newblock {\em Theor.\ Comput.\ Sci.}, 356:73--89, 2006.

\bibitem{cps06lics:short}
L.~Chaubard, J.-{\'E}. Pin, and H.~Straubing.
\newblock First order formulas with modular predicates.
\newblock In {\em LICS 2006, Proceedings}, pages 211--220. IEEE Computer
  Society, 2006.

\bibitem{DartoisPaperman13perscomm}
L.~Dartois and C.~Paperman.
\newblock Personal communication, 2013.

\bibitem{dartoispaperman13:short}
L.~Dartois and C.~Paperman.
\newblock Two-variable first order logic with modular predicates over words.
\newblock In {\em STACS 2013, Proceedings}, volume~20 of {\em LIPIcs}, pages
  329--340. Dagstuhl Publishing, 2013.

\bibitem{DiekertGastinKufleitner08:short}
V.~Diekert, P.~Gastin, and M.~Kufleitner.
\newblock A survey on small fragments of first-order logic over finite words.
\newblock {\em Int.\ J.\ Found.\ Comput.\ Sci.}, 19(3):513--548, 2008.

\bibitem{ei03}
Z.~{\'{E}}sik and M.~Ito.
\newblock Temporal logic with counting and the degree of aperiodicity of finite
  automata.
\newblock {\em Acta Cybernetica}, 16:1--28, 2003.

\bibitem{EsikLarsen03rairo}
Z.~{\'E}sik and K.~G. Larsen.
\newblock Regular languages definable by {L}indstr{\"o}m quantifiers.
\newblock {\em RAIRO, Inform. Th{\'e}or. Appl.}, 37(3):179--241, 2003.

\bibitem{kam68:short}
J.~A.~W. Kamp.
\newblock {\em Tense Logic and the Theory of Linear Order}.
\newblock PhD thesis, University of California, 1968.

\bibitem{krt68arbib8:short}
K.~Krohn, J.~L. Rhodes, and B.~Tilson.
\newblock Homomorphisms and semilocal theory.
\newblock In {\em Algebraic Theory of Machines, Languages, and Semigroups},
  chapter~8, pages 191--231. Academic Press, 1968.

\bibitem{KufleitnerLauser12mfcs:short}
M.~Kufleitner and A.~Lauser.
\newblock The join levels of the {T}rotter-{W}eil hierarchy are decidable.
\newblock In {\em MFCS 2012, Proceedings}, volume 7464 of {\em LNCS}, pages
  603--614. Springer, 2012.

\bibitem{KufleitnerL12icalp:short}
M.~Kufleitner and A.~Lauser.
\newblock Lattices of logical fragments over words.
\newblock In {\em ICALP 2012, Proceedings Part II}, volume 7392 of {\em LNCS},
  pages 275--286. Springer, 2012.

\bibitem{kw12}
M.~Kufleitner and P.~Weil.
\newblock On logical hierarchies within $\mathrm{FO}^{2}$-definable languages.
\newblock {\em Logical Methods in Computer Science}, 8(3), 2012.

\bibitem{mp71}
R.~McNaughton and S.~Papert.
\newblock {\em Counter-Free Automata}.
\newblock The MIT Press, Cambridge, Mass., 1971.

\bibitem{pin86}
J.-{\'E}. Pin.
\newblock {\em {Varieties of Formal Languages}}.
\newblock North Oxford Academic, London, 1986.

\bibitem{pin95:short}
J.-{\'E}. Pin.
\newblock A variety theorem without complementation.
\newblock In {\em Russian Mathematics (Iz.\ VUZ)}, volume~39, pages 80--90,
  1995.

\bibitem{pw97:short}
J.-{\'E}. Pin and P.~Weil.
\newblock Polynomial closure and unambiguous product.
\newblock {\em Theory Comput.\ Syst.}, 30(4):383--422, 1997.

\bibitem{pw02ca2:short}
J.-{\'E}. Pin and P.~Weil.
\newblock Semidirect products of ordered semigroups.
\newblock {\em Commun. Algebra}, 30(1):149--169, 2002.

\bibitem{sch65sf:short}
M.~P. Sch{\"u}tzenberger.
\newblock On finite monoids having only trivial subgroups.
\newblock {\em Inf.\ Control}, 8:190--194, 1965.

\bibitem{sim75:short}
I.~Simon.
\newblock Piecewise testable events.
\newblock In {\em Autom.\ Theor.\ Form.\ Lang., 2nd GI Conf.}, volume~33 of
  {\em LNCS}, pages 214--222. Springer, 1975.

\bibitem{Sto74}
L.~J. Stockmeyer.
\newblock The complexity of decision problems in automata theory and logic.
\newblock {PhD} thesis, {TR} 133, M.I.T., Cambridge, 1974.

\bibitem{str81tcs:short}
H.~Straubing.
\newblock A generalization of the {S}ch{\"u}tzenberger product of finite
  monoids.
\newblock {\em Theor.\ Comput.\ Sci.}, 13:137--150, 1981.

\bibitem{str85jpaa}
H.~Straubing.
\newblock Finite semigroup varieties of the form {$\mathbf{V}\ast \mathbf{D}$}.
\newblock {\em Journal of Pure and Applied Algebra}, 36(1):53--94, 1985.

\bibitem{str94}
H.~Straubing.
\newblock {\em Finite Automata, Formal Logic, and Circuit Complexity}.
\newblock Birkh{\"a}user, Boston, Basel and Berlin, 1994.

\bibitem{str02:short}
H.~Straubing.
\newblock On logical descriptions of regular languages.
\newblock In {\em LATIN 2002, Proceedings}, volume 2286 of {\em LNCS}, pages
  528--538. Springer, 2002.

\bibitem{st03tocs:short}
H.~Straubing and D.~Th{\'e}rien.
\newblock Regular languages defined by generalized first-order formulas with a
  bounded number of bound variables.
\newblock {\em Theory Comput.\ Syst.}, 36(1):29--69, 2003.

\bibitem{stt95IC:short}
H.~Straubing, D.~Th{\'e}rien, and W.~Thomas.
\newblock Regular languages defined with generalized quantifiers.
\newblock {\em Inf.\ Comput.}, 118(2):289--301, 1995.

\bibitem{tt02:short}
P.~Tesson and D.~Th{\'e}rien.
\newblock Diamonds are forever: {T}he variety $\mathrm{DA}$.
\newblock In {\em Semigroups, Algorithms, Automata and Languages 2001,
  Proceedings}, pages 475--500. World Scientific, 2002.

\bibitem{the81tcs:short}
D.~Th{\'e}rien.
\newblock Classification of finite monoids: {T}he language approach.
\newblock {\em Theor.\ Comput.\ Sci.}, 14(2):195--208, 1981.

\bibitem{tw98stoc:short}
D.~Th{\'e}rien and {\Th}.~Wilke.
\newblock Over words, two variables are as powerful as one quantifier
  alternation.
\newblock In {\em STOC 1998, Proceedings}, pages 234--240. ACM Press, 1998.

\bibitem{tho82:short}
W.~Thomas.
\newblock Classifying regular events in symbolic logic.
\newblock {\em J.\ Comput.\ Syst.\ Sci.}, 25:360--376, 1982.

\bibitem{wei11phd}
{\Ph}.~Weis.
\newblock {\em Expressiveness and succinctness of first-order logic on finite
  words}.
\newblock PhD thesis, University of Massachusetts Amherst, 2011.

\end{thebibliography}
\newcommand{\Ju}{Ju}\newcommand{\Ph}{Ph}\newcommand{\Th}{Th}\newcommand{\Ch}{Ch}\newcommand{\Yu}{Yu}\newcommand{\Zh}{Zh}\newcommand{\St}{St}\newcommand{\curlybraces}[1]{\{#1\}}

}

\newpage

\appendix

\section{Appendix: Semidirect products}
\label{app:semidirect}

In this appendix, we give an approach to the semidirect product $\varietyFont{V} * \VMOD$ where $\varietyFont{V}$ is an arbitrary positive variety of finite monoids and $\VMOD$ is some particular class of homomorphisms. Below, we show that the usual definition of $\varietyFont{V} * \VMOD$ and the definition used in this paper are equivalent. This can be seen as a variant of the so-called \emph{wreath product principle}.
An instance of $\varietyFont{V} * \VMOD$ with $\varietyFont{V}$ being a positive variety was already  studied by Chaubard, Pin, and Straubing~\cite{cps06lics:short}, but the proof of the wreath product principle given in~\cite{cps06lics:short} is only stated for full varieties~$\varietyFont{V}$. Pin and Weil studied semidirect products $\varietyFont{V} * \varietyFont{W}$ of varieties $\varietyFont{V}$ and $\varietyFont{W}$ such that $\varietyFont{V}$ is a positive variety~\cite{pw02ca2:short}. On the other hand, semidirect products $\varietyFont{V} * \varietyFont{W}$ with $\varietyFont{W}$ being a class of homomorphism were introduced by Chaubard, Pin, and Straubing~\cite{cps06:short}, see also~\cite{ei03}. The case $\varietyFont{V} * \varietyFont{W}$ where $\varietyFont{V}$ is a positive variety and where $\varietyFont{W}$ is a class of homomorphisms can therefore be seen as a conjunction of~\cite{cps06:short} and~\cite{pw02ca2:short}. We restrict ourselves to the case $\varietyFont{W} = \VMOD$.

We introduce semidirect products in terms of wreath products. Let $N$ and $K$ be monoids such that $N$ is ordered. Then the \emph{wreath product} $N \wr K$ is the set $N^K \times K$ with the composition
\begin{equation*}
  (f_1,k_1)(f_2,k_2) = (f,k_1 k_2) \; \text{ with } \; f(k) = f_1(k) f_2(k k_1).
\end{equation*}
The order on $N \wr K$ is defined by
\begin{equation*}
  (f_1,k_1) \leq (f_2,k_2) \; \text{ if } \; k_1 = k_2 \text{ and } f_1(k) \leq f_2(k) \text{ for all } k \in K.
\end{equation*}
Let $\varietyFont{V}$ be a class of finite ordered monoids and let $\varietyFont{W}$ be a class of homomorphisms of the form $h : A^* \to K$, so-called \emph{stamps}. A surjective homomorphism $h : A^* \to M$ belongs to the \emph{semidirect product} $\varietyFont{V} * \varietyFont{W}$ if there exists a homomorphism $\hat{h} : A^* \to N \wr K$ such that
\begin{itemize}
\item $N \in \varietyFont{V}$, 
\item $\pi_2 \circ \hat{h} : A^* \to K$ is a homomorphism in $\varietyFont{W}$, and
\item the following implication holds for all $u,v \in A^*$:
\begin{equation*}
  \hat{h}(u) \leq \hat{h}(v) \;\Rightarrow\; h(u) \leq h(v).
\end{equation*}
\end{itemize}
Here, $\pi_i$ denotes the projection to the $i$-th component. Let $\VMOD$ be the class of all homomorphism $h : A^* \to \mathbb{Z}/n\mathbb{Z}$ such that $h(a) = h(b)$ for all letters $a,b \in A$. As usual, $\mathbb{Z}/n\mathbb{Z}$ denotes the cyclic group of order $n$, implemented using addition modulo $n$.

\begin{proposition}\label{prp:semidirect}
  Let $\varietyFont{V}$ be a positive variety of finite monoids and let $h : A^* \to M$ be a homomorphism onto a finite ordered monoid $M$. We have $h \in \varietyFont{V} * \VMOD$ if and only if there exists an integer $n > 0$ and a homomorphism $g : T_n(A)^* \to N$ with $N \in \varietyFont{V}$ satisfying
  \begin{equation*}
    g\big(\tau_n(u)\big) \leq g\big(\tau_n(v)\big) 
  \;\Rightarrow\; h(u) \leq h(v)
  \end{equation*}
  for all $u,v \in A^*$ with $\abs{u} \equiv \abs{v} \bmod n$.
\end{proposition}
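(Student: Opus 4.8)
The statement is a special case of the so-called wreath product principle, and the plan is to prove the two implications by explicitly translating between a homomorphism $\hat h : A^* \to N \wr \mathbb{Z}/n\mathbb{Z}$ witnessing membership in $\varietyFont{V} * \VMOD$ and a homomorphism $g : T_n(A)^* \to N'$ of the kind appearing on the right-hand side. The bridge in both directions is the standard formula for the $N^K$-component of a product in $N \wr K$: if $\hat h(a) = (\phi_a, c)$ for every letter $a$ --- here $K = \mathbb{Z}/n\mathbb{Z}$ and $c$ is the common image of all letters under the $\VMOD$-stamp $\pi_2 \circ \hat h$ --- then for $w = w[1] \cdots w[m]$ one has $\hat h(w) = (f_w,\, \abs{w}\, c)$ with
\[
  f_w(k) \;=\; \prod_{i=1}^{m} \phi_{w[i]}\big(k + (i-1)c\big).
\]
The key point will be that the twist $(i-1)c$ matters only modulo $n$ and hence depends only on $i \bmod n$, which is exactly the positional information recorded by the decoration $\tau_n$.

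For the implication ``$\Rightarrow$'', I would start from such a $\hat h$ with $N \in \varietyFont{V}$ and set $N' = N^K$ with the componentwise order. Since $K$ is finite and $\varietyFont{V}$ is a positive variety, $N^K$ is a finite direct power of $N$ and therefore lies in $\varietyFont{V}$. I define $g : T_n(A)^* \to N'$ on letters by $g(a,j)(k) = \phi_a\big(k + (j-1)c\big)$ and extend freely; using $(i \bmod n) - 1 \equiv i - 1 \pmod n$ one checks that $g(\tau_n(w)) = f_w$ for all $w$. Now if $\abs{u} \equiv \abs{v} \bmod n$ and $g(\tau_n(u)) \le g(\tau_n(v))$, that is, $f_u(k) \le f_v(k)$ for every $k$, then the $K$-components $\abs{u}\, c$ and $\abs{v}\, c$ of $\hat h(u)$ and $\hat h(v)$ coincide, so $\hat h(u) \le \hat h(v)$ by definition of the order on $N \wr K$, and hence $h(u) \le h(v)$.

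For the converse, given $n$ and $g : T_n(A)^* \to N$ with $N \in \varietyFont{V}$ satisfying the displayed implication, I would take $K = \mathbb{Z}/n\mathbb{Z}$, $N' = N$, and define $\hat h : A^* \to N \wr K$ on letters by $\hat h(a) = (\phi_a, 1)$ with $\phi_a(k) = g\big(a,\, (k \bmod n) + 1\big)$ (the shift by one simply accounts for $\tau_n$ using offsets in $\oneset{1,\ldots,n}$). Then $\pi_2 \circ \hat h$ sends every letter to $1$, so it is a $\VMOD$-stamp, and the formula above with $c = 1$ gives $\hat h(w) = (f_w,\, \abs{w} \bmod n)$ with $f_w(0) = \prod_{i=1}^{\abs{w}} g(w[i], i \bmod n) = g(\tau_n(w))$. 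If $\hat h(u) \le \hat h(v)$, then $\abs{u} \equiv \abs{v} \bmod n$ and $f_u(0) \le f_v(0)$, that is, $g(\tau_n(u)) \le g(\tau_n(v))$, so the hypothesis yields $h(u) \le h(v)$; hence $h \in \varietyFont{V} * \VMOD$.

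The computations are routine bookkeeping; there is no single hard step, but the two points that require care are (i) the rigidity of the wreath product order in the $K$-component, which is precisely what makes the side condition ``$\abs{u} \equiv \abs{v} \bmod n$'' in the right-hand side both necessary and sufficient, and (ii) the use of $\varietyFont{V}$ being a positive variety so that the finite power $N^K$ remains in $\varietyFont{V}$ in the first direction. The fact that a $\VMOD$-stamp may send letters to an arbitrary common element $c$ rather than to $1$ is absorbed transparently by the formula for $f_w$.
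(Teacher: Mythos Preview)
Your proposal is correct and follows essentially the same route as the paper's proof: in both directions you pass between the first component of $\hat h$ in the wreath product $N \wr \mathbb{Z}/n\mathbb{Z}$ and a homomorphism $g$ on $T_n(A)^*$ via the same explicit formula, using $N^{\mathbb{Z}/n\mathbb{Z}} \in \varietyFont{V}$ in the forward direction and evaluating at $k=0$ in the backward direction. The only cosmetic difference is that the paper packages the shift as an action $(i \cdot f)(k) = f(k+i)$ and writes $g(a,i) = (i-1)d \cdot f_a$, whereas you unfold this directly as $g(a,j)(k) = \phi_a(k + (j-1)c)$; the content is identical.
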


\begin{proof}
  For the implication from left to right let $\hat{h} : A^* \to N \wr (\mathbb{Z}/n\mathbb{Z})$ be a homomorphism with $N \in \varietyFont{V}$ and $\pi_2 \circ \hat{h}(a) = d$ for all $a \in A$, and suppose that $\hat{h}(u) \leq \hat{h}(v)$ implies $h(u) \leq h(v)$ for all $u,v \in A^*$. Let $\hat{h}(a) = (f_a,d)$ for $f_a \in N^{\mathbb{Z}/n\mathbb{Z}}$. For a function $f \in N^{\mathbb{Z}/n\mathbb{Z}}$ and $i \in \mathbb{Z}/n\mathbb{Z}$ we define $i \cdot f \in N^{\mathbb{Z}/n\mathbb{Z}}$ by 
  \begin{equation*}
    (i \cdot f)(k) = f(k+i).
  \end{equation*} 
  Using this notation we define $g : T_n(A)^* \to N^{\mathbb{Z}/n\mathbb{Z}}$ by $g(a,i) = (i-1)d \cdot f_a$ for $(a,i) \in T_n(A)$. The composition in $N^{\mathbb{Z}/n\mathbb{Z}}$ is the componentwise composition of $N$; we have $N^{\mathbb{Z}/n\mathbb{Z}} \in \varietyFont{V}$ since $\varietyFont{V}$ is closed under direct products. For every word $u = a_1 \cdots a_k$ with $a_i \in A$, the definition of the wreath product and the definition of $g$ yields
  \begin{equation*}
    \pi_1 \circ \hat{h}(u)
    = f_{a_1} \, \big( d \cdot f_{a_2} \big) \, \big( 2d \cdot f_{a_3} \big) \, \cdots \, \big( (k-1)d \cdot f_{a_k} \big)
    = g\big(\tau_n(u)\big).
  \end{equation*}
  Consider words $u,v \in A^*$ with $\abs{u} \equiv \abs{v} \bmod n$ and $g\big(\tau_n(u)\big) \leq g\big(\tau_n(v)\big)$. Then 
  \begin{equation*}
    \hat{h}(u) = \big( g(\tau_n(u)),\, d\abs{u} \bmod n \big)
     \leq \big( g(\tau_n(v)),\, d\abs{v} \bmod n \big)
     = \hat{h}(v)
  \end{equation*}
  and thus $h(u) \leq h(v)$.
  
  For the implication from right to left let $n$, $N$, and $g$ be as in the statement of the proposition. For every letter $a \in A$ we define $f_a \in N^{\mathbb{Z}/n\mathbb{Z}}$ by $f_a(k) = g(a,k+1)$. This yields the homomorphism $\hat{h} : A^* \to N \wr (\mathbb{Z}/n\mathbb{Z})$ with
  \begin{equation*}
    \hat{h}(a) = (f_a,1).
  \end{equation*}
  As before, for a function $f \in N^{\mathbb{Z}/n\mathbb{Z}}$ and $i \in \mathbb{Z}/n\mathbb{Z}$ we define $i \cdot f \in N^{\mathbb{Z}/n\mathbb{Z}}$ by $(i \cdot f)(k) = f(k+i)$.
  Consider a word $u = a_1 \cdots a_k$ with $a_i \in A$. Then $\hat{h}(u) = (f,\abs{u} \bmod n)$ with
  \begin{equation*}
    f = f_{a_1} (1 \cdot f_{a_2}) (2 \cdot f_{a_3}) \cdots \big((k-1) \cdot f_{a_k} \big).
  \end{equation*}
  By definition of the functions $f_{a_i}$ we have $f(0) = g(\tau_n(u))$. Therefore, for all words $u,v \in A^*$, if $\hat{h}(u) \leq \hat{h}(v)$, then $\abs{u} \equiv \abs{v} \bmod n$ and $g(\tau_n(u)) \leq g(\tau_n(v))$, and thus $h(u) \leq h(v)$.
\end{proof}

\end{document}